\keywords{MSO-transduction, MSO-definability, graph decomposisions}
\newcommand*\defd[2][\em\color{red!25!black}]{{#1#2}}
\newcommand*\strong[2][\bfseries\mathversion{bold}]{{#1#2}}
\newcommand{\arity}{\operatorname{ar}}
\newcommand*{\fullsucc}[3][\to]{\ensuremath{#2\text{$#1$}#3}}
\def\nfullsucc{\fullsucc[\not\to]}
\newcommand{\FO}{\ensuremath{\operatorname{FO}}\xspace}
\newcommand{\MSO}{\ensuremath{\operatorname{MSO}}\xspace}
\newcommand{\CMSO}[1][]{\ensuremath{\C[#1]{\!\!}\MSO}\xspace}
\newcommand{\C}[1][]{\ensuremath{\operatorname{C}_{#1}}\xspace}
\newcommand{\leaf}{\mathsf{leaf}}
\newcommand{\rootT}{\mathsf{root}}
\newcommand{\ancestor}{\mathsf{ancestor}}
\newcommand{\children}{\mathsf{children}}
\newcommand{\betweeness}{\mathsf{betweeness}}
\newcommand{\cross}{\mathsf{cross}}
\newcommand{\node}{\ensuremath{\mathsf{node}}\xspace}
\DeclareMathOperator*\symdiff{\bigtriangleup}
\newcommand\struct[1]{\ensuremath{\langle#1\rangle}}
\newcommand{\copyV}[1][]{\ensuremath{\mathsf{copy}_{#1}}\xspace}
\newcommand{\colorV}[1][]{\ensuremath{\mathsf{color}_{#1}}\xspace}
\newcommand{\parent}{\ensuremath{\mathsf{parent}}\xspace}
\newcommand{\edge}{\ensuremath{\mathsf{edge}}\xspace}
\newcommand{\medge}{\ensuremath{\textsf{m}\text-\mathsf{edge}}\xspace}
\newcommand{\tedge}{\ensuremath{\mathsf{t}\text-\mathsf{edge}}\xspace}
\newcommand{\cedge}{\ensuremath{\mathsf{c}\text-\mathsf{edge}}\xspace}
\newcommand{\redge}{\ensuremath{\mathsf{r}\text-\mathsf{edge}}\xspace}
\newcommand{\epsord}{\ensuremath{\mathsf{ord}}\xspace}
\newcommand{\SET}{\ensuremath{\mathsf{SET}}\xspace}
\newcommand{\SSET}{\ensuremath{\mathsf{SET!}}\xspace}
\newcommand{\BIPARTITION}{\ensuremath{\mathsf{BIPART}}\xspace}
\newcommand{\splitSet}{\ensuremath{\mathcal{B}_\mathsf{split}^G}}
\newcommand{\joinSet}{\ensuremath{\mathcal{B}_\mathsf{join}^G}}
\newcommand{\marker}{\ensuremath{\mathsf{marker}}\xspace}
\newcommand{\descendant}{\ensuremath{\mathsf{desc}}\xspace}
\newcommand{\child}{\ensuremath{\mathsf{child}}\xspace}
\NewDocumentCommand{\repr}{o d()}{\ensuremath{{\mathsf{repr}\IfValueT{#1}{_{#1}}\IfValueT{#2}{(#2)}}}\xspace}
\newcommand{\leafset}{\ensuremath{\mathsf{leafset}}\xspace}
\newcommand\DEGENERATE{\ensuremath{\mathsf{degenerate}}\xspace}
\newcommand\PRIME{\ensuremath{\mathsf{prime}}\xspace}
\newcommand\LINEAR{\ensuremath{\mathsf{linear}}\xspace}
\newcommand\SERIES{\ensuremath{\mathsf{series}}\xspace}
\newcommand\PARALLEL{\ensuremath{\mathsf{parallel}}\xspace}
\newcommand{\inNeighbours}{\operatorname{in}}
\newcommand{\outNeighbours}{\operatorname{out}}
\newcommand{\etc}{\emph{etc\ldots}\xspace}
\newcommand{\resp}{resp.\xspace}
\newcommand{\aka}{aka\xspace}
\newcommand{\card}[1]{\ensuremath{\left|#1\right|}}
\newcommand{\ifof}{if and only if\xspace}
\renewcommand{\implies}{\rightarrow}
\def\eg{{\em e.g.}}
\def\ie{{\em i.e.}}
\begin{document}

% If the title is longer than 55 characters, then specify a shorter running title as the optional argument to \title. The running title should be roughyl at most 55 characters:
\title[CMSO-transducing tree-like graph decompositions]{CMSO-transducing tree-like graph decompositions\rsuper*}
\titlecomment{{\lsuper*}An extended abstract of this work was presented at the 42nd International Symposium on Theoretical Aspects of Computer Science (STACS 2025)~\cite{STACSversion}.}
\thanks{Rutger Campbell: Supported by the Institute for Basic Science (IBS-R029-C1).\\
Mamadou Moustapha Kant\'{e}: Supported by the French National Research Agency (ANR-18-CE40-0025-01 and ANR-20-CE48-0002).}	%optional

% affiliations are numbered automatically with a, b, c (see below)
% use the optional argument to indicate the affiliation(s) of each author
% omit the argument if there is only one author, or only one affiliation
\author[R.~Campbell]{Rutger Campbell}[a]
\author[B.~Guillon]{Bruno Guillon}[b]
\author[M.~M.~Kant\'{e}]{Mamadou Moustapha Kant\'{e}\lmcsorcid{https://orcid.org/0000-0003-1838-7744}}[b]
\author[E.~J.~Kim]{Eun Jung Kim\lmcsorcid{https://orcid.org/0000-0002-6824-0516}}[c,d]
\author[N.~K\"{o}hler]{Noleen K\"{o}hler\lmcsorcid{https://orcid.org/0000-0002-1023-6530}}[e]

% affiliation 1 (automatically numbered a)
\address{Discrete Mathematics Group, Institute for Basic Science, Daejeon, Korea}	
\email{rutger@ibs.re.kr}  %optional

% affiliation 2 (automatically numbered b)
\address{Université Clermont Auvergne, Clermont Auvergne INP, LIMOS, CNRS, Clermont-Ferrand, France}	
\email{bruno.guillon@uca.fr, mamadou.kante@uca.fr} 

% affiliation 3 (automatically numbered c)
\address{KAIST, Daejeon, South Korea}	
\email{eunjung.kim@kaist.ac.kr}  

% affiliation 4 (automatically numbered d)
\address{CNRS, France}	  

% affiliation 5 (automatically numbered e)
\address{University of Leeds, Leeds, UK}	
\email{N.Koehler@leeds.ac.uk}  

%% etc.

%% required for running head on odd and even pages, use suitable
%% abbreviations in case of long titles and many authors:

%%%%%%%%%%%%%%%%%%%%%%%%%%%%%%%%%%%%%%%%%%%%%%%%%%%%%%%%%%%%%%%%%%%%%%%%%%%

%% the abstract has to PRECEDE the command \maketitle:
%% be sure not to issue the \maketitle command twice!

\begin{abstract}
  \noindent We give \CMSO-transductions that, given a graph $G$, output its modular decomposition, its split decomposition and its bi-join decomposition. This
  improves results by Courcelle [Logical Methods in Computer Science, 2006] who gave such transductions using order-invariant \MSO, a strictly more expressive
  logic than \CMSO.  Our methods more generally yield \CMSO[2]-transductions that output the canonical decompositions of weakly-partitive set systems and
  weakly-bipartitive systems of bipartitions.
\end{abstract}

\maketitle

\section{Introduction}
A decomposition of a graph, especially a tree-like decomposition, is a result of recursive separations of a graph
and is extremely useful for investigating combinatorial properties such as colourability, and for algorithm design. Such a decomposition also plays a fundamental role when one wants to understand the relationship between logic and a graph class.
Different notions of the complexity of a separation motivate different ways to decompose, such as tree-decomposition, branch-decomposition, rank-decomposition and carving-decomposition.
Furthermore, some important graph classes can be defined through the tree-like decomposition they admit;
cographs with cotrees and distance-hereditary graphs with split decompositions being prominent examples.

For a logic $\mathsf{L}$,
an $\mathsf{L}$-transduction 
is a non-deterministic map from a class of relational structures to a new class of relational structures using $\mathsf{L}$-formulas. Transducing
a tree-like decomposition is of particular interest.
Notably, transducing a decomposition of a graph implies that any property that is definable using a decomposition, is also definable on graphs that admit such a
decomposition. To provide an example, a transduction constructing a modular decomposition of a graph $G$ can be used to define a sentence determining that
  the number of modules in $G$ is even (see \autoref{sec:modular-dec} for definitions and \autoref{ex:numModules} for details).  
Moreover, tree-like decompositions can be often represented by labelled trees, for which the equivalence of recognisability by a tree automaton and definability in \MSO with modulo counting predicates, denoted \CMSO, is well known~\cite{TMSOLOG1}.
Hence, it is an interesting question to consider what kind of graph decompositions can be transduced using $\mathsf{L}$-transductions for some extension
$\mathsf{L}$ of \MSO. 

In a series of papers~\cite{TMSOLOG1,TMSOLOF5,Courcelle96,Courcelle06}, Courcelle investigated the relationship between the graph properties that can be defined in an extension of \MSO and the graph properties that can be recognized by a tree automaton where the tree-automaton receives a term representing the input graph. 
In particular, Courcelle's theorem states that any graph property that is definable in the logic \CMSO can be recognized by a tree automaton  receiving a tree-decompositions of bounded width \cite{TMSOLOG1}.
Combining this result with the linear time algorithm for computing tree-decompositions \cite{bodlaender1993linear}, yields that \CMSO\ model-checking can be done in linear time on graphs of bounded treewidth.
The converse statement -- whether recognisability by a tree automaton implies definability in \CMSO\ on graphs of bounded treewidth --
was conjectured by Courcelle in \cite{TMSOLOG1} and finally settled by Boja\'nczyk and Pilipczuk \cite{BojanczykP16}.
  The key step to obtain this result
is obtaining a tree-decomposition of a graph via an \MSO-transduction, a strategy which was initially proposed in \cite{TMSOLOF5} and is now
standard. 

The obvious next question is whether an analogous result can be proved for graphs of bounded clique-width and for more general combinatorial objects, most notably, matroids representable over a fixed finite field and of bounded branch-width. Due to the above-mentioned strategy, the key challenge is to produce corresponding tree-like decompositions by \MSO-transduction. It is known that clique-width decompositions can be \MSO-transduced for
graphs of bounded linear clique-width \cite{BojanczykGP21}. However, it is unknown if clique-width decompositions can be \MSO-transduced in general. In fact,
this question remained open even for distance-hereditary graphs, which are precisely graphs of rank-width 1 (thus, of constant clique-width). 

Besides tree-decompositions, the problem of transducing cotrees, and in general
hierarchical decompositions such as modular
decompositions and split decompositions were considered in the literature~\cite{Courcelle96,Courcelle99,Courcelle06,Courcelle13}. In \cite{Courcelle96}, Courcelle provides transductions using order-invariant \MSO for cographs and modular decompositions of
graphs of bounded modular width. Order-invariant \MSO allows the use of a linear order of the set of vertices and is more expressive than \CMSO
\cite{GanzowR08}. The applicability of these transductions was later generalized using the framework of weakly-partitive set systems\footnote{Weakly-partitive set systems are set systems enjoying some nice closure properties, which were then used
to show that some set systems allow canonical tree representations, see for instance the thesis by Montgolfier and Rao
\cite{montgolfierThesis,raoThesis}.} to obtain order-invariant
\MSO-transductions of modular and split decompositions \cite{Courcelle06}. It was left as an open question whether one can get rid of the order (see for instance
\cite{CourcelleT12} where an overview of the result on hierarchical decompositions was given).

\subsection{Our results}
In this paper,
we consider decompositions given by nested partitions.
We view partitions of a given kind as a `set system'.
A set system consists of a set $U$, the universe, and a set $\mathcal{S}$ of subsets of $U$. Two sets overlap if they have non-empty intersection but neither of them is contained in the other. If no two elements in a set system $(U,\mathcal{S})$ overlap, i.e. the set system is \emph{laminar}, then the subset relation in $(U,\mathcal{S})$ can be described by a rooted tree $T$, called the \emph{laminar tree} of $(U,\mathcal{S})$.
For any set system $(U, \mathcal{S})$ we can look at the subset of strong sets, i.e., sets that do not overlap with any other set, and the laminar tree $T$ they induce. 

Our main contribution is providing a transduction that given a laminar set system outputs its laminar tree (see \autoref{thm:transduce laminar tree}). To obtain this theorem, the main technical challenge is  equipping each inner node of the laminar tree with a leaf representative such that the representative relation is definable in MSO using an even cardinality set predicate while guaranteeing that each leaf serves as representative to a bounded number of inner nodes.

Given a graph $G$ we can consider the set system $(V(G),\mathcal{M})$ where $\mathcal{M}$ is the set of all modules in $G$. We obtain the modular decomposition, or the cotree in case $G$ is a cograph, by equipping the laminar tree of the suitable set system with some additional structure. The additional structure allows us to recover the graph from the respective decomposition.

Abstractly, the set systems mentioned are instances of \emph{weakly-partitive set systems}.
Roughly speaking, if a set system $(U,\mathcal{S})$ is well behaved, i.e. $(U,\mathcal{S})$ is weakly-partitive (definition in \autoref{sec:preliminaries}), then there is a tree $T$, a labelling $\lambda$ of $T$ and a partial order $<$ of its nodes such that $(U,\mathcal{S})$ is completely described by $(T,\lambda,<)$ \cite{chein1981partitive,raoThesis}. See \autoref{fig:partitive} for an example.
\begin{figure}
    \includegraphics[scale=0.8]{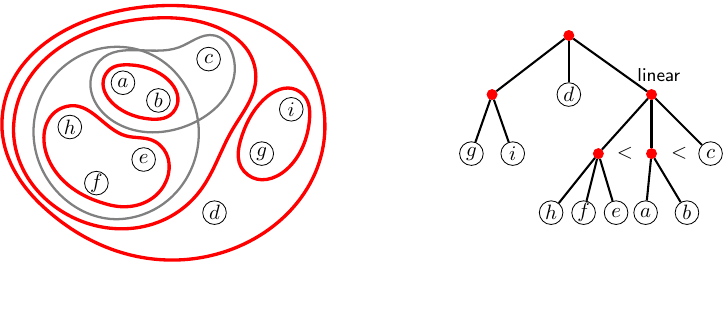}%
    \caption{%
        Left: A weakly-partitive set system for which strong sets are indicated by red, thick lines and singletons are omitted. Right: The laminar tree obtain by considering the laminar set system consisting of the strong sets. Here the node labelled \LINEAR plus the linear ordering of its children indicate that the leaves of every $<$ interval corresponds to a set in the set system (\eg~$\{a,b,c\}$ is in the set system while $\{h,f,e,c\}$ is not). Note that some labels of nodes are omitted.%
    }%
    \label{fig:partitive}
\end{figure}
\bigbreak

We show the following, wherein each item $\lambda$ is a suitable labelling of the nodes of the laminar tree $T$, $<$ is a partial ordering of its nodes, and $F$
is an additional edge relation defined only on pairs of siblings in $T$. A visualization how results depend on each other is given in
\autoref{fig:overview}.

\begin{thm}\label{thm:summary}
    There are non-deterministic \CMSO[2]-transductions $\tau_1,\ldots,\tau_4$ such that: 
    \begin{enumerate}
        \item 
           For any laminar set system $(U,\mathcal{S})$, $\tau_1(U, \mathcal{S})$ is non-empty
           and every output in~$\tau_1(U, \mathcal{S})$ is a laminar tree $T$ of $(U,\mathcal{S})$ (\autoref{thm:transduce laminar tree});
        \item 
            For any weakly-partitive set system $(U,\mathcal{S})$, $\tau_2(U,\mathcal{S})$ is non-empty
            and every output in~$\tau_2(U,\mathcal{S})$ is a weakly-partitive tree $(T,\lambda,<)$ of $(U,\mathcal{S})$ (\autoref{thm:transduce weakly-partitive tree});
        \item 
            For any graph $G$, $\tau_3(G)$ is non-empty
            and every output in~$\tau_3(G)$ is a modular decomposition $(T,F)$ of $G$ (\autoref{thm:transduce modular decomposition});
        \item 
            For any cograph $G$, $\tau_4(G)$ is non-empty
            and every output in~$\tau_4(G)$ is a cotree $(T,\lambda)$ of $G$ (\autoref{thm:transduce cotree}).
    \end{enumerate}
\end{thm}

Other tree-like graph decompositions can be obtained by considering systems of bipartitions. Aiming to study such decompositions, we can apply our techniques to systems of bipartitions. A systems of bipartitions consists of a universe $U$ and a set $\mathcal{B}$ of bipartitions of $U$. Two bipartitions of $U$ overlap if neither side of one of the bipartition is contained in either side of the other bipartition. In case $(U,\mathcal{B})$ has no overlapping bipartitions, then $(U,\mathcal{B})$ can be described by an undirected tree, also called \emph{laminar tree}, in which bipartitions correspond to edge cuts. We can define the concept of strong bipartitions equivalently and consider the laminar tree induced by the strong bipartitions in $(U,\mathcal{B})$.

Given a graph $G$, we consider the system of bipartition $(V(G), \mathcal{S})$ where $\mathcal{S}$ contains all splits in $G$, or the
system of bipartitions $(V(G), \mathcal{B})$ where $\mathcal{B}$ contains all bi-joins in $G$. Equipping the laminar tree of the respective systems of bipartitions with additional structure yields the split decomposition or bi-join decomposition of the graph. These systems of bipartitions are examples of \emph{weakly-bipartitive systems of bipartitions}. Similar to weakly-partitive set systems, we can completely describe any weakly-bipartitive system of biparitions by a tree $T$, a labelling $\lambda$ and linear order $<_t$ of the children of some particular nodes $t$  \cite{montgolfierThesis}. See \autoref{fig:bipartitive} for an example.
\begin{figure}
    \includegraphics[scale=0.8]{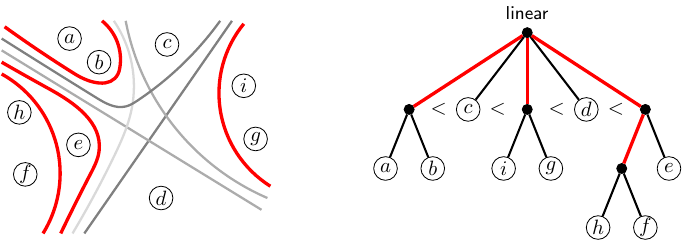}%
    \caption{%
        Left: A weakly-bipartitive system of bipartitions for which strong bipartitions are indicated by red, thick lines and bipartitions of the form $\{\{a\},U\setminus \{a\}\}$ are omitted. Right: The laminar tree obtain by considering the laminar system of bipartitions consisting of the strong bipartitions. The node labelled \LINEAR plus the linear ordering of its children indicate that the leaves of every $<$ interval corresponds to a bipartition in the system of biparitions (\eg~$\{\{a,b,c,i,g\},\{d,e,f,h\}\}$ is in the system of bipartitions while $\{\{a,b,d\},\{c,e,f,g,h,i\}\}$ is not). Note that some labels of nodes are omitted.%
    }%
    \label{fig:bipartitive}
\end{figure}
\bigbreak

We prove the following theorem, where similarly to before each item $\lambda$ is a suitable labelling of the nodes of the laminar tree $T$, $<$ is a
  partial ordering of its nodes, and $F$ is an additional edge relation defined only on pairs of siblings in $T$.

\begin{thm}\label{thm:summary}
    There are non-deterministic \CMSO[2]-transductions $\tau_1,\ldots,\tau_3$ such that: 
    \begin{enumerate}
         \item 
            For any weakly-bipartitive set system $(U,\mathcal{B})$, $\tau_1(U,\mathcal{B})$ is non-empty
            and every output in~$\tau_1(U,\mathcal{B})$ is a weakly-bipartitive tree $(T,\lambda,<)$ of $(U,\mathcal{B})$ (\autoref{thm:transduce weakly-bipartitive tree});
        \item 
             For any graph $G$, $\tau_2(G)$ is non-empty
             and every output in~$\tau_2(G)$ is a split decomposition $(T,F)$ of $G$ (\autoref{thm:split});
        \item 
             For any graph $G$, $\tau_3(G)$ is non-empty
             and every output in~$\tau_3(G)$ is a bi-join decomposition $(T,F)$ of $G$ (\autoref{thm:bi-join1}). %and \autoref{thm:bi-join2}).
    \end{enumerate}
\end{thm}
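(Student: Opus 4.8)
The plan is to reduce every item to the set-system results of \autoref{thm:summary2}. The bridge is the classical correspondence between systems of bipartitions and set systems obtained by \emph{rooting}: fixing an element $r\in U$, a bipartition $\{A,B\}$ of $U$ is identified with whichever side omits $r$. Under this identification a weakly-bipartitive system $(U,\mathcal{B})$ becomes a weakly-partitive set system $(U,\mathcal{S}_r)$, its laminar structure is preserved, and the undirected laminar tree of $\mathcal{B}$ is recovered from the rooted laminar tree of $\mathcal{S}_r$ by forgetting the root orientation. The three items then follow by: (1) applying this correspondence directly; and (2)--(3) first exhibiting splits and bi-joins as weakly-bipartitive systems that are \CMSO-definable \emph{inside} $G$, then invoking (1) and adjoining the reconstruction relation $F$.

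For Item 1 I would first use the non-determinism of the transduction to guess a single distinguished element $r\in U$, encoded as a monadic predicate constrained by \CMSO to be a singleton. Given $r$, the family $\mathcal{S}_r=\{X : \{X,U\setminus X\}\in\mathcal{B},\ r\notin X\}$ is \CMSO-definable from the encoding of $\mathcal{B}$, and a short combinatorial check shows that $(U,\mathcal{S}_r)$ is weakly-partitive. Composing with the transduction $\tau_2$ of \autoref{thm:summary2} then yields a weakly-partitive tree $(T',\lambda',<')$ of $(U,\mathcal{S}_r)$. What remains is to translate this rooted object into the undirected weakly-bipartitive tree $(T,\lambda,<)$: the leaf of $T'$ carrying $r$ and the top of $T'$ are re-interpreted so that tree edges correspond to strong bipartitions, node labels are transferred, and $<'$ is restricted to the linear orders on the children of the linear nodes demanded by the bipartitive representation. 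Each of these post-processing steps is an \MSO-definable modification of the output structure.

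For Items 2 and 3 I would begin from the defining conditions: a bipartition $\{A,B\}$ of $V(G)$ is a split (\resp a bi-join) exactly when the edges across the cut exhibit the prescribed product (\resp block) structure. Each such condition is expressible by guessing finitely many subsets of $A$ and $B$ with set quantifiers and then checking a first-order adjacency constraint, so membership in the split (\resp bi-join) system is \CMSO-definable inside $G$. Known results \cite{montgolfierThesis} give that these systems are weakly-bipartitive, so Item 1 applies and produces the laminar tree $(T,\lambda,<)$ of the strong splits (\resp bi-joins). The remaining task is to define, in \CMSO over $G$ together with the guessed tree, the edge relation $F$ on pairs of sibling nodes recording the quotient adjacencies at each node of $T$, and then to verify that $(T,F)$ determines $G$, i.e.\ that $G$ can be reconstructed from the decomposition.

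The main obstacle I anticipate is the post-processing in Item 1: showing that a single guessed root $r$ suffices and that the unrooting is well-defined and \MSO-definable \emph{independently} of the choice of $r$. Distinct roots produce distinct rooted set systems, hence a priori distinct outputs of $\tau_2$, yet all must unroot to valid weakly-bipartitive trees of the same $(U,\mathcal{B})$. Checking that the partial order returned by $\tau_2$ restricts correctly to the children-orderings of the bipartitive representation, and that the special status of the root leaf is handled uniformly across degenerate, linear, and prime nodes, is where the careful bookkeeping lies. For the graph cases the secondary difficulty is engineering $F$ so that it is simultaneously \CMSO-definable from $G$ and sufficient for reconstruction.
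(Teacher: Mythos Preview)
Your plan is correct and rests on the same key idea as the paper: reduce a bipartition system to a set system by guessing a root element $r$ and passing to $\mathcal{S}_r$. The implementation diverges in one place. You propose to feed $\mathcal{S}_r$ into the full weakly-partitive-tree transduction ($\tau_2$ of \autoref{thm:summary2}) and then translate the resulting labels and orders back to the unrooted setting; the paper instead applies the rooting reduction only to obtain the undirected \emph{laminar} tree (\autoref{prop:bipartitiveVSpartitive} and \autoref{lem:transduce laminar tree induced by weakly-bipartitive family}) and then computes the node labels and the $\cross$ relation \emph{directly} on that tree from the $\BIPARTITION$ predicate, using the characterization in \autoref{thm:weakly-bipartitive tree}. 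This bypasses precisely the obstacle you flag---verifying that $\DEGENERATE$/$\LINEAR$ labels and the orders $<_t$ survive unrooting independently of the choice of $r$---because no translation step is needed. Your route works too, but requires that extra bookkeeping lemma; the paper's route is shorter. For Items~2 and~3 your outline matches the paper's: express split (resp.\ bi-join) membership by an \MSO-formula, invoke the laminar-tree construction, and add the reconstruction relation $F$ by an \MSO-interpretation over the graph together with the tree.
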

The key step in obtaining these transductions
is to transduce the laminar tree $T$ of a set system $(U,\mathcal{S})$.
The crux here is to find a suitable representative of each node of $T$ among the elements of $U$
and a non-deterministic colouring which allows the assignment of representatives to nodes by means of a \CMSO[2]-formula.  
%\mmk{Some rephrasing here to specify what we mean by representatives.}
It should be mentioned that a similar result is claimed in the preprint \cite{Boj23},
where a proof sketch designing a \CMSO[3]-transduction is described. 
Once the laminar tree is obtained,
the additional relations for each decomposition
can be obtained using a deterministic \MSO-transduction.
Notice that for each of these transductions,
there exists an inverse deterministic \MSO-transduction,
namely a transduction that from the tree-like decomposition outputs the original structure. It is worth mentioning that a corollary of
\autoref{thm:summary}(4) is a \CMSO-transduction outputting a rank-decomposition of width $1$ for every graph of rank-width $1$. 
%\mmk{May be explain link between DH and split-dec to motivate this last sentence.}

\begin{figure}
    \centering
    \begin{tikzpicture}[gnode/.style={draw, fill=white, inner sep=3pt, minimum width=0.4cm, execute at begin node=\setlength{\baselineskip}{0.8em}}]

        \tikzstyle{ns1}=[line width=1]

        \fill [gray!10] (-1.8,-0.7) rectangle (1.8,1.4);
        \node at (0,1){\textcolor{orange}{\autoref{sec:transduce laminar tree}}};
        \fill [gray!10] (-6.9,-6.4) rectangle (-0.1,-1.1);
        \fill [gray!10] (0.3,-6.4) rectangle (7.3,-1.1);
        \node at (-5.7,-2){\textcolor{orange}{\autoref{sec:transduce modular decomposition}}};
        \node at (5.7,-2){\textcolor{orange}{\autoref{sec:split}}};

        \node[gnode,text width = 2.6cm] (a) at (0,0){\autoref{thm:transduce laminar tree}\phantom{blablabla}
        \scriptsize \textcolor{teal}{laminar set system $\rightarrow$ laminar tree}};

        \node[gnode,text width = 2.1cm] (b) at (-3.2,-2){\autoref{lem:transduce laminar tree induced by weakly-partitive family}\phantom{blablabla}
        \scriptsize \textcolor{teal}{weakly-partitive set system $\rightarrow$ laminar tree}};

        \node[gnode,text width = 2.35cm] (c) at (3,-2){\autoref{lem:transduce laminar tree induced by weakly-bipartitive family}\phantom{blablabla}
        \scriptsize \textcolor{teal}{weakly-bipartitive set system $\rightarrow$ laminar tree}};

        \node[gnode,text width = 2.72cm] (d) at (-5.25,-4.2){\autoref{thm:transduce weakly-partitive tree}\phantom{blablabla}
        \scriptsize \textcolor{teal}{weakly-partitive set system $\rightarrow$ weakly-partitive tree}};

        \node[gnode,text width = 2.4cm] (e) at (-1.6,-4.2){\autoref{thm:transduce modular decomposition}\phantom{blablabla}
        \scriptsize \textcolor{teal}{graph $\rightarrow$ modular decomposition}};

        \node[gnode,text width = 2.97cm] (g) at (2.1,-4.2){\autoref{thm:transduce weakly-bipartitive tree}\phantom{blablabla}
        \scriptsize \textcolor{teal}{weakly-bipartitive set system $\rightarrow$ weakly-bipartitive tree}};

        \node[gnode,text width = 2.05cm] (h) at (6,-4.2){\autoref{thm:split}\phantom{blablabla}
        \scriptsize \textcolor{teal}{graph $\rightarrow$ split decomposition}};

        \node[gnode,text width = 2.4cm] (f) at (-3.2,-5.8){\autoref{thm:transduce cotree}\phantom{blablabla}
        \scriptsize \textcolor{teal}{cograph $\rightarrow$ cotree}};

        \node[gnode,text width = 2.3cm] (i) at (5.2,-5.65){\autoref{thm:bi-join1} %\& \ref{thm:bi-join2}
        \scriptsize \textcolor{teal}{graph $\rightarrow$ bi-join  decomposition}};

        \draw[ns1, -Stealth] (a) -- (b);
        \draw[ns1, -Stealth] (b) -- (c);
        \draw[ns1, -Stealth] (b) -- (d);
        \draw[ns1, -Stealth] (b) -- (e);
        \draw[ns1, -Stealth] (b) -- (f);
        \draw[ns1, -Stealth] (c) -- (g);
        \draw[ns1, -Stealth] (c) -- (h);
        \draw[ns1, -Stealth] (c) -- (i);

    \end{tikzpicture}
    \caption{%
        Overview of the various transductions in the paper. An arrow from $x$ to $y$ indicates that result $x$ is used in the proof of result $y$.
    }%
    \label{fig:overview}
\end{figure}
\bigbreak

\subsection{Organization}
The paper is organized as follows. In \autoref{sec:preliminaries} we introduce terminology and notation needed. In \autoref{sec:transduce laminar tree} we prove \autoref{thm:transduce laminar tree}. In \autoref{sec:transduce modular decomposition} we provide the proof of \autoref{thm:transduce weakly-partitive tree} and \autoref{thm:transduce modular decomposition} and obtain \autoref{thm:transduce cotree}. In \autoref{sec:split} we provide the proofs of \autoref{thm:transduce weakly-bipartitive tree}, \autoref{thm:split} and \autoref{thm:bi-join1}.

\section{Preliminaries}\label{sec:preliminaries}
\subsection{Graphs, trees, set systems}
\subparagraph{Graphs}
We use standard terminology of graph theory,
and we fix some notations.
We consider graphs to be finite.
Given a \emph{directed graph}~$G$,
its sets of \emph{vertices} and \emph{edges}
are denoted by \defd{$V(G)$} and \defd{$E(G)$}, respectively.
We denote by \defd{uv} an edge~$(u,v)\in E(G)$.
An undirected graph is no more than a directed graph
for which~$E(G)$ is symmetric (\ie, $uv\in E(G)\iff vu\in E(G)$).
The notions of \emph{paths}, \emph{connected components}, \etc are defined as usual.
Given a subset~$Z$ of~$V(G)$,
we denote by \defd{$G[Z]$} the sub-graph of~$G$ induced by~$Z$. 

\subparagraph{Trees}
A \defd{tree} is a connected undirected graph without cycles.
In the context of trees,
we use a slightly different terminology than for graphs.
In particular, vertices are called~\defd{nodes},
nodes of degree at most~$1$ are called \defd{leaves},
and nodes of degree greater than~$1$ are called \defd{inner}.
The set of leaves is denoted~\defd{$L(T)$};
thus the set of inner nodes is~$V(T)\setminus L(T)$. For a node $t$ of a tree $T$ and a neighbour $s$ of $t$, we denote by $T^t_s$ the connected component of $T-t$ containing $s$.
We sometimes consider \defd{rooted trees},
namely trees with a distinguished node, called the \defd{root}.
Rooted trees enjoy a natural orientation of their edges
toward the root,
which induces the usual notions of
\defd{parent}, \defd{child}, \defd{sibling}, \defd{ancestor} and \defd{descendant}.
Hence, we represent a rooted tree by a set of nodes with an ancestor/descendant relationship (instead of specifying the root).
We use the convention that every node is one of its own ancestors and descendants.
We refer to ancestors (\resp descendants) of a node
that are not the node itself
as \defd{proper ancestors} (\resp \defd{proper descendants}).
For a node~$t$ of a rooted tree~$T$,
we denote by~\defd{$T_t$} the subtree of~$T$ rooted in~$t$
(\ie, the restriction of~$T$ to the set of descendants of~$t$).

\subparagraph{Set systems and laminar trees}
A \defd{set system} is a pair~$(U,\mathcal{S})$
where~$U$ is a finite set, called the \defd{universe},
and~$\mathcal{S}$ is a family of subsets of~$U$
where~$\emptyset\notin\mathcal{S}$,
$U\in\mathcal{S}$,
and~$\{a\}\in\mathcal{S}$ for every~$a\in U$.%
\footnote{%
    Though these restrictions on~$\mathcal{S}$ are not usual for set systems,
    it is convenient for our contribution
    and it does not significantly impact the generality of set systems:
    every family~$\mathcal{F}$ of subsets of~$U$
    can be associated with a set system~$(U,\mathcal{S})$
    where $\mathcal{S}=\big(\mathcal{F}\setminus\{\emptyset\}\big)\cup\big\{U\big\}\cup\big\{ \{a\} \mid a\in U\big\}$.%
}
Two sets~$X$ and~$Y$ \defd{overlap}
if they are neither disjoint nor related by containment.
A set system~$(U,\mathcal{S})$
is said to be \defd{laminar} (\aka \emph{overlap-free})
when no two sets from~$\mathcal{S}$ overlap.
By extension, a set family~$\mathcal{S}$ is \defd{laminar}
if $\left(\bigcup\mathcal{S}, \mathcal{S}\right)$ is a laminar set system
(note this also requires that~$\emptyset\notin\mathcal{S}$, $\bigcup\mathcal{S}\in\mathcal{S}$, and~$\{a\}\in\mathcal{S}$ for every~$a\in\bigcup\mathcal{S}$).

A laminar family~$\mathcal{S}$ of subsets of~$U$
naturally defines a rooted tree
where the nodes are the sets from~$\mathcal{S}$,
the root is~$U$,
and the ancestor relation corresponds to set inclusion, \ie,  nodes corresponding to sets $S\subset S'\in\mathcal{S}$ are adjacent in the tree if there
  is no proper superset of $S$ in $\mathcal{S}$ that is a proper subset of $S'$.
We call this rooted tree
the \defd{laminar tree of~$U$ induced by $\mathcal{S}$}
(or \defd{laminar tree of~$(U,\mathcal{S})$}).\xspace
In this rooted tree,
the leaves are the singletons~$\{x\}$ for~$x\in U$,
which we identify with the elements themselves.
That is to say, $L(T)=U$.
Laminar trees have the property
that each inner node has at least two children due to the definition of set systems demanding each singleton being in the set system.
Observe also that the size of a laminar tree
is linearly bounded in the size of the universe.

\subsection{Logic and transductions}
We use \emph{relational structures} to model
both graphs and the various tree-like decompositions used in this paper.
In order to concisely model set systems
we use the more general notion of \emph{extended relational structures},
namely \emph{relational structure} extended with set predicate names.
Such structures also naturally arise as outputs of \emph{\MSO-transductions} defined below.
\smallbreak

Define an \defd{(extended) vocabulary} to be a set of symbols,
each being either a \emph{relation} name~$R$ with associated arity~\defd{$\arity(R)\in\mathbb{N}$},
or a \emph{set predicate} name~$P$ with associated arity~\defd{$\arity(P)\in\mathbb{N}$}.
Set predicate names are aimed to describe relations between sets,
\eg, one may have a unary set predicate for selecting finite sets of even size,
or a binary set predicate for selecting pairs of disjoint sets.
We use capital~$R$ or
names starting with a lowercase letter (\eg, $\edge$, $\ancestor$, $\tedge$) for relation names,
and capital~$P$ or uppercase names (\eg, $\SET$, $\C[2]$) for set predicate names.
To refer to an arbitrary symbol of undetermined kind, we use capital~$Q$.
A \defd{relational vocabulary} is an extended vocabulary in which every symbol is a relation name.

Let~$\Sigma$ be a vocabulary.
An \defd{extended relational structure over~$\Sigma$} (\defd{$\Sigma$-structure})
is a structure $\mathbb{A}=\langle U_{\mathbb{A}}, (Q_{\mathbb{A}})_{Q\in \Sigma}\rangle$
consisting on the one hand of a set~\defd{$U_\mathbb{A}$} called \defd{universe},
and on the other hand, for each symbol~$Q$ in~$\Sigma$,
an \defd{interpretation~$Q_{\mathbb{A}}$ of~$Q$},
which is a relation of arity~$\arity(Q)$
either over the universe if~$Q$ is a relation name,
or over the family of subsets of the universe if~$Q$ is a set predicate name.
When~$\Sigma$ is not extended, $\mathbb{A}$ is simply a \defd{relational structure}. 

Given a $\Sigma$-structure~$\mathbb{A}$
and,
for some vocabulary~$\Gamma$,
a $\Gamma$-structure~$\mathbb{B}$,
we write~\defd{$\mathbb{A}\sqsubseteq\mathbb{B}$}
if~$\Sigma\subseteq\Gamma$,
$U_{\mathbb{A}}\subseteq U_{\mathbb{B}}$
and for each symbol~$Q$ in~$\Sigma$,
$Q_{\mathbb{A}}=Q_{\mathbb{B}}$.%
\footnote{%
    We require equality here
    (in particular only elements or subsets of~$U_{\mathbb{A}}$ are related in~$Q_{\mathbb{B}}$).
    This differs from classical notions of inclusions of relational structures
    which typically require equality only on the restriction of the universe to~$U_{\mathbb{A}}$,
    \ie, $Q_{\mathbb{A}}=Q_{\mathbb{B}_{/U_{\mathbb{A}}}}$,
    \eg, in order to correspond to induced graphs.%
}
We write~\defd{$\mathbb{A}\sqcup\mathbb{B}$}
to denote the $(\Sigma\cup\Gamma)$-structure
consisting of the universe~$U_{\mathbb{A}}\cup U_{\mathbb{B}}$
and, for each symbol~$Q\in\Sigma\cup\Gamma$,
the interpretation $Q_{\mathbb{A}\sqcup\mathbb{B}}$
which is~$Q_{\mathbb{A}}$, $Q_{\mathbb{B}}$, or~$Q_{\mathbb{A}}\cup Q_{\mathbb{B}}$
according to whether~$Q$ belongs to~$\Sigma\setminus\Gamma$, to~$\Gamma\setminus\Sigma$, or to~$\Sigma\cap\Gamma$.%
\footnote{%
    We do not require~$\mathbb{A}$ and~$\mathbb{B}$ to be disjoint structures
    whence we may have~$\mathbb{A}\not\sqsubseteq\mathbb{A}\sqcup\mathbb{B}$.%
}

To describe properties of (extended) relational structures, we use \defd{monadic second order logic} (\defd{\MSO}) and refer for instance to
\cite{CE09,FunkMN22,Hlineny06,Strozecki11} for the definition of \MSO on extended relational structures such as matroids or set systems in general. This logic allows quantification both over single elements of the universe
and over subsets of the universe. We also use \defd{counting \MSO} (\defd{\CMSO}), which is the extension of~$\MSO$ with, for every positive integer~$p$, a
unary set predicate~$\C[p]$ that checks whether the size of a subset is divisible by~$p$ or not. We only use~$\C[2]$.
As usual, lowercase variables indicates first-order-quantified variables,
while uppercase variables indicates monadic-quantified variables.
For a formula~$\phi$, we write, \eg,~$\phi(x,y,X)$ to indicate that the variables~$x$, $y$, and~$X$ belong to the set of \defd{free variables of~$\phi$},
namely, the set of variables occurring in~$\phi$ that are not bound to a quantifier within~$\phi$.
A \defd{sentence} is a formula without free variables.
\smallbreak

We now fix some (extended) vocabularies that we will use.
\begin{description}[nosep]
    \item[Graphs]
        To model both graphs, unrooted trees, and directed graphs,
        we use the relational vocabulary~$\{\edge\}$
        where~$\edge$ is a relation name of arity~$2$.
        A (directed) graph~$G=(V,E)$ is modelled
        as the $\{\edge\}$-structure $\mathbb{G}$
        with universe~$U_{\mathbb{G}}=V$
        and interpretation~$\edge_{\mathbb{G}}=E$.
        In particular, if~$G$ is undirected, then $\edge_{\mathbb{G}}$ is symmetric.
    \item[Rooted trees]
        We use the relational vocabulary~$\{\ancestor\}$ to model rooted trees
        where $\ancestor$ is a relation name of arity~$2$.
        A rooted tree~$T$ is modelled
        as the $\{\ancestor\}$-structure $\mathbb{T}$
        with universe~$U_{\mathbb{T}}=V(T)$
        and the interpretation~$\ancestor_{\mathbb{T}}$
        being the set of pairs~$(u,v)$ such that~$u$ is an ancestor of $v$ in $T$.
        It is routine to define \FO-formulas over this vocabulary
        to express the binary relations parent, child, proper ancestor, (proper) descendant,
        as well as the unary relations leaf and root.%
    \item[Set systems]
        To model set systems, we use the extended vocabulary~$\{\SET\}$
        where $\SET$ is a set predicate name of arity~$1$.
        A set system~$S=(U,\mathcal{S})$ is thus naturally modelled
        as the $\{\SET\}$-structure $\mathbb{S}$
        with universe~$U_{\mathbb{S}}=U$
        and interpretation~$\SET_{\mathbb{S}}=\mathcal{S}$.
\end{description}

\paragraph*{Transductions}
Let~$\Sigma$ and~$\Gamma$ be two extended vocabularies.
A \defd{$\Sigma$-to-$\Gamma$ transduction}
is a set~$\tau$ of pairs formed by
a $\Sigma$-structure, call the \defd{input},
and a $\Gamma$-structure, called the \defd{output}.
We write \defd{$\mathbb{B}\in\tau(\mathbb{A})$} when~$(\mathbb{A},\mathbb{B})\in\tau$. We say that a class of $\Sigma$-structures $\mathcal{C}$ \emph{transduces} a class of $\Gamma$-structures $\mathcal{D}$ if there is a  $\Sigma$-to-$\Gamma$ transduction $\tau$ with $\mathcal{D}\subseteq \tau(\mathcal{C})$.
When for every pair~$(\mathbb{A},\mathbb{B})\in\tau$
we have~$\mathbb{A}\sqsubseteq\mathbb{B}$,
we call~$\tau$ an
\defd{overlay transduction}. Overlay transductions are used to augment a relational structure by adding additional relations while not altering existing relations in any way.
Some transductions
can be defined by means of \MSO- or \CMSO[2]-formulas. 
This leads to the notion of \defd{\MSO-} and \defd{\CMSO[2]-transductions}.
Following the presentation of~\cite{BojanczykGP21},
for~$\mathsf{L}$ denoting~\MSO or~\CMSO[2],
define an \defd{$\mathsf{L}$-transduction}
to be a transduction obtained
by composing a finite number of \defd{atomic $\mathsf{L}$-transductions}
of the following kinds.
\begin{description}[nosep]
    \item[Filtering]
        An overlay transduction
        specified by an $\mathsf{L}$-sentence~$\phi$ over the input vocabulary~$\Sigma$,
        which discards the inputs that do not satisfy~$\phi$
        and keeps the other unchanged.
        Hence, it defines a partial function
        (actually, a partial identity)
        from $\Sigma$-structures to $\Sigma$-structures.
    \item[Universe restriction]
        A transduction specified by an $\mathsf{L}$-formula~$\phi$ over the input vocabulary~$\Sigma$,
        with one free first-order variable,
        which restricts the universe to those elements that satisfy~$\phi$.
        The output vocabulary is~$\Sigma$
        and the interpretation of every relation (\resp every predicate) in the output structure
        is defined as the restriction of its interpretation in the input structure
        to those tuples of elements satisfying~$\phi$ (\resp tuples of sets of elements that satisfy~$\phi$).
        This defines a total function from $\Sigma$-structures to $\Sigma$-structures.
    \item[Interpretation]
        A transduction specified by a family~$(\phi_Q)_{Q\in\Gamma}$ over the input vocabulary~$\Sigma$
        where~$\Gamma$ is the output vocabulary and each~$\phi_Q$ has $\arity(Q)$ free variables
        which are first-order if~$Q$ is a relation name and monadic if it is a set predicate name.
        The transduction outputs the~$\Gamma$-structure
        that has the same universe as the input structure
        and in which each relation or predicate~$Q$ is interpreted
        as those set of tuples that satisfy~$\phi_Q$.
        This defines a total function from $\Sigma$-structures to $\Gamma$-structures.
    \item[Copying]
        An overlay transduction parametrized by a positive integer~$k$ that
        adds $k$ copies of each element to the universe.
        The output vocabulary consists in the input vocabulary~$\Sigma$
        extended with~$k$ binary relational symbols~$(\copyV[i])_{i\in[k]}$
        interpreted as pairs of elements~$(x,y)$
        saying that ``$y$ is the $i$-th copy of~$x$''.
        The interpretation of the relations (\resp predicates) of the input structure
        are preserved, on original elements.
        This defines a total function from~$\Sigma$-structures to~$\Gamma$-structures,
        where~$\Gamma=\Sigma\cup\{\copyV[i]\mid i\in[k]\}$.
    \item[Colouring]
        An overlay transduction that adds a new unary relation~$\colorV\notin\Sigma$ to the structure.
        Any possible interpretation yields an output;
        indeed the interpretation is chosen non-deterministically.
        The interpretation of the relations (\resp predicates) of the input structure
        are preserved.
        Hence, it defines a total (non-functional) relation
        from $\Sigma$-structures to~$\Gamma$-structures
        where~$\Gamma=\Sigma\cup\{\colorV\}$.
        Note that by composing several transductions of this type any given element may receive multiple colours.
\end{description}
We say an $\mathsf{L}$-transduction is \defd{deterministic} if it does not use colouring,
it is \defd{non-deterministic} otherwise.
By definition, deterministic $\mathsf{L}$-transductions define partial functions.
    It is known that every $\CMSO$-transduction
    can be put in a standard form in which each of the above atomic transductions occurs once
    in a fixed order, namely:
    first colouring, then filtering, copying, interpreting the output relations, and finally restricting the universe~\cite{CE09}. 

\section{Transducing the laminar-tree}
\label{sec:transduce laminar tree}
In this section,
we present an overlay \CMSO[2]-transduction
that takes as input a laminar set system
and outputs the laminar tree it induces.

\begin{thm}
    \label{thm:transduce laminar tree}
    Let~$\Sigma$ be an extended vocabulary,
    including a unary set predicate name~$\SET$
    and not including the binary relational symbol~$\ancestor$.
    There exists a non-deterministic overlay \CMSO[2]-transduction~$\tau$
    such that,
    for each laminar set system~$(U,\mathcal{S})$
    represented as the $\{\SET\}$-structure~$\mathbb{S}$
    and inducing the laminar tree~$T$ with ${L(T)=U}$,
    and for each $\Sigma$-structure~$\mathbb{A}$
    with~$\mathbb{S}\sqsubseteq\mathbb{A}$,
    $\tau(\mathbb{A})$ is non-empty
    and every output in~$\tau(\mathbb{A})$
    is equal to $\mathbb{A}\sqcup\mathbb{T}$ for some $\{\ancestor\}$-structure $\mathbb{T}$ representing $T$.
\end{thm}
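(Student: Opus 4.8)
\section*{Proof plan for \autoref{thm:transduce laminar tree}}

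The plan is to build $\tau$ as a composition of atomic \CMSO[2]-transductions in the standard form: a non-deterministic colouring of the elements of $U$, a filtering sentence discarding the useless colourings, a single copying step, an interpretation defining the new relation $\ancestor$, and a final universe restriction. Because the interpretation re-exports every symbol of $\Sigma$ by the formula $\phi_Q:=Q$ and the restriction keeps all original elements, the \emph{net} effect preserves $\mathbb{A}$ verbatim; as $\ancestor\notin\Sigma$, each output indeed has the form $\mathbb{A}\sqcup\mathbb{T}$ and $\tau$ is an overlay transduction. The copies are the fresh elements that realise the inner nodes of $T$: since the original elements of $U$ already are the leaves of $T$ (recall $L(T)=U$) and every inner node of a laminar tree has at least two children---so there are at most $|U|-1$ of them---a single copy per element (copying with $k=1$) provides enough material, and the restriction will drop the copies that are not used.

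The heart of the construction is to represent every node of $T$ by an element of $U$. Writing $S_t\in\mathcal{S}$ for the set corresponding to a node $t$, I would fix an injection $\repr$ sending each inner node $t$ to an element $\repr(t)\in S_t$, and realise $t$ by the copy of $\repr(t)$ while realising each leaf $\{x\}$ by $x$ itself. Such an injection exists by a Hall-type argument: for a family $A$ of inner nodes, the subtrees rooted at the $\subseteq$-maximal members of $A$ are pairwise disjoint and each has strictly more leaves than inner nodes, so the set of leaves below $A$ has size $>|A|$ and the marriage condition holds. The decisive point is that, once $\repr$ is available, the entire tree collapses to set containment: for two nodes $p,q$ one has $\ancestor(p,q)$ \ifof $S_q\subseteq S_p$, and containment of members of $\mathcal{S}$ is written directly with $\SET$. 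Thus $\ancestor$ is \CMSO[2]-definable the moment we can recover, from the representative of a node, the set of $\mathcal{S}$ that this node stands for.

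This recovery is the true obstacle and is exactly what the non-deterministic colouring has to enable. For the representative $x$ of an inner node $t$, the set $S_t$ is one specific member of the $\subseteq$-chain of all $\mathcal{S}$-sets containing $x$, and this chain can be arbitrarily long; pinning down the right member must not appeal to a linear order of $U$, for supplying such an order is precisely the device of Courcelle's order-invariant \MSO-transductions that we are trying to eliminate. My plan is therefore to guess a colouring of $U$ with a bounded number of colours that at once marks the representative set $R=\{\repr(t)\mid t\text{ an inner node}\}$ and encodes, locally and order-freely, the boundary of the set attached to each $x\in R$, so that this set is cut out of the chain above $x$ by a fixed \CMSO[2]-formula read against the colouring. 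It is here that the modulo-counting predicate $\C[2]$ is expected to do the essential work---using the parity of marked elements inside a candidate set both to select the correct set and to certify the injectivity of $\repr$---which is what should let a single modulo-$2$ predicate replace the modulo-$3$ counting of the proof sketch in~\cite{Boj23}.

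Finally, a filtering \CMSO[2]-sentence rejects every colouring that fails to encode a legitimate assignment: it verifies that each marked element recovers a genuine member of $\mathcal{S}$, that distinct marked elements recover distinct sets, and that the recovered sets are exactly the non-singleton members of $\mathcal{S}$, so that every inner node is realised once and only once. Non-emptiness of $\tau(\mathbb{A})$ then follows because a valid colouring always exists---namely the one induced by any choice of $\repr$ as above---and, by construction, every surviving output equals $\mathbb{A}\sqcup\mathbb{T}$ for the laminar tree $T$ of $(U,\mathcal{S})$. The remaining and most delicate task, which constitutes the technical core of the section, is to make the informal colouring of the third paragraph precise and to check that both the set-recovery interpretation and the filtering sentence are genuinely expressible in \CMSO[2].
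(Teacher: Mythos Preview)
Your overall architecture---colour, filter, copy, interpret $\ancestor$ via set containment, restrict---matches the paper, and you have correctly located the only real difficulty: given a marked leaf $x=\repr(t)$, single out the one member $S_t$ of the $\subseteq$-chain of $\mathcal{S}$-sets above $x$. But this is precisely the step you leave open, and the paper's solution is not a refinement of your single-representative plan; it is a genuinely different device. Instead of one leaf per inner node, the paper attaches \emph{two} leaves $\pi(t),\sigma(t)$ to each inner node $t$ so that $t$ is their least common ancestor, and it insists on the ``unique request'' property that the $\pi(t)$--$\sigma(t)$ paths be pairwise vertex-disjoint. Under that constraint, membership of a node $X$ in the identified set is a pure parity condition on $(A\cup B)\cap V(T_c)$ over the children $c$ of $X$ (\autoref{lem:node cases in identification}), which is exactly where $\C[2]$ enters, and the $A$-representative of $X$ is then recovered by \autoref{lem:a reprensants}. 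Because unique-request pairs cannot cover all inner nodes at once, the paper first splits $V(T)\setminus L(T)$ into four ``thin'' sets (\autoref{lem:thin 4-partition}), obtains one bi-colouring per part (\autoref{lem:thin => identifiable}), and accordingly uses \emph{four} copies rather than your one.

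So the gap in your proposal is concrete: with a single representative and a single copy, you have no mechanism---and none is sketched---to pick $S_t$ out of the chain above $x$ by a bounded colouring and a \CMSO[2]-formula. Your Hall-type existence argument is fine but buys nothing for recovery; any scheme that ``marks the boundary'' of $S_t$ by a second leaf outside the child of $t$ containing $x$ is already the paper's LCA idea, and making the parity argument sound forces the unique-request discipline and hence the thin-set partition. If you want to keep a one-copy design you would need a genuinely new encoding; otherwise, adopt the bi-colouring/thin-set machinery of \autoref{ssec:repr} and four copies, after which your interpretation of $\ancestor$ and your final correctness filter go through exactly as you wrote.
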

Note that the theorem is formulated such that it can be applied to a structure of which the laminar set system is a part of (e.g. a graph and the laminar set system of strong modules). The original structure is maintained by the structure and the laminar tree structure is added. This is necessary in order to use \autoref{thm:transduce laminar tree} as a black-box in more complex transductions (see application sections below).

Since the sets from~$\mathcal{S}$
are precisely the sets of leaves of the subtrees of~$T$,
there is an \MSO-transduction
which is the inverse of the above \CMSO[2]-transduction;
that is to say, given an $\{\ancestor\}$-structure~$\mathbb{T}$ representing the laminar tree, it
outputs the original set system~$\mathbb{S}$.
Namely,
\begin{enumerate}[nosep]
    \item An interpretation $\Phi_{\SET}(S)$
    that is true for a set $S$ when there exists an element $a$ such that an element $x$ is in $S$ if and only if $a$ is an ancestor of $x$.
    \item The filtering $\phi(x)$ keeping leaves only.
\end{enumerate}
\medbreak

We prove~\autoref{thm:transduce laminar tree} in \autoref{ssec:transduce laminar tree},
using the key tools developed in~\autoref{ssec:repr},
that allow us to represent each inner node of~$T$ with a pair of leaves from its subtree,
while keeping, for each leaf, the number of inner nodes it represents bounded.

\subsection{Inner node representatives}
\label{ssec:repr}
In this section, the root of any rooted tree is always an inner node (unless the tree is a unique node).
We fix a rooted tree~$T$,\xspace%
in which every inner node has at least two children (a necessary assumption that is satisfied by laminar trees),
and we let~$V$ denote the set of its nodes ($V=V(T)$)
and~$L\subseteq V$ denote the subset of its leaves ($L=L(T)$).
\smallbreak

Let~$S\subseteq V\setminus L$ be a set of inner nodes,
and let~$(\pi,\sigma)$ be a pair of injective mappings from~$S$ to~$L$.
We say that the pair~$(\pi,\sigma)$ \emph{identifies~$S$}
if for each~$s\in S$, $s$ is the least common ancestor of~$\pi(s)$ and~$\sigma(s)$.
For~$s\in S$ and~$x\in V$,
we say that~$x$ is \emph{$s$-requested in~$(\pi,\sigma)$}
if~$x$ lies on the path from~$\pi(s)$ to~$\sigma(s)$
(namely, on either of the paths from~$\pi(s)$ to~$s$ and from~$\sigma(s)$ to~$s$).
We say that~$x$ is \emph{requested in~$(\pi,\sigma)$} if it is $s$-requested for some~$s$.
The pair~$(\pi,\sigma)$ has \emph{unique request}
if every node~$x$ of~$T$ is requested at most once in~$(\pi,\sigma)$,
\ie, there exists at most one~$s\in S$ such that~$x$ is $s$-requested in~$(\pi,\sigma)$. Note that if $(\pi,\sigma)$ has unique request then the paths from $\pi(s)$ to $\sigma(s)$ are pairwise disjoint for all the $s\in S$.
We now state a few basic observations.
\begin{rem}
    \label{rk:identification basic observations}
    Let~$(\pi,\sigma)$ identifying some subset~$S$ of~$V\setminus L$.
    \begin{enumerate}
        \item\label{rk:identification basic observations:commute}
            The reversed pair $(\sigma,\pi)$ also identifies~$S$
            and has unique request whenever~$(\pi,\sigma)$ does.
        \item\label{rk:identification basic observations:sub}
            If~$S'\subset S$, then $(\pi_{|S'},\sigma_{|S'})$ identifies~$S'$,
            and has unique request if~$(\pi, \sigma)$ does.
        \item\label{rk:identification basic observations:self-request}
            For each~$s\in S$, $s$ is $s$-requested in~$(\pi,\sigma)$.
        \item\label{rk:identification basic observations:disjoint}
            If $(\sigma,\pi)$ has unique request,
            then~$\pi(S)$ and~$\sigma(S)$ are disjoint subsets of~$L$.
    \end{enumerate}
\end{rem}

\begin{lem}
    \label{lem:a reprensants}
    Let~$(\pi,\sigma)$ identifying some subset~$S$ of~$V\setminus L$ with unique request.
    Then for each~$a\in\pi(S)$, 
    the node~$\pi^{-1}(a)$ 
    is the least ancestor of~$a$ 
    which belongs to~$S$.
\end{lem}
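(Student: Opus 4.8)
The plan is to argue by contradiction, exploiting that unique request forbids any node from being requested by two distinct members of~$S$.

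First I would fix $a\in\pi(S)$ and set $s:=\pi^{-1}(a)$, which is well defined and unique because $\pi$ is injective; thus $a=\pi(s)$. Since $(\pi,\sigma)$ identifies~$S$, the node $s$ is the least common ancestor of $\pi(s)=a$ and $\sigma(s)$, so in particular $s$ is an ancestor of~$a$ and $s\in S$. Hence $s$ is already one of the ancestors of~$a$ lying in~$S$, and it remains only to check that it is the least, \ie\ the one closest to~$a$ (deepest), among them. Because the ancestors of~$a$ form a chain under the ancestor relation, any ancestor of~$a$ strictly closer to~$a$ than~$s$ is necessarily a proper descendant of~$s$; so it suffices to rule out the existence of an $s'\in S$ that is simultaneously a proper descendant of~$s$ and an ancestor of~$a$.

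Suppose such an $s'$ exists. Then on the ancestor chain of~$a$ the node $s'$ lies strictly between~$a$ and~$s$, so $s'$ belongs to the path from $\pi(s)=a$ to~$s$, which is part of the path from $\pi(s)$ to $\sigma(s)$. Consequently $s'$ is $s$-requested in $(\pi,\sigma)$. On the other hand, by the self-request observation (Remark~\ref{rk:identification basic observations}), $s'$ is also $s'$-requested. Since $s'\neq s$ (as $s'$ is a \emph{proper} descendant of~$s$), the node $s'$ is requested by two distinct elements of~$S$, contradicting the unique request hypothesis. Therefore no such $s'$ exists, and $s=\pi^{-1}(a)$ is the least ancestor of~$a$ belonging to~$S$.

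The argument is short, and the only place demanding care is the geometric claim that $s'$ sits on the path from~$a$ to~$s$: this rests on $s$ being an \emph{ancestor} of~$a$ (read off from $s$ being an LCA with $a$ as one argument) together with the totally ordered structure of the ancestor chain, so that ``closer to~$a$ than~$s$'' coincides with ``proper descendant of~$s$ that is an ancestor of~$a$''. Once this identification is pinned down, the double-request contradiction is immediate, and in particular no case analysis on the value of~$\sigma(s)$ is needed.
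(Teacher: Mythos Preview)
Your proof is correct and follows essentially the same route as the paper: the paper lets $y$ be the least ancestor of~$a$ in~$S$, observes that $y$ is both $s$-requested (since $y$ lies on the path from $a=\pi(s)$ to~$s$) and $y$-requested, and concludes $y=s$ by unique request. Your contradiction-based variant with a hypothetical closer $s'$ is a trivial rephrasing of the same argument.
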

\begin{proof}
    Let~$a\in\pi(S)$ and let~$s=\pi^{-1}(a)$.
    By definition, $s$ is an ancestor of~$a$ which belongs to~$S$.
    Let~$y$ be the least ancestor of $a$ that is contained in $S$. As $s$ is an ancestor of $a$ belonging to $S$, $y$ must be a descendant of $s$. Hence, $y$ is $s$-requested in~$(\pi,\sigma)$. Additionally, by \autoref{rk:identification basic observations:self-request} of \autoref{rk:identification basic observations},
    $y$ is $y$-requested in~$(\pi,\sigma)$.
    Since~$(\pi,\sigma)$ has unique request, $y=s$.
    Thus, $s$ is the least ancestor of~$a$ which belongs to~$S$.
\end{proof}

Notice that,
by \autoref{rk:identification basic observations:commute} of \autoref{rk:identification basic observations},
a similar result holds for each~$b\in\sigma(S)$.
It follows that the sets~$\pi(S)$ and~$\sigma(S)$
characterize~$(\pi,\sigma)$.
\begin{lem}
    Let~$S\subseteq V\setminus L$,
    and $(\pi,\sigma)$ and $(\pi',\sigma')$
    be two pairs of injections from~$S$ to~$L$
    identifying~$S$ with unique request.
    If~$\pi(S)=\pi'(S)$ and~$\sigma(S)=\sigma'(S)$,
    then $\pi=\pi'$ and~$\sigma=\sigma'$.
\end{lem}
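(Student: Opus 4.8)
The plan is to exploit the fact that \autoref{lem:a reprensants} gives an \emph{intrinsic} description of the inverse map $\pi^{-1}$: it characterises $\pi^{-1}(a)$ purely in terms of $a$, the set $S$, and the tree $T$, with no reference to the particular identifying pair $(\pi,\sigma)$. Once this is recognised, the equality of the two pairs follows just by comparing inverses pointwise.

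Concretely, I would first fix an arbitrary element $a\in\pi(S)=\pi'(S)$. Applying \autoref{lem:a reprensants} to the pair $(\pi,\sigma)$ shows that $\pi^{-1}(a)$ is the least ancestor of $a$ lying in $S$. Applying the same lemma to $(\pi',\sigma')$ — which is legitimate, since by hypothesis it too identifies $S$ with unique request — shows that $(\pi')^{-1}(a)$ is likewise the least ancestor of $a$ lying in $S$. The right-hand side in both cases is a quantity determined solely by $a$, $S$, and $T$, so the two left-hand sides must coincide: $\pi^{-1}(a)=(\pi')^{-1}(a)$.

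Since $\pi$ and $\pi'$ are injections from $S$ to $L$ with the same image $\pi(S)=\pi'(S)$, each is a bijection from $S$ onto this common set, and the previous step shows that their inverses agree at every point of it. Hence $\pi^{-1}=(\pi')^{-1}$ as maps, and therefore $\pi=\pi'$. For the second equality I would repeat the argument with $\sigma,\sigma'$ in place of $\pi,\pi'$: by \autoref{rk:identification basic observations:commute} of \autoref{rk:identification basic observations} the reversed pairs $(\sigma,\pi)$ and $(\sigma',\pi')$ also identify $S$ with unique request, so the version of \autoref{lem:a reprensants} obtained for them characterises $\sigma^{-1}(b)$ (\resp $(\sigma')^{-1}(b)$) as the least ancestor in $S$ of each $b\in\sigma(S)=\sigma'(S)$, whence $\sigma=\sigma'$.

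There is essentially no hard step here: the entire content is already packaged into \autoref{lem:a reprensants}, which is why the preceding discussion could assert that $\pi(S)$ and $\sigma(S)$ characterise $(\pi,\sigma)$. The only point requiring a moment of care is to invoke that lemma \emph{separately} for each of the two pairs and to observe that its conclusion is phrased entirely in terms of $S$ and $T$; this is precisely what forces the two inverse maps to coincide on the shared image, after which the injectivity hypothesis upgrades the equality of inverses to the equality of the maps themselves.
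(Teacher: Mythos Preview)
Your proposal is correct and follows essentially the same approach as the paper: both arguments invoke \autoref{lem:a reprensants} to see that $\pi^{-1}(a)$ and $(\pi')^{-1}(a)$ are each the least ancestor of $a$ in $S$, hence equal, and then appeal to \autoref{rk:identification basic observations:commute} for the $\sigma$ part. The only cosmetic difference is that the paper starts from $s\in S$ and sets $a=\pi(s)$, whereas you start from $a$ in the common image; the logical content is identical.
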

\begin{proof}
    Let~$s\in S$, $a=\pi(s)$, and~$s'=\pi'^{-1}(a)$.
    Both~$s$ and~$s'$ are the least ancestor of~$a$ which belongs to~$S$,
    hence~$s=s'$ and~$\pi'(s)=a$.
    Thus~$\pi=\pi'$.
    By \autoref{rk:identification basic observations:commute} of \autoref{rk:identification basic observations},
    we also obtain that~$\sigma=\sigma'$.
\end{proof}

Let~$A$ and~$B$ be two subsets of~$L$ that are disjoint and of same cardinality.
We call such a pair~$(A, B)$ a \emph{bi-colouring}.
We say that~$(A,B)$ \emph{identifies~$S$}
if there exists a pair~$(\pi,\sigma)$ identifying~$S$ with unique request
such that~$\pi(S)=A$ and~$\sigma(S)=B$.
By the previous lemma, for a fixed set $S\subseteq V\setminus L$  and a fixed bi-colouring $(A,B)$ of $L$,
the pair $(\pi,\sigma)$ identifying $S$ with $\pi(S)=A$ and $\sigma(S)=B$ is unique when it exists.
We will also prove that~$S$ is actually uniquely determined from~$(A,B)$.
Before that, we state the following technical lemma.
\begin{lem}
    \label{lem:node cases in identification}
    Let~$(A,B)$ identify some subset~$S$ of~$V\setminus L$
    through a pair~$(\pi,\sigma)$ of injections having unique request.
    Then,
    for each inner node~$x$,
    exactly one of the three following cases holds:
    \begin{enumerate}
        \item
            $x\notin S$,
            $x$ is not requested in~$(\pi,\sigma)$,
            and for each child~$c$ of~$x$,
            $\card{A\cap V(T_c)}=\card{B\cap V(T_c)}$;
        \item
            $x\notin S$,
            $x$ is requested in~$(\pi,\sigma)$,
            and there exists one leaf~$z\in (A\cup B)\cap V(T_x)$
            such that, for each child~$c$ of~$x$,
            $\card{(A\setminus\{z\})\cap V(T_c)}=\card{(B\setminus\{z\})\cap V(T_c)}$;
        \item
            $x\in S$,
            $x$ is requested in~$(\pi,\sigma)$,
            and there exists two leaves
            $a\in A\cap V(T_x)$ and $b\in B\cap V(T_x)$
            such that, for each child~$c$ of~$x$,
            $\card{(A\setminus\{a\})\cap V(T_c)}=\card{(B\setminus\{b\}\cap V(T_c))}$
            and~$\{a,b\}\nsubseteq V(T_c)$.
    \end{enumerate}
    In particular, $x\in S$ \ifof there exists 
            $a\in A\cap V(T_x)$ and $b\in B\cap V(T_x)$
            such that, for each child~$c$ of~$x$,
            $\card{(A\setminus\{a\})\cap V(T_c)}=\card{(B\setminus\{b\}\cap V(T_c))}$
            and~$\{a,b\}\nsubseteq V(T_c)$.
\end{lem}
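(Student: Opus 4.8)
The plan is to reduce all three cases, as well as the final equivalence, to a single bookkeeping of the ``defect'' $d(c) := \card{A\cap V(T_c)} - \card{B\cap V(T_c)}$ attached to each child $c$ of $x$. Since each leaf of $A = \pi(S)$ (resp.\ $B = \sigma(S)$) that lies in $V(T_x)$ belongs to exactly one child subtree, $d(c)$ is the sum of the $\pm1$ contributions of those pairs $s\in S$ whose image leaves fall into $V(T_c)$. The three cases of the lemma will correspond exactly to the three possible defect profiles of the children of $x$, and the witness leaves $z$ (resp.\ $a,b$) are just the names of the leaves responsible for a nonzero defect.

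First I would classify, for a fixed pair $s\in S$, how its two leaves $\pi(s)\in A$ and $\sigma(s)\in B$ distribute among the children of $x$, according to the position of $s$ relative to $x$. If $s$ is a proper descendant of $x$, both leaves lie in the same child subtree (the one containing $s$), so $s$ contributes $0$ to every $d(c)$. If $s$ is incomparable to $x$, or a proper ancestor of $x$ such that $x$ does not lie on the path from $\pi(s)$ to $\sigma(s)$, then neither image leaf lies in $V(T_x)$ and $s$ contributes nothing. The only pairs that can create a nonzero defect are: (i) $s = x$, which requires $x\in S$ and, since $\lca(\pi(x),\sigma(x)) = x$, places $\pi(x)$ and $\sigma(x)$ in two distinct children $c_a\neq c_b$, contributing $+1$ to $d(c_a)$ and $-1$ to $d(c_b)$; and (ii) a proper ancestor $s$ of $x$ having exactly one image leaf in $V(T_x)$, which is precisely the situation in which $x$ is $s$-requested, contributing $+1$ or $-1$ to the single child subtree containing that leaf. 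The crucial point, and the step I expect to require the most care, is to show that (i) and (ii) are governed precisely by the request status of $x$: by self-request (\autoref{rk:identification basic observations:self-request} of \autoref{rk:identification basic observations}), $x\in S$ forces $x$ to be $x$-requested, and the unique-request hypothesis then forbids any further pair from requesting $x$; conversely, a proper-ancestor pair requests $x$ if and only if $x$ is requested and $x\notin S$.

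This yields a trichotomy on the two Boolean conditions ``$x\in S$'' and ``$x$ is requested'', which is a genuine partition because $x\in S$ implies $x$ is requested. If $x\notin S$ and $x$ is not requested, no pair contributes and $d(c)=0$ for every child, giving case~1. If $x\notin S$ but $x$ is requested, exactly one pair of type (ii) contributes, so a single child has defect $\pm1$; taking $z$ to be the unique image leaf of that pair lying in $V(T_x)$ and checking that deleting $z$ from whichever of $A,B$ it belongs to cancels that defect gives case~2. If $x\in S$, only the self-pair contributes, producing defects $+1$ and $-1$ on two distinct children $c_a,c_b$; taking $a=\pi(x)$ and $b=\sigma(x)$, the condition $\{a,b\}\nsubseteq V(T_c)$ holds because $c_a\neq c_b$, and the adjusted counts balance on $c_a$, on $c_b$, and trivially elsewhere, giving case~3. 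As the three defect profiles are pairwise incompatible, exactly one case holds.

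For the final equivalence, the forward direction is case~3. For the converse I would argue contrapositively: if $x\notin S$ then we are in case~1 or~2, and I claim no witnesses $a,b$ as in case~3 can exist. Indeed, any candidate pair with $\{a,b\}\nsubseteq V(T_c)$ for all $c$ forces $a$ and $b$ into distinct children $c_a\neq c_b$, and the balanced-count conditions at $c_a$ and at $c_b$ would force $d(c_a)=+1$ and $d(c_b)=-1$ simultaneously. But in case~1 all defects vanish, and in case~2 at most one child has nonzero defect, so either profile is contradicted. Hence the witness condition characterizes exactly case~3, that is, $x\in S$.
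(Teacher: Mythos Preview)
Your proof is correct and takes essentially the same approach as the paper's: both hinge on the observation that elements of $S$ that are proper descendants of $x$ contribute nothing to the count imbalance, while unique request limits the remaining contributions to at most one pair; the paper packages this via the restriction $S'=S\setminus(V(T_x)\setminus\{x\})$ and the sets $A'_x,B'_x$ of images landing in $V(T_x)$, whereas your defect function $d(c)$ is an equivalent bookkeeping device leading to the same three-way case split. Your derivation of the final ``if and only if'' via incompatible defect profiles is slightly more explicit than the paper's, which leaves that converse implicit.
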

For an illustration of the different cases see \autoref{fig:colouringBinaryCotree} where $x_1$ is an example of a node satisfying the first case, $x_2$ an example for the second case and $x_3$ for the third.
\begin{proof}
    Let~$x$ be an inner node.
    We consider the set~$S'=S\setminus(V(T_x)\setminus\{x\})$ of all nodes which are not proper descendants of $x$
    and the restrictions~$\pi'$ and~$\sigma'$ of, respectively,~$\pi$ and~$\sigma$ to~$S'$.
    By \autoref{rk:identification basic observations:sub} of \autoref{rk:identification basic observations}
    $(\pi',\sigma')$ identify~$S'$ with unique request.
    We denote~$A'=\pi'(S')$ and~$B'=\sigma'(S')$, thus~$(A',B')$ identify~$S'$.
    Let $A'_x=A'\cap V(T_x)$ and $B'_x=B'\cap V(T_x)$ be the sets of images under $\pi'$, $\sigma'$ restricted to $A'$ and $B'$, respectively, contained in $T_x$.
    In case a leaf~$a$ is an element of~$A'_x$,
    the element~$s_a=\pi^{-1}(a)$ is an ancestor of~$x$
    because it is an ancestor of~$a\in V(T_x)$
    and belongs to~$S'$ whence not to~$V(T_x)\setminus\{x\}$.
    Therefore,~$x$ is~$s_a$-requested.
    As~$(\pi',\sigma')$ has unique request,
    there exists at most one element in~$A'_x$.
    Similarly, $B'_x$ has size at most~$1$.
    Moreover, $A'_x\cap B'_x=\emptyset$
    by~\autoref{rk:identification basic observations:disjoint} of \autoref{rk:identification basic observations}.
    We thus we have three cases:
    \begin{description}
        \item[Case~$A'_x\cup B'_x=\emptyset$]
            Then there is no~$s\in S'$ such that~$\pi(s)\in V(T_x)$ or~$\sigma(s)\in V(T_x)$.
            Hence, $x$ is not requested in~$(\pi,\sigma)$, in particular, $x\notin S$.

            Let~$c$ be a child of~$x$.
            If~$\card{A\cap V(T_{c})}\neq\card{B\cap V(T_c)}$,
            then there exists~$a\in(A\cup B)\cap V(T_{c})$
            such that,
            assuming without loss of generality that~$a\in A$ and denoting~$s_a=\pi^{-1}(a)$,
            $\sigma(s_a)\notin V(T_{c})$.
            Hence, $s_a$ is a proper ancestor of~$c$,
            thus, equivalently, an ancestor of~$x$,
            implying that~$x$ is $s_a$-requested in~$(\pi,\sigma)$.
            This contradicts the above argument.
            So, $\card{A\cap V(T_c)}=\card{B\cap V(T_c)}$ for each child~$c$ of~$x$.
        \item[Case~$A'_x\cup B'_x=\{a\}$]
            Assume, without loss of generality, that~$a\in A$,
            and denote $s_a=\pi^{-1}(a)$ and~$b=\sigma(s_a)$.
            We have that~$s_a$ is a proper ancestor of~$x$,
            since it is the least common ancestor of~$a\in V(T_x)$ and~$b\notin V(T_x)$.
            Thus, $x$ is $s_a$-requested and~$s_a\neq x$ so~$x\notin S$.

            Let~$c$ be a child of~$x$.
            If~$\card{A\cap V(T_{c})}\neq\card{B\cap V(T_c)}$,
            then it means that there exists $z\in(A\cup B)\cap V(T_{c})$,
            such that
            either~$z\in A$ and $s_z=\pi^{-1}(z)\notin V(T_{c})$,
            or~$z\in B$ and~$s_z=\sigma^{-1}(z)\notin V(T_{c})$.
            In both cases, $s_z$ is a proper ancestor of~$c$,
            whence an ancestor of~$x$.
            Thus, $x$ is $s_z$-requested in~$(\pi,\sigma)$.
            However, $x$ is $s_a$-requested in~$(\pi,\sigma)$ which has unique request,
            hence~$s_z=s_a$.
            If~$z\in B$, it follows that~$z=b$, which contradicts the fact~$b\notin V(T_x)$.
            Hence, $z\in A$ and thus, $z=\pi(s_a)=a$.
            On the other hand, if $\card{A\cap V(T_{c})}=\card{B\cap V(T_c)}$ then $a\notin V(T_c)$. If $a\in V(T_c)$ then $c$ would be $s_a$ requested and additionally (as $\card{(A\setminus \{a\}\cap V(T_{c})}<\card{B\cap V(T_c)}$) there has to be $z\in T_c$,  $z\in B$ for which $s_z=\sigma(z)^{-1}$ is an ancestor of $c$. In particular, $s_a\neq s_z$ as $c$ is the least common ancestor of $a$ and $z$, but $s_a$ is a proper ancestor of $c$. Hence, $c$ is requested by both $s_a$ and $s_z$ contradicting the assumption that $(\pi,\sigma)$ has unique requests.
            Therefore, $\card{(A\setminus\{a\})\cap V(T_{c})}=\card{B\cap V(T_c)}$
            for each child~$c$ of~$x$. 
        \item[Case~$A'_x=\{a\}$ and~$B'_x=\{b\}$]
            Let~$s_a=\pi^{-1}(a)$ and~$s_b=\sigma^{-1}(b)$.
            The node $x$ is $s_a$-requested and $s_b$-requested in $(\pi,\sigma)$,
            so, by the unique request property, $s_a=s_b$.
            Since~$s_a$ is the least common ancestor of~$a$ and~$b$,
            it belongs to~$V(T_x)$
            whence~$s_a=x$ implying~$x\in S$.

            Let~$c$ be a child of~$x$.
            If~$\card{A\cap V(T_c)}\neq\card{B\cap V(T_c)}$,
            then there exists~$z\in (A\cup B)\cap V(T_{c})$
            such that either~$z\in A$ and~$s_z=\pi^{-1}(z)\notin V(T_{c})$,
            or~$z\in B$ and~$s_z=\sigma^{-1}(z)\notin V(T_{c})$.
            In both cases, $s_z$ is a proper ancestor of~$c$,
            whence an ancestor of~$x$,
            and thus $x$ is $s_z$-requested in~$(\pi,\sigma)$.
            However, $x$ is $x$-requested in~$(\pi,\sigma)$ which has unique request,
            hence~$s_z=x$ and thus~$z\in\{a,b\}$. On the other hand, similarly to the previous case, if~$\card{A\cap V(T_c)}=\card{B\cap V(T_c)}$,
            then neither $a$ nor $b$ can be contained in $V(T_c)$.
            Therefore, $\card{(A\setminus\{a\})\cap V(T_{c})}=\card{(B\setminus\{b\})\cap V(T_c)}$
            for each child~$c$ of~$x$.
            Because the least common ancestor of~$a$ and~$b$ is~$x$,
            there is no child~$c$ of~$x$ containing both~$a$ and~$b$ as leaves,
            \ie, $\{a,b\}\nsubseteq V(T_c)$.
    \end{description}
    This concludes the proof of the statement.
\end{proof}

The next lemma shows that for each bi-colouring~$(A, B)$,
there exists at most one set~$S$ of inner nodes
identified by~$(A, B)$.
\begin{lem}
    Let~$(A,B)$ be two disjoint subsets of~$L$
    and let~$S$ and~$S'$ be two subsets of~$V\setminus L$.
    If~$(A,B)$ identify both~$S$ and~$S'$, then~$S=S'$.
\end{lem}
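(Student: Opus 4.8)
The plan is to read off the statement directly from the characterisation recorded in the ``In particular'' clause of \autoref{lem:node cases in identification}. The decisive feature of that clause is that it expresses the membership ``$x\in S$'' \emph{intrinsically}, using only the bi-colouring~$(A,B)$ and the tree~$T$: namely, $x\in S$ holds exactly when there exist leaves $a\in A\cap V(T_x)$ and $b\in B\cap V(T_x)$ such that $\card{(A\setminus\{a\})\cap V(T_c)}=\card{(B\setminus\{b\})\cap V(T_c)}$ and $\{a,b\}\nsubseteq V(T_c)$ for every child~$c$ of~$x$. Crucially, this condition refers neither to the pair of injections~$(\pi,\sigma)$ witnessing the identification, nor to~$S$ itself; call it condition~$(\star_x)$.

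First I would unfold the hypothesis. Since $(A,B)$ identifies~$S$, there is a pair~$(\pi,\sigma)$ of injections from~$S$ to~$L$, having unique request, which identifies~$S$ and satisfies $\pi(S)=A$ and $\sigma(S)=B$. Likewise, since $(A,B)$ identifies~$S'$, there is such a pair~$(\pi',\sigma')$ for~$S'$ with $\pi'(S')=A$ and $\sigma'(S')=B$. Applying \autoref{lem:node cases in identification} to the identification of~$S$ via~$(\pi,\sigma)$ shows that, for every inner node~$x$, we have $x\in S$ \ifof $(\star_x)$ holds. Applying the same lemma to the identification of~$S'$ via~$(\pi',\sigma')$ shows that $x\in S'$ \ifof the \emph{very same} condition $(\star_x)$ holds.

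Consequently, for every inner node~$x$ we obtain $x\in S$ \ifof $x\in S'$. Since $S$ and~$S'$ are both subsets of $V\setminus L$, that is, they contain inner nodes only, this equivalence across all inner nodes forces $S=S'$, as required.

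I expect essentially no obstacle here: all the combinatorial work was carried out in \autoref{lem:node cases in identification}, whose ``In particular'' clause was stated precisely so that uniqueness becomes a one-line corollary. The only point requiring care is to invoke the intrinsic, $(\pi,\sigma)$-free formulation rather than the three-case version (which is phrased relative to a fixed~$(\pi,\sigma)$), so that one and the same condition~$(\star_x)$ can be shared by the two distinct identifications of~$S$ and~$S'$.
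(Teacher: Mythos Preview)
Your argument is correct and rests on the same tool as the paper's proof, namely the intrinsic characterisation of membership in~$S$ provided by \autoref{lem:node cases in identification}. The paper's version is slightly less direct: instead of invoking the ``In particular'' clause verbatim, it proceeds by contradiction and extracts from cases~(1)--(3) a parity count (the number of children~$c$ of~$s$ with $\card{(A\cup B)\cap V(T_c)}$ odd is~$2$ when $s\in S$, but~$0$ or~$1$ when $s\notin S'$), yielding the same contradiction. Your route is cleaner precisely because you use the $\pi,\sigma$-free formulation as a black box; the paper essentially re-derives a weaker parity consequence of it.
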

\begin{proof}
    We proceed by contradiction and thus assume~$S\neq S'$.
    Let~$s\in S\setminus S'$.
    Since $s\in S$ by \autoref{lem:node cases in identification}(3),
    there are two leaves $a\in A\cap V(T_c)$ and $b\in B\cap V(T_c)$ such that, for each child~$c$ of~$x$,
    $\card{(A\setminus\{a\})\cap V(T_c)}=\card{(B\setminus\{b\}\cap V(T_c))}$
    and~$\{a,b\}\nsubseteq V(T_c)$. Let $c_a$ and $c_b$ be the two children of $s$ with $a\in V(T_{c_a})$ and $b\in V(T_{c_b})$. Since $\{a,b\}\nsubseteq V(T_c)$ for every child $c$ of $s$, we get $c_a\neq c_b$. In particular, $\card{A\cap V(T_{c_a})}=\card{(B\cap V(T_{c_a}))}+1$,  $\card{A\cap V(T_{c_b})}+1=\card{(B\cap V(T_{c_b}))}$ and $\card{A\cap V(T_{c})}=\card{(B\cap V(T_{c}))}$ for every  other children $c$ of $s$. Since $A$ and $B$ are disjoint by \autoref{rk:identification basic observations:disjoint} of \autoref{rk:identification basic observations}, the number of children~$c$ of~$s$
    for which the set~$(A\cup B)\cap V(T_{c})$ has odd size is~$2$.
    
    On the other hand, since~$s\notin S'$, \autoref{lem:node cases in identification}(1) and (2) imply, using a similar argument to above, that there are no children $c$ of $s$ for which $(A\cup B)\cap V(T_{c})$ has odd size (in case (1)) or there is~$1$ child $c$ of $s$ for which $(A\cup B)\cap V(T_{c})$ has odd size. This contradicts our conclusion from the previous paragraph that $s$ should have~$2$ such children. 
    
\end{proof}

When~$(A,B)$ identifies a set~$S$,
we call \emph{$A$-representative} (\resp \emph{$B$-representative}) of~$s\in S$
the leaf~$\pi(s)\in A$ (\resp $\sigma(s)\in B$),
where~$(\pi,\sigma)$ witnessing that~$(A,B)$ identifies~$S$.
An example of bi-colouring identifying a subset of inner nodes
is given in \autoref{fig:colouringBinaryCotree}.
\begin{figure}
    \includegraphics[width=\textwidth]{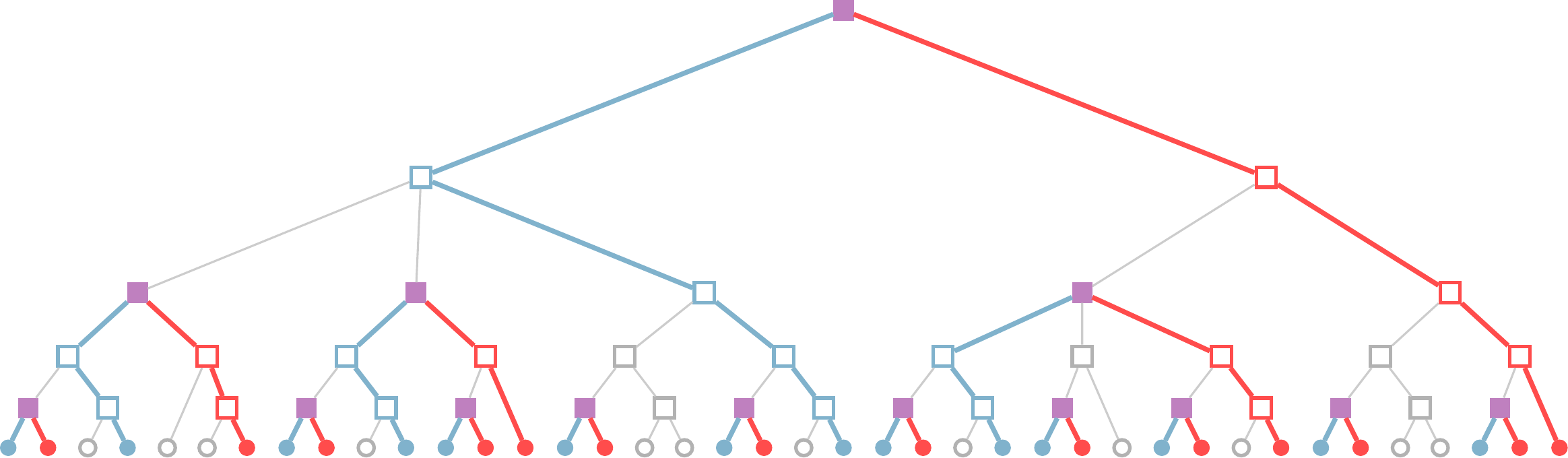}%
    \caption{%
        Illustration of a bi-colouring~$(A,B)$
        identifying some set $S\subseteq V\setminus L$
        in a binary tree~$T$.
        Leaves from~$A$ are blue diamonds,
        leaves from~$B$ are  orange filled circles,
        and inner nodes from~$S$ are filled purple squares.
        Furthermore, the two paths
        connecting an inner node~$s\in S$
        to its~$A$- and~$B$-representatives
        are coloured blue and orange, respectively.
        Nodes labelled $x_1,x_2,x_3$ constitute examples illustrating the different cases in \autoref{lem:node cases in identification}.%
    }%
    \label{fig:colouringBinaryCotree}
\end{figure}
\bigbreak

Not every set of inner nodes
has a bi-colouring identifying it.
To ensure that such a pair exists,
we consider \emph{thin sets of inner nodes}. While thin sets always have bi-colourings identifying them, it is also guaranteed that the set of inner nodes can be partitioned into only 4 thin sets.
A subset~$X\subseteq V\setminus L$ is \emph{thin}
when, for each~$x\in X$ not being the root,
on the one hand, the parent~$p_x$ of~$x$ does not belong to~$X$,
and on the other hand, $x$ admits at least one sibling (including possible leaves) that does not belong to~$X$.
Having a thin set~$X$ allows to find branches avoiding it.
\begin{lem}
    \label{lem:thin => path avoiding}
    Let~$X\subseteq V\setminus L$ and~$s\in V\setminus X$.
    If~$X$ is thin, then there exists a leaf~$t\in V(T_s)$
    such that the path from~$t$ to~$s$ avoids~$X$
    (\ie, none of the nodes along this path belong to~$X$).
\end{lem}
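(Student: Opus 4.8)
The plan is to build the desired path by descending greedily from~$s$ toward a leaf, maintaining the invariant that every node visited lies outside~$X$. The whole argument rests on a single local observation about thin sets, which I would isolate as the key step.

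\textbf{Key step.} I claim that every inner node~$v\in V(T_s)\setminus X$ has at least one child that does not belong to~$X$. To see this, distinguish two cases. If none of the children of~$v$ lies in~$X$, then since~$v$ is inner it has a child (in fact at least two), and any child works. Otherwise~$v$ has some child~$c\in X$; since~$c$ has a parent it is not the root, so the thinness of~$X$ guarantees that~$c$ admits a sibling --- necessarily another child of~$v$ --- that does not belong to~$X$. In either case~$v$ has a child outside~$X$. Note that the parent clause of thinness is not needed here; the sibling clause alone drives the argument.

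\textbf{Descent.} With this in hand, I would define a sequence $s=v_0,v_1,v_2,\ldots$ as follows: set $v_0=s$, which lies outside~$X$ by hypothesis, and as long as~$v_i$ is inner, let~$v_{i+1}$ be a child of~$v_i$ not in~$X$, which exists by the key step. Each~$v_{i+1}$ is a strict descendant of~$v_i$ and remains in~$V(T_s)$, so since~$T$ is finite the sequence must terminate, and it can only terminate at a leaf~$t=v_k$. As $X\subseteq V\setminus L$, this leaf lies outside~$X$, and by construction every node $v_0,\ldots,v_k$ on the path from~$s$ down to~$t$ avoids~$X$. This is precisely the path from~$t$ to~$s$, so~$t$ is the desired leaf. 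If~$s$ is already a leaf, the trivial path $t=s$ works.

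The argument becomes essentially immediate once the key step is established, so the only real content is the local claim above. The main --- and in fact the only --- point to get right is that the sibling clause in the definition of thin sets is exactly what lets the descent never get stuck: whenever the subtree we might want to enter is blocked because its root is in~$X$, thinness provides an alternative child of the current node lying outside~$X$, so we can always continue downward until we hit a leaf.
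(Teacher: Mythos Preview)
Your proof is correct and follows essentially the same approach as the paper: both rely on the observation that every inner node outside~$X$ has a child outside~$X$ (your ``key step''), and then descend from~$s$ to a leaf along such children---you phrase this as an iterative greedy descent, while the paper phrases it as an induction on the height of~$T_s$. If anything, your justification of the key step via the two cases is more explicit than the paper's, which simply asserts it.
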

\begin{proof}
    If~$T_s$ has height~$0$,
    then~$s$ is a leaf
    and taking~$t=s$ trivially gives the expected path.
    Otherwise,
    $s$ is an inner node and,
    because~$X$ is thin,
    $s$ has at least one child~$c_s$ not belonging to~$X$.
    By induction, there is a path from some leaf~$t\in V(T_{c_s})$ to~$c_s$ avoiding~$X$ and,
    since~$s\notin X$,
    this path could be extended into a path from~$t$ to~$s$ avoiding~$X$.
\end{proof}

The following lemma allows to identify every thin set.
\begin{lem}
    \label{lem:thin => identifiable}
    If~$X$ is a thin set,
    then there exists a pair~$(\pi,\sigma)$
    of injections from~$X$ to~$L$
    that has unique request and that identifies~$X$.
\end{lem}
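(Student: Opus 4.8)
The plan is to build the pair $(\pi,\sigma)$ by treating each node of $X$ independently, relying on \autoref{lem:thin => path avoiding} to route two disjoint branches downward from each node of $X$ to leaves while avoiding $X$. First I would fix some $s\in X$. Since every inner node of $T$ has at least two children, $s$ has two distinct children $c_1\neq c_2$; and because $X$ is thin, the parent of any non-root element of $X$ lies outside $X$, so in particular neither $c_1$ nor $c_2$ lies in $X$. Applying \autoref{lem:thin => path avoiding} to $c_1$ and to $c_2$ yields leaves $\ell_1\in V(T_{c_1})$ and $\ell_2\in V(T_{c_2})$ such that the path from $\ell_i$ to $c_i$ avoids $X$. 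I then set $\pi(s)=\ell_1$ and $\sigma(s)=\ell_2$. Since $\ell_1$ and $\ell_2$ lie in distinct child-subtrees of $s$, we have $s=\lca(\pi(s),\sigma(s))$, so $(\pi,\sigma)$ identifies $X$; and the path $P_s$ from $\pi(s)$ to $\sigma(s)$, which runs $\ell_1\to\cdots\to c_1\to s\to c_2\to\cdots\to\ell_2$, meets $X$ only at $s$ itself.

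The key point, which I would isolate as a claim inside the proof, is that this local avoidance property forces the paths $(P_s)_{s\in X}$ to be pairwise vertex-disjoint, which is exactly the unique request condition. Suppose toward a contradiction that $P_s$ and $P_{s'}$ share a node $v$ for some $s\neq s'$ in $X$. Each $P_s$ consists of two branches descending from $s$, so every node of $P_s$ is a descendant of $s$; hence $v$ is a common descendant of both $s$ and $s'$, which makes $s$ and $s'$ comparable. Assume without loss of generality that $s'$ is a proper descendant of $s$. Then $v$ is a descendant of $s'$, and the unique path in $T$ from $s$ down to $v$ must pass through $s'$; since this path is a sub-path of $P_s$, we get $s'\in P_s$. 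But $s'\in X\setminus\{s\}$ and $P_s$ meets $X$ only at $s$, a contradiction. Therefore the paths are pairwise disjoint.

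Finally, injectivity of $\pi$ and $\sigma$ (and disjointness of their images) comes for free: if two of the chosen leaves coincided, that leaf would lie on two distinct paths $P_s,P_{s'}$, contradicting disjointness; and $\pi(s)\neq\sigma(s)$ because they sit in different child-subtrees of $s$. This yields injections $\pi,\sigma\colon X\to L$ identifying $X$ with unique request, as required. I expect the main obstacle to be precisely the disjointness argument of the second paragraph: the subtlety is that two paths can only collide at a node that is a descendant of both their apexes, and turning this into a contradiction hinges on the observation that a collision would drag the lower apex $s'$ onto the higher path $P_s$, violating the avoidance guarantee supplied by \autoref{lem:thin => path avoiding}. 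The node-by-node routing of the first paragraph is arranged deliberately so that every path carries this avoidance property.
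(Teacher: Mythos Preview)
Your proof is correct and uses the same construction as the paper: for each $s\in X$, route two branches through distinct children of $s$ (which are outside $X$ since their parent $s$ lies in $X$) down to leaves via \autoref{lem:thin => path avoiding}, so that $P_s$ meets $X$ only at $s$. The paper verifies unique request by induction on~$|X|$, adding one $s\in X$ of minimal depth at a time and using minimality to argue the new path avoids all previous requests; you instead give a direct global disjointness argument, observing that a shared vertex would force the lower apex $s'$ onto the higher path $P_s$. Both arguments are short and rest on the same key fact that each $P_s$ meets $X$ only at $s$; your direct version simply trades the induction scaffolding for the comparability-of-ancestors observation.
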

\begin{proof}
    We proceed by induction on the size of~$X$.
    If~$X=\emptyset$, the result is trivial.
    Let~$n\in\mathbb{N}$ and suppose that for every thin set of size~$n$
    there exists a pair of injections identifying it with unique request.
    Let~$X$ be a thin set of size~$n+1$,
    and let~$s\in X$ be of minimal depth.
    Clearly, $X\setminus\{s\}$ is thin
    and thus there exists, by induction, a pair~$(\pi,\sigma)$
    of injections from~$X\setminus\{s\}$ to~$L$ identifying~$X\setminus\{s\}$ with unique request.
    Since~$X$ is thin and~$s\in X$,
    we can find two distinct children~$c_a$ and~$c_b$ of~$s$
    not belonging to~$X$.%
    \footnote{\label{fn:out-degree at least 2}Remember that every inner node of~$T$ has at least two children (including possible leaves).}
    Then, by \autoref{lem:thin => path avoiding},
    there exists a leaf~$a\in V(T_{c_a})$ (\resp a leaf~$b\in V(T_{c_b})$)
    such that the path from~$a$ to~$c_a$ (\resp from~$b$ to~$c_b$)
    avoids~$X$.
    In particular, for each node~$y$ along these paths,
    since~$y$ has no ancestor that belongs to~$X$ but~$s$,
    $y$ is not requested in~$(\pi,\sigma)$.
    Hence,
    extending~$\pi$ (\resp~$\sigma$) in such a way that,
    besides mapping each~$x\in X\setminus\{s\}$ to~$\pi(x)$ (\resp to~$\sigma(x)$),
    it maps~$s$ to~$a$ (\resp to~$b$),
    we obtain a pair~$(\hat{\pi},\hat{\sigma})$
    of injections from~$X$ to~$L$
    that identifies~$X$ with unique request.
\end{proof}

A family $F=(A_1,B_1),\ldots,(A_n,B_n)$ of bi-colourings
\emph{identifies} a set~$S\subseteq V\setminus L$,
if there exists a partition~$(S_1,\ldots,S_n)$ of~$S$
such that, for each~$i\in[n]$, $(A_i,B_i)$ identifies~$S_i$.
Whenever $S=V\setminus L$ we say that~$F$ \emph{identifies~$T$}.
Note that while $A_i$ and $B_i$ must be disjoint, for $i\not= j$ it is possible (and sometimes even necessary) that $A_i$ and $A_j$ or $A_i$ and $B_j$ are not disjoint (see \eg~\autoref{fig:partitionThinSets}).
A collection of subsets of~$V\setminus L$ is \emph{thin}
if each of its subsets is thin.
We now show that there exists a thin $4$-partition.
\begin{lem}
    \label{lem:thin 4-partition}
    There exists a thin $4$-partition of~$V\setminus L$.
\end{lem}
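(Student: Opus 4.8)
The plan is to prove \autoref{lem:thin 4-partition} by induction on the depth of the tree, building the partition in a way that respects the two thinness requirements: no node of a class may have its parent in the same class, and every non-root node of a class must have at least one sibling outside that class. First I would recall the definition of \emph{thin}: a set $X$ is thin when, for each non-root $x\in X$, the parent $p_x\notin X$ and $x$ has at least one sibling (possibly a leaf) outside $X$. The first condition is the easy one to maintain: if I colour nodes so that no parent and child ever share a colour, it is automatically satisfied. The delicate condition is the sibling requirement, which is what forces the number $4$ rather than $2$.

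The natural approach is a top-down greedy colouring. I would process the inner nodes level by level, starting from the root. When assigning a class to an inner node $x$, I must avoid the class of its parent $p_x$ (to satisfy the parent condition), which already removes one of the four available classes. Among the remaining three classes, I need to choose assignments for the children of any given node $x$ so that, within each class, not all children of $x$ land in that class — equivalently, every child placed in class $i$ must have a sibling not in class $i$. The key combinatorial observation is that with three remaining classes and each inner node having at least two children, I can always distribute the children so that no class ``monopolises'' all siblings. Concretely, I would argue that I can assign colours to the children of $x$ (each child forbidden from using $x$'s own colour) such that each colour class used among these children contains at most all-but-one of them; since there are at least two children and at least three colours to spread them over, there is enough room.

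The cleanest way to make this precise is to first establish a helper claim: \emph{for any inner node $x$ with an assigned colour $c_x$, the children of $x$ can be coloured using the three colours in $\{1,2,3,4\}\setminus\{c_x\}$ so that for every colour $i$ appearing among the children, at least one child of $x$ (or one leaf-child, which is never coloured) avoids colour $i$.} Because every inner node has at least two children, and we have three colours available, a simple counting or pigeonhole distribution—e.g. never using the same colour for all children—suffices; leaves among the children are automatically ``uncoloured'' and serve as the required avoiding sibling whenever present. Then I would assemble the global partition: colour the root with class $1$, and inductively colour each subsequent level by applying the helper claim to each already-coloured inner node, colouring its inner-node children accordingly. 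By construction, no parent shares a colour with a child (parent condition) and every non-root inner node $x$ in a class $i$ has a sibling outside class $i$ (sibling condition), so each of the four colour classes is thin.

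The main obstacle I anticipate is verifying the sibling condition simultaneously for all four classes under a single consistent assignment, rather than one class at a time. The subtlety is that the sibling witnessing thinness for a node $x$ in class $i$ could itself be a leaf, so I must be careful to exploit the assumption that \emph{every inner node has at least two children including possible leaves}; a node with exactly two children, both inner and forced into constrained colours, is the tight case to check. I expect the argument to hinge on the fact that with three free colours per node and at least two children, one can always arrange the children's colours so that no colour is used by \emph{all} non-leaf children, which is exactly what the helper claim must deliver. Once that local distribution is secured, the global induction is routine.
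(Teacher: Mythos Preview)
Your greedy top-down colouring is correct and the helper claim goes through: if an inner node $x$ has a leaf child, that leaf witnesses the sibling condition for every inner child regardless of colour; if all children of $x$ are inner (there are at least two), then with three colours available you simply ensure at least two distinct colours occur among them, which already guarantees every child has a differently-coloured sibling. The ``tight case'' you worry about is thus harmless.

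The paper, however, takes a quite different and shorter route: it builds the $4$-partition as the common refinement of two independent bipartitions of $V\setminus L$. The first bipartition is by depth parity (even-depth vs.\ odd-depth inner nodes), which alone enforces the parent condition. The second bipartition is obtained by fixing, for each inner node $x$, one distinguished child $c_x$, and splitting inner nodes into those that are distinguished for their parent versus those that are not; this alone enforces the sibling condition, because every non-root inner node either is the distinguished child (and then any other sibling, which exists since there are at least two, lies on the opposite side) or is not (and then the distinguished sibling lies on the opposite side). Intersecting the two bipartitions gives four classes, each automatically thin. This avoids any induction or per-node distribution argument and makes the role of the number $4=2\times2$ transparent: one factor for each thinness requirement. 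Your approach is more flexible (it would adapt to variants with different local constraints) but is longer to write out; the paper's construction is essentially a one-liner once the two bipartitions are named.
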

\begin{proof}
    We build such a thin $4$-partition as follows.
    First, consider the partition~$(D_e,D_o)$ of~$V\setminus L$
    in which~$D_e$ (\resp $D_o=V\setminus(D_e\cup L)$) is the set of all inner nodes of even (\resp odd) depth.
    Second, arbitrarily fix one child~$c_x$ of~$x$ for each inner node~$x$,
    and consider the set~$C=\{ c_x \mid x\in V\setminus L \}\setminus L$,
    inducing a partition~$(C,\overline{C})$ of~$V\setminus L$ where~$\overline{C}=V\setminus(L\cup C)$.
    By refining these two bi-partitions, we obtain a $4$-partition which is thin by construction.%
    \footref{fn:out-degree at least 2}
\end{proof}

Hence, four bi-colourings are enough to identify~$T$. For an example see \autoref{fig:partitionThinSets}.
\begin{cor}
    \label{cor:4 bi-colouring}
    There exists a family of four bi-colourings identifying~$T$.%
\end{cor}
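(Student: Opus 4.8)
The plan is to combine the two immediately preceding lemmas, since all the substantive work has already been carried out. First I would invoke \autoref{lem:thin 4-partition} to obtain a thin $4$-partition $(X_1,X_2,X_3,X_4)$ of $V\setminus L$. Each block $X_i$ is thin, so \autoref{lem:thin => identifiable} supplies, for every $i\in[4]$, a pair $(\pi_i,\sigma_i)$ of injections from $X_i$ to $L$ that has unique request and identifies $X_i$.

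Next I would define, for each $i$, the candidate bi-colouring $A_i=\pi_i(X_i)$ and $B_i=\sigma_i(X_i)$, and verify that $(A_i,B_i)$ is indeed a bi-colouring identifying $X_i$. The two sets have the same cardinality $\card{X_i}$ because $\pi_i$ and $\sigma_i$ are injective. They are disjoint subsets of $L$ by \autoref{rk:identification basic observations:disjoint} of \autoref{rk:identification basic observations}, applied to the reversed pair $(\sigma_i,\pi_i)$, which also has unique request by \autoref{rk:identification basic observations:commute} of the same remark. Thus $(A_i,B_i)$ is a bi-colouring, and by the very definition of identification for bi-colourings, the pair $(\pi_i,\sigma_i)$ witnesses that $(A_i,B_i)$ identifies $X_i$.

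Finally, since $(X_1,X_2,X_3,X_4)$ is a partition of $V\setminus L$ and each $(A_i,B_i)$ identifies the corresponding block $X_i$, the family $(A_1,B_1),\ldots,(A_4,B_4)$ identifies $V\setminus L$, which is precisely the statement that it identifies $T$. I do not expect any genuine obstacle in this step: the corollary is just the composition of \autoref{lem:thin 4-partition} and \autoref{lem:thin => identifiable}, with the only bookkeeping being the observation that the images of $\pi_i$ and $\sigma_i$ constitute valid (disjoint, equal-cardinality) bi-colourings.
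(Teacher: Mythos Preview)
Your proposal is correct and follows exactly the paper's approach: the paper's proof is a single sentence stating that the result immediately follows from \autoref{lem:thin => identifiable} and \autoref{lem:thin 4-partition}. Your additional bookkeeping (verifying via \autoref{rk:identification basic observations} that the images $\pi_i(X_i)$ and $\sigma_i(X_i)$ are disjoint and of equal cardinality, hence genuine bi-colourings) simply makes explicit what the paper leaves implicit.
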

\begin{proof}
    The result immediately follows from \autoref{lem:thin => identifiable} and \autoref{lem:thin 4-partition}.%
\end{proof}

\begin{figure}
    \includegraphics[width=\textwidth]{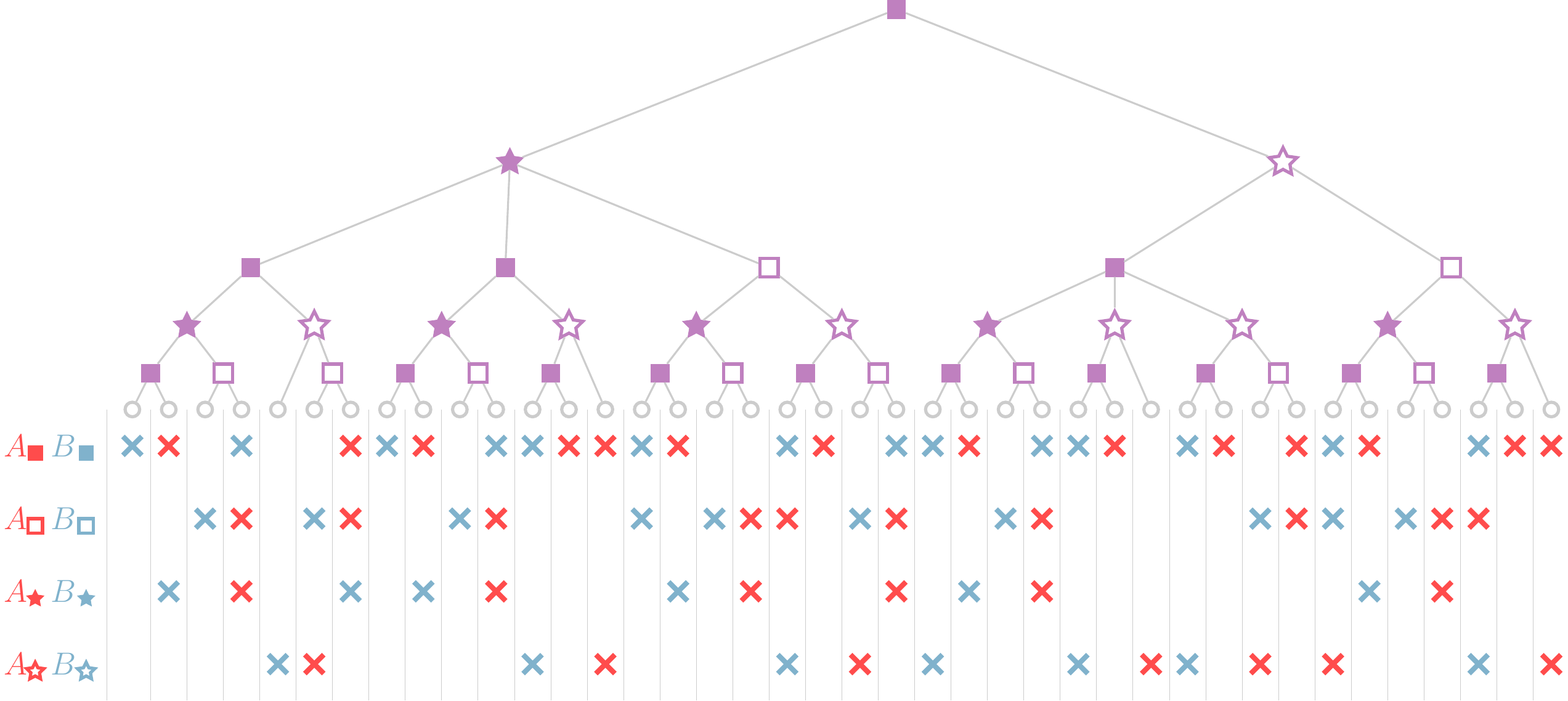}%
    \caption{%
        Illustration of a partition of the inner nodes into 4 thin sets indicated by different shapes/fillings.
        The four colourings identifying each thin set are indicated in the table below the leaves where an orange slash indicates the leaf being in the corresponding set $A$ and a blue cross indicates the leaf being in $B$}.%
    \label{fig:partitionThinSets}
\end{figure}
\bigbreak

\subsection{The transduction}
\label{ssec:transduce laminar tree}
The goal of this section is to prove \autoref{thm:transduce laminar tree},
that is, to design a \CMSO[2]-transduction
that produces the laminar tree induced by an input laminar set system.
We fix a laminar set system~$(U,\mathcal{S})$,
represented by a $\{\SET\}$-structure~$\mathbb{S}$.
Before proving the theorem,
we make the following basic observation.
On $\mathbb{S}$,
we can define two \MSO-formulas~$\descendant(X,Y)$ and~$\child(X, Y)$
expressing that in the laminar tree induced by~$\mathcal{S}$,
$X$ and~$Y$ are nodes and~$X$ is a descendant or a child of~$Y$, respectively:
\begin{align*}
    \descendant(X, Y)&:=
    \SET(X)\land\SET(Y)\land X\subseteq Y;
    \\*
    \child(X, Y)&:=
    \descendant(X, Y)\land X\neq Y\land
    \forall Z\:
    \Big(\big(
    \descendant(Z, Y)
    \land
    Z\neq Y\big)
    \implies
    \descendant(Z, X)
    \Big).
\end{align*}

The key point of our construction consists in defining
a \CMSO[2]-formula
$\repr_{A,B}(a, X)$ which,
assuming a bi-colouring~$(A, B)$ of the universe
modelled as disjoint unary relations
and
identifying a subset~$S$ of inner nodes of~$T$,
is satisfied exactly when~$X\in S$ and~$a$ is its $A$-representative.
\begin{lem}
    \label{lem:repr MSO}
    Let~$(A,B)$ be a bi-colouring of~$L(T)$
    identifying a subset~$S$ of inner nodes of~$T$.
    There exists a \CMSO[2]-formula~$\repr_{A,B}(a, X)$
    that is satisfied exactly
    when~$X$ is an inner node of~$T$
    that belongs to~$S$ and is $A$-represented by~$a$.
\end{lem}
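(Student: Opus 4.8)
The plan is to split the construction into two parts: first a \CMSO[2]-formula $\phi_S(X)$ expressing that the node $X$ belongs to $S$, and then, on top of it, the representative formula obtained from the least-ancestor characterisation in \autoref{lem:a reprensants}.

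For the membership formula I would exploit the three-way case analysis of \autoref{lem:node cases in identification}, but reformulated in terms of \emph{parity}, which is what makes it \CMSO[2]-expressible. Recall that a node of $T$ is a set $C$ with $\SET(C)$, that the leaves of $T_C$ are exactly the elements of $C$, and hence that $(A\cup B)\cap V(T_C)=(A\cup B)\cap C$. Since $A$ and $B$ are disjoint, $\card{(A\cup B)\cap V(T_c)}$ is odd exactly when $\card{A\cap V(T_c)}$ and $\card{B\cap V(T_c)}$ have opposite parities. Reading off the three cases of \autoref{lem:node cases in identification}, an inner node has exactly $0$, $1$, or $2$ children $c$ with $\card{(A\cup B)\cap V(T_c)}$ odd according to whether it falls in case (1), (2), or (3); and case (3) is precisely $x\in S$. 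Therefore $X\in S$ holds if and only if $X$ is a node with \emph{exactly two} children $C$ for which $\card{(A\cup B)\cap C}$ is odd.

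This condition translates directly. Oddness of $\card{(A\cup B)\cap C}$ is written $\exists Y\big(\forall z(z\in Y\leftrightarrow((A(z)\lor B(z))\land z\in C))\land\neg\C[2](Y)\big)$; this is the single place where counting is used, and it is exactly the obstruction that forces \CMSO rather than plain \MSO, because the underlying condition of \autoref{lem:node cases in identification} is an equality of cardinalities, which \MSO cannot test. ``Exactly two children'' is the bounded statement that there are two distinct such children and no third, using the formula $\child$ already available over $\mathbb{S}$. This yields $\phi_S(X)$; note that it can only hold at inner nodes, since a leaf has no children.

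Finally, I would define the representative. By \autoref{lem:a reprensants}, for $a\in A=\pi(S)$ the node $\pi^{-1}(a)$ is the least ancestor of $a$ belonging to $S$; and the ancestors of the leaf $a$ are exactly the nodes $Z$ with $a\in Z$, which form a $\subseteq$-chain by laminarity. Hence I set
\[
\repr[A,B](a,X):=A(a)\land\phi_S(X)\land a\in X\land\forall Z\big((\phi_S(Z)\land a\in Z)\to X\subseteq Z\big),
\]
asserting that $X$ is the $\subseteq$-minimal node of $S$ containing $a$. Both implications of the equivalence ``$X\in S$ and $a$ is the $A$-representative of $X$'' then follow immediately from \autoref{lem:a reprensants}. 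The main obstacle throughout is the passage from the cardinality-equality conditions of \autoref{lem:node cases in identification} to the parity count; once this is established, the remaining steps are routine \MSO bookkeeping over the $\descendant$ and $\child$ relations on $\mathbb{S}$.
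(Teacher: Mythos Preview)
Your proof is correct and follows the same two-step architecture as the paper: first a \CMSO[2] membership test for $S$ built from \autoref{lem:node cases in identification}, then the representative extracted as the least $S$-ancestor via \autoref{lem:a reprensants}. The only difference is in how you encode membership: the paper's formula $\phi_{A,B}(X)$ directly transcribes case~(3) of \autoref{lem:node cases in identification} (it existentially guesses the witnesses $a\in A\cap X$ and $b\in B\cap X$ and checks, for each child $Z$, that $\{a,b\}\not\subseteq Z$ and $\C[2]\big((Z\setminus\{a,b\})\cap(A\cup B)\big)$), whereas you instead count how many children have odd $\card{(A\cup B)\cap c}$ and test that this count equals two. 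Your parity reformulation is exactly the one the paper itself derives and uses in the uniqueness lemma immediately following \autoref{lem:node cases in identification}, so the two encodings are provably equivalent; yours is arguably a little cleaner since it avoids the existential witnesses.
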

\begin{proof}
    First, we define a formula~$\phi_{A,B}(X)$
    that, under the above assumption,
    is satisfied exactly when~$X$ belongs to~$S$.
    According to~\autoref{lem:node cases in identification},
    this happens \ifof
    $X$ is a node of~$T$
    (\ie,~$\SET(X)$ is satisfied)
    and there exists~$a\in X\cap A$ and~$b\in X\cap B$
    such that for each child~$Z$ of~$X$,
    $\{a,b\}\nsubseteq Z$
    and
    the set~$(Z\setminus\{a,b\})\cap(A\cup B)$ has even size.
    This property is easily expressed in \CMSO[2],
    using the \MSO-formula $\child(X, Y)$ defined previously,
    as well as the predicate $\SET$:
    \begin{align*}
        \phi_{A,B}(X):=\:
        &
        \SET(X)
        \land
        \exists a\exists b\Big[a\in (X\cap A)\land b\in (X\cap B)\land
        \\*&\qquad
        \forall Z\:
        \Big(
        \child(Z, X) \implies \Big(
        \{a,b\}\nsubseteq Z
        \land
        \C[2]\big((Z\setminus\{a,b\})\cap(A\cup B)\big)
        \Big)\Big)\Big].
    \end{align*}

    Now, we can easily define~$\repr_{A,B}(a, X)$
    based on \autoref{lem:a reprensants}:
    \begin{align*}
        \repr_{A,B}(a, X):=\:&
        \phi_{A,B}(X) \land a\in (X\cap A)
        \land
        \forall Z\subsetneq X\:
        \Big(
        a\in Z
        \implies \neg \phi_{A,B}(Z)
        \Big).
    \end{align*}
    This concludes the proof.
\end{proof}

We are now ready to prove the theorem.
\begin{proof}[Proof of \autoref{thm:transduce laminar tree}]
    The \CMSO[2]-transduction is obtained by composing the following atomic \CMSO[2]-transductions.
    The transduction makes use of the formulas~$\repr_{A, B}(a, X)$
    given by \autoref{lem:repr MSO}.
    \begin{enumerate}
        \item Guess a family of four bi-colourings~$(A_i, B_i)_{i\in[4]}$
            identifying~$T$
            (which exists by \autoref{cor:4 bi-colouring}).
            %\mmk{Reviewer wants to say here we can filter. We do not mind however since we can check after that the output is correct. Do we want to do the modification?}
        \item Copy the input graph four times,
            thus introducing four binary relations~$(\copyV[i])_{i\in[4]}$
            where $\copyV[i](x, y)$ indicates that~$x$ is the $i$-th copy
            of the original element~$y$.
        \item Filter the universe
            keeping only
            the original elements
            as well as
            the $i$-th copy of each vertex~$a$
            for which there exists~$X$
            such that~$\repr_{A_i, B_i}(a, X)$.
        \item Use an interpretation adding relation $\ancestor$ and keeping every other relation in $\Sigma$ unchanged outputting $\Sigma\cup \{\ancestor\}$-structures as follows. Define the relation~$\ancestor(x, y)$
            so that it is satisfied exactly when
            there exist~$x'$, $X$, $i$, and~$Y$
            such that,
            on the one hand
            $\descendant(Y, X)$
            and
            $\copyV[i](x,x')\land\repr_{A_i, B_i}(x', X)$,
            and,
            on the other hand,
            either~$y$ is an original element and~$Y=\{y\}$,
            or there exists~$y'$ and~$j$ such that
            $\copyV[j](y, y')\land\repr_{A_j, B_j}(y', Y)$.%
            \item Use a filtering transduction keeping only those outputs $\mathbb{B}$ from the previous steps that satisfy that for every set $X$ in $\SET_{\mathbb{B}}$ there is an element $t$ in $U_{\mathbb{B}}$ such that the set of descendants of $t$ is equal to $X$ and for every element $t$ of $U_{\mathbb{B}}$ the set of descendants of $t$ is in $\SET_{\mathbb{B}}$ and hence verifying that $\ancestor_{\mathbb{B}}$ defines a laminar tree of $(U,\mathcal{S})$. Clearly, this can be expressed using a \MSO-sentence and hence outputs are of the form $\mathbb{A}\sqcup \mathbb{T}$ for some $\{\ancestor\}$-structure $\mathbb{T}$ representing the laminar tree $T$ of $(U,\mathcal{S})$. 
    \end{enumerate}
\end{proof}

\section{Transducing modular decompositions}
\label{sec:transduce modular decomposition}
The set of modules of a directed graph is a specific example of a particular type of set system, a ``weakly-partitive set system'' (and a ``partitive set
system'' in the case of an undirected graph). In this section, we first give a general \CMSO[2]-transduction to obtain the canonical tree-like decomposition of
a weakly-partitive set system from the set system itself. We then show how to obtain the modular decomposition of a graph via a \CMSO[2]-transduction as an
application. To capture the structure of weakly-partitive set systems we translate the structural theorem (\autoref{thm:weakly-partitive tree}) in the
  form given by Rao \cite{raoThesis}  into a transduction using \autoref{thm:transduce laminar tree}. Since weakly partitive set systems can be reconstructed by
  an \MSO transduction from their canonical tree-like decompositions, we obtain as a corollary that weakly-partitive set systems are equivalent \emph{w.r.t.} \CMSO[2]-transductions to
  their canonical tree-like decompositions. It is worth mentionning that Courcelle provides similar transductions for the modular
  decomposition of directed graphs, but using order-invariant MSO \cite{Courcelle06}, and the Tutte decomposition of 2-connected graphs, but using incidence
  relation \cite{Courcelle99}. Corollaries of our result is that we can compute such canonical decompositions using only \CMSO[2] and on the basis of the edge
  relation. This answers questions raised in \cite{Courcelle99,Courcelle06} about computing them in \CMSO[2] using the edge relation. 

\subsection{Transducing weakly-partitive trees}
\label{ssec:transduce partitive tree}
A set system~$(U,\mathcal{S})$ is \defd{weakly-partitive} if
for every two overlapping sets~$X,Y\in\mathcal{S}$,
the sets~$X\cup Y$, $X\cap Y$, $X\setminus Y$, and~$Y\setminus X$ belong to~$\mathcal{S}$.
It is \defd{partitive} if, moreover,
for every two overlapping sets~$X,Y\in\mathcal{S}$,
their symmetric difference, denoted \defd{$X\symdiff Y$},
also belongs to~$\mathcal{S}$.
By extension, a set family~$\mathcal{S}$ is called \defd{weakly-partitive} or \defd{partitive}
whenever $\left(\bigcup\mathcal{S},\mathcal{S}\right)$ is a set system\xspace%
which is weakly-partitive or partitive, respectively
(note that by the definition of set system we also requires that~$\emptyset\notin\mathcal{S}$, $\bigcup\mathcal{S}\in\mathcal{S}$, and~$\{a\}\in\mathcal{S}$ for every~$a\in\bigcup\mathcal{S}$).
\begin{figure}
    \includegraphics[scale=0.65]{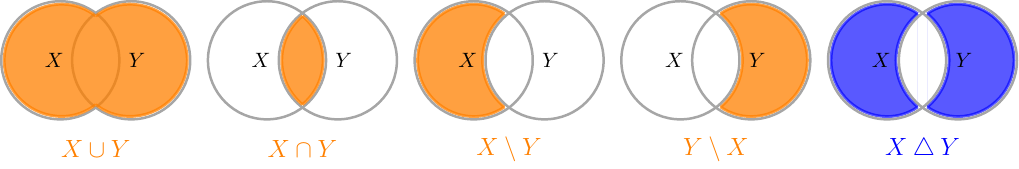}%
    \caption{%
        Sets to be included in a weakly-partitive set system (excludes the rightmost) or in a partitive set system (includes the rightmost) for two overlapping sets $X$ and $Y$.%
    }%
    \label{fig:overlapSet}
\end{figure}
\bigbreak

A member of a set system~$(U,\mathcal{S})$ is said to be~\defd{strong}
if it does not overlap any other set from~$\mathcal{S}$.
The sub-family~\defd{$\mathcal{S}_!$} of strong sets of~$\mathcal{S}$ is thus laminar by definition.
Hence, it induces a laminar tree~$T$.
By extension,
we say that~$T$ is \defd{induced by the set system~$(U,\mathcal{S})$} (or simply by~$\mathcal{S}$).
The next result extends \autoref{thm:transduce laminar tree},
by showing that~$T$ can be \CMSO[2]-transduced from~$\mathcal{S}$.
\begin{lem}
    \label{lem:transduce laminar tree induced by weakly-partitive family}
    Let~$\Sigma$ be an extended vocabulary,
    including a unary set predicate name~$\SET$
    and not including the binary relational symbol~$\ancestor$.
    There exists a non-deterministic overlay \CMSO[2]-transduction~$\tau$
    such that,
    for each set system~$(U,\mathcal{S})$
    represented as the $\{\SET\}$-structure~$\mathbb{S}$
    and inducing the laminar tree~$T$ with ${L(T)=U}$,
    and for each $\Sigma$-structure~$\mathbb{A}$
    with~$\mathbb{S}\sqsubseteq\mathbb{A}$,
    $\tau(\mathbb{A})$ is non-empty
    and every output in~$\tau(\mathbb{A})$
    is equal to $\mathbb{A}\sqcup\mathbb{T}$ for some $\{\ancestor\}$-structure $\mathbb{T}$ representing $T$.
\end{lem}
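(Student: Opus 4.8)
The plan is to reduce \autoref{lem:transduce laminar tree induced by weakly-partitive family} to the already-established \autoref{thm:transduce laminar tree}. The obstacle is that the input set system $(U,\mathcal{S})$ need not be laminar, so we cannot feed it directly into the laminar-tree transduction. However, the laminar tree $T$ we want is by definition the one induced by the strong sub-family $\mathcal{S}_!$, which \emph{is} laminar. So the natural strategy is: first carve out the strong sets by an interpretation, obtaining (a representation of) the laminar set system $(U,\mathcal{S}_!)$, and then invoke \autoref{thm:transduce laminar tree} on that.

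Concretely, I would first exhibit an \MSO-formula $\phi_{\mathsf{strong}}(X)$ over the $\{\SET\}$-vocabulary that holds exactly when $X\in\mathcal{S}$ is strong, i.e. $\SET(X)$ together with $\neg\exists Y\big(\SET(Y)\land \mathsf{overlap}(X,Y)\big)$, where $\mathsf{overlap}(X,Y)$ abbreviates the \MSO-expressible condition that $X$ and $Y$ are neither disjoint nor nested ($X\cap Y\neq\emptyset$, $X\not\subseteq Y$, $Y\not\subseteq X$). This is a routine \MSO\ definition. Using $\phi_{\mathsf{strong}}$ as an interpretation, I would replace the predicate $\SET$ (interpreted as $\mathcal{S}$) by a new predicate interpreted as $\mathcal{S}_!$; since $U\in\mathcal{S}$ and each singleton $\{a\}\in\mathcal{S}$ are strong (they overlap nothing), the resulting family $\mathcal{S}_!$ still satisfies the axioms of a set system, and it is laminar by the definition of strong sets. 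The subtle point to check here is that the interpretation step is permitted to rewrite the set predicate $\SET$ while leaving every other symbol of $\Sigma$ untouched, so that the $\sqsubseteq$ relationship with the rest of $\mathbb{A}$ is preserved as an overlay.

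The second and final step is to apply \autoref{thm:transduce laminar tree} to the laminar structure just produced. That theorem supplies a non-deterministic overlay \CMSO[2]-transduction which, on the $\{\SET\}$-structure $\mathbb{S}_!$ representing $(U,\mathcal{S}_!)$ (extended to the full vocabulary), is non-empty and whose every output equals $\mathbb{A}\sqcup\mathbb{T}$ for some $\{\ancestor\}$-structure $\mathbb{T}$ representing the laminar tree of $(U,\mathcal{S}_!)$, which is precisely the tree $T$ induced by $\mathcal{S}$. The composition of the interpretation with this transduction is again a \CMSO[2]-transduction. Two bookkeeping matters remain: the interpretation that extracts $\mathcal{S}_!$ is deterministic and does not disturb the original symbols of $\Sigma$, so the overall map stays an overlay transduction that only \emph{adds} the $\ancestor$ relation to $\mathbb{A}$; and the original predicate $\SET$ (carrying the full family $\mathcal{S}$, not just $\mathcal{S}_!$) must be retained in the final output, so I would keep $\mathcal{S}$ as the interpretation of $\SET$ throughout and introduce the strong-set family under an auxiliary predicate used only internally to drive \autoref{thm:transduce laminar tree}. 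With these adjustments, non-emptiness and the output characterization follow directly from the corresponding guarantees of \autoref{thm:transduce laminar tree}, completing the proof.
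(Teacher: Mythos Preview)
Your proposal is correct and follows essentially the same approach as the paper: define an \MSO-formula selecting the strong members of~$\mathcal{S}$, interpret this as an auxiliary set predicate (the paper calls it~$\SSET$), invoke \autoref{thm:transduce laminar tree} on that laminar family, and drop the auxiliary predicate afterwards. The bookkeeping you flag---keeping the original~$\SET$ intact while introducing a fresh internal predicate for~$\mathcal{S}_!$---is exactly how the paper handles it.
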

\begin{proof}
    On the $\{\SET\}$-structure~$\mathbb{S}$,
    it is routine to define an \MSO-formula~$\phi_{\SSET}(Z)$
    that identifies those subsets~$Z\subseteq U$
    that are strong members of~$\mathcal{S}$:
    \begin{align*}
        \phi_{\SSET}(Z)&:=
        \SET(Z)
        \land
        \forall X\:
        \Big(
        \big( \SET(X) \land X\cap Z\neq\emptyset \big)
        \implies
        \big( X\subseteq Z \lor Z\subseteq X \big)
        \Big).
    \end{align*}
    Hence, we can design an \MSO-interpretation
    that outputs the $\Sigma\cup\{\SSET\}$-structure
    corresponding to~$\mathbb{A}$ equipped with the set unary predicate $\SSET$
    that selects strong members of~$\mathcal{S}$.
    Thus, up to renaming the predicates~$\SSET$ and~$\SET$,
    by \autoref{thm:transduce laminar tree},
    we can produce,
    through a \CMSO[2]-transduction,
    the $\Sigma\cup\{\ancestor\}$-structure $\mathbb{A}\sqcup\mathbb{T}$
    where~$\mathbb{T}$ is the tree-structure modelling the laminar tree~$T$ induced by~$\mathcal{S}_!$
    with~$L(T)=U$
    (once the output obtained, the interpretation drops the set predicate~$\SSET$ which is no longer needed).
\end{proof}

Clearly, the laminar tree~$T$ of a weakly-partitive set system~$(U,\mathcal{S})$ does not characterize $(U,\mathcal{S})$.
However,
as shown by the \hyperref[thm:weakly-partitive tree]{below theorem},
a labelling of its inner nodes and a controlled partial ordering of its nodes
are sufficient to characterize all the sets of~$\mathcal{S}$.
For~$Z$ a set equipped with a partial order~$<$
and~$X$ a subset of~$Z$,
we say that~$X$ is a \defd{$<$-interval}
whenever $<$ defines a total order on~$X$
and for every~$a,b\in X$ and every~$c\in Z$,
$a<c<b$ implies~$c\in X$.%
\begin{thmC}[\cite{chein1981partitive,raoThesis}]
    \label{thm:weakly-partitive tree}%
    Let~$\mathcal{S}$ be a weakly-partitive family,
    $\mathcal{S}_!$ be its subfamily of strong sets,
    and~$T$ be the laminar tree it induces.
    There exists a total labelling function~$\lambda$
    from the set~$V(T)\setminus L(T)$ of inner nodes of~$T$
    to the set~$\{\DEGENERATE,\PRIME,\LINEAR\}$,
    and, for each inner node~$t\in\lambda^{-1}(\LINEAR)$,
    a linear ordering~$<_t$ of its children,
    such that
    every inner node having exactly two children is labelled by~$\DEGENERATE$ and the following two conditions are satisfied:
    \begin{itemize}[nosep]
        \item for each~$X\in\mathcal{S}\setminus\mathcal{S}_!$,
            there exists~$t\in V(T)$
            and a subset~$\mathcal{C}$ of children of~$t$
            such that $X=\bigcup_{c\in\mathcal{C}}L(T_c)$
            and
            either~$\lambda(t)=\LINEAR$
            and~$\mathcal{C}$ is a $<_t$-interval,
            or~$\lambda(t)=\DEGENERATE$;
        \item conversely,
            for each inner node~$t$
            and each non-empty subset~$\mathcal{C}$ of children of~$t$,
            if
            either~$\lambda(t)=\LINEAR$
            and~$\mathcal{C}$ is a $<_t$-interval,
            or~$\lambda(t)=\DEGENERATE$,
            then~$\bigcup_{c\in\mathcal{C}}L(T_c)\in\mathcal{S}$.
    \end{itemize}
\end{thmC}
Furthermore, $T$ and $\lambda$ are uniquely determined from~$\mathcal{S}$,
and, for each inner node~$t$ of~$T$ labelled by~$\LINEAR$, only two orders~$<_t$ are possible,
one being the inverse of the other
(indeed, inverting an order~$<$ does preserve the property of being a $<$-interval).
Hence, every weakly-partitive family~$\mathcal{S}$ is characterized
by a labelled and partially-ordered tree~$(T,\lambda,<)$
where~$T$ is the laminar tree induced by the subfamily~$\mathcal{S}_!$ of strong sets of~$\mathcal{S}$,
$\lambda:V(T)\setminus L(T)\to\{\DEGENERATE,\PRIME,\LINEAR\}$ is the labelling function,
and~$<$ is the partial order $\bigcup_{t\in\lambda^{-1}(\LINEAR)}<_t$ over~$V(T)$.
As, up-to inverting some of the $<_t$ orders, $(T,\lambda,<)$ is unique,
we abusively call it \defd{\strong{the} weakly-partitive tree induced by~$\mathcal{S}$}.
Conversely, a weakly-partitive tree characterizes the unique weakly-partitive set system which induced it.

We naturally model a weakly-partitive tree~$(T,\lambda,<)$ of a weakly-partitive set system $(U,\mathcal{S})$
by the $\{\ancestor,\DEGENERATE,\betweeness\}$-structure $\mathbb{T}$
of universe $U_{\mathbb{T}}=V(T)$ such that
$\struct{V(T),\ancestor_{\mathbb{T}}}\sqsubseteq\mathbb{T}$ models~$T$ with ${L(T)=U}$,
$\DEGENERATE_{\mathbb{T}}$ is a unary relation
which selects the inner nodes of~$T$ of label~$\DEGENERATE$,
\ie, ${\DEGENERATE_{\mathbb{T}}=\lambda^{-1}(\DEGENERATE)}$,
and~$\betweeness_{\mathbb{T}}$ is a ternary relation selecting triples $(x,y,z)$ satisfying $x<y<z$ or $z<y<x$.
(Although it is possible, through a non-deterministic \MSO-transduction, to define~$<$ from~$\betweeness_{\mathbb{T}}$,
the use of $\betweeness_{\mathbb{T}}$ rather than~$<_{\mathbb{T}}$ ensures uniqueness of the output weakly-partitive tree.)
The inner nodes of~$T$ that are labelled by~$\LINEAR$ can be recovered through an \MSO-formula as those inner nodes whose children are related by~$\betweeness$.
The inner nodes labelled by~$\PRIME$ can be recovered through an \MSO-formula as those inner nodes that are labelled neither by~$\DEGENERATE$ nor by~$\LINEAR$.
Using \autoref{thm:weakly-partitive tree},
it is routine to design an \MSO-transduction
which takes as input a weakly-partitive tree
and outputs the weakly-partitive set system which induced it.
The inverse \CMSO[2]-transduction is the purpose of the next result. 
\begin{thm}
    \label{thm:transduce weakly-partitive tree}%
    There exists a non-deterministic \CMSO[2]-transduction~$\tau$
    such that,
    for every weakly-partitive set system~$(U,\mathcal{S})$
    represented as the $\{\SET\}$-structure $\mathbb{S}$
    and inducing the weakly-partitive tree $(T,\lambda,<)$
    represented as the $\{\ancestor,\DEGENERATE,\betweeness\}$-structure $\mathbb{T}$,
    we have $\mathbb{T}\in\tau(\mathbb{S})$
    and every output in $\tau(\mathbb{S})$ is a weakly-partitive tree of $(U,\mathcal{S})$.
\end{thm}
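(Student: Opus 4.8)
The plan is to first extract the laminar tree with the already-available machinery, and then read off the labeling $\lambda$ and the order information by purely local, \MSO-definable conditions on the sibling structure. Concretely, I would start by applying \autoref{lem:transduce laminar tree induced by weakly-partitive family} to the input $\{\SET\}$-structure $\mathbb{S}$. Since that transduction is an overlay transduction, its output is $\mathbb{S}\sqcup\mathbb{T}_0$ for some $\{\ancestor\}$-structure $\mathbb{T}_0$ representing the laminar tree $T$ induced by the strong sets $\mathcal{S}_!$, with $L(T)=U$; crucially the set predicate $\SET$ (interpreted as $\mathcal{S}$, and hence only meaningful on subsets of $U=L(T)$) is still present in this intermediate structure. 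All remaining work is then a single deterministic \MSO-interpretation producing the target $\{\ancestor,\DEGENERATE,\betweeness\}$-structure: it keeps $\ancestor$, defines $\DEGENERATE$ and $\betweeness$ by the formulas described below, and drops $\SET$. Composed with the non-deterministic first step, this yields the desired \CMSO[2]-transduction.

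The key gadget is a binary ``sibling-adjacency'' relation. For a node $c$, the leaf-set $L(T_c)$ is \MSO-definable from $\ancestor$ (the leaves $x$ with $\ancestor(c,x)$), so using $\SET$ I would define an \MSO-formula $\mathsf{adj}(c,c')$ asserting that $c$ and $c'$ are distinct children of a common node $t$ and that $L(T_c)\cup L(T_{c'})\in\mathcal{S}$. (Any union of two subtree leaf-sets that lies in $\mathcal{S}$ must anyway sit below a common parent, so requiring siblinghood is harmless.) By \autoref{thm:weakly-partitive tree}, the restriction of $\mathsf{adj}$ to the children of an inner node $t$ is: the complete graph if $\lambda(t)=\DEGENERATE$, the consecutivity path of $<_t$ if $\lambda(t)=\LINEAR$, and the edgeless graph if $\lambda(t)=\PRIME$ (recall nodes with exactly two children are, by convention, $\DEGENERATE$). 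The forward direction is immediate from the theorem; for the converse at a would-be $\DEGENERATE$ node one uses weak-partitivity to close an all-pairs-adjacent family of children up to all of its unions, which excludes the $\LINEAR$ and $\PRIME$ patterns. I would therefore set $\DEGENERATE(t)$ to hold exactly when $t$ is an inner node whose children are pairwise $\mathsf{adj}$-related.

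For $\betweeness$, the point is that at a $\LINEAR$ node the children carry the consecutivity graph of $<_t$, which is a path, and betweenness on a path is orientation-invariant and \MSO-definable. I would define $\betweeness(x,y,z)$ to hold when $x,y,z$ are three distinct children of a common node $t$ and, in the graph on the children of $t$ with edge relation $\mathsf{adj}$ after deleting $y$, the vertices $x$ and $z$ lie in different connected components; connectivity within the sibling set is the standard \MSO closure quantification. On a path this says precisely that $y$ lies strictly between $x$ and $z$, and it is vacuous at $\DEGENERATE$ (complete) and $\PRIME$ (edgeless) nodes, so the $\LINEAR$ nodes are recovered as exactly those whose children are related by $\betweeness$, matching the intended semantics. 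Because betweenness does not depend on the orientation of $<_t$, the output $\{\ancestor,\DEGENERATE,\betweeness\}$-structure is uniquely determined (up to the choice of representatives for inner nodes inherited from the first step); hence every output represents the weakly-partitive tree of $(U,\mathcal{S})$, and the canonical $\mathbb{T}$ is among the outputs.

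I expect the main obstacle to be the bookkeeping of the correspondence between \autoref{thm:weakly-partitive tree} and the three shapes of $\mathsf{adj}$: in particular arguing the converse for $\DEGENERATE$ through the weak-partitivity closure, and dispatching the small-arity edge cases (two-child nodes, and the triangle-versus-path distinction at three-child nodes). The genuinely hard part, transducing the tree itself, has already been discharged by \autoref{lem:transduce laminar tree induced by weakly-partitive family}; all additional structure here is \MSO-definable and deterministic, the only non-determinism being inherited from the laminar-tree step.
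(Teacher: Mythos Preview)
Your plan matches the paper's almost exactly: first apply \autoref{lem:transduce laminar tree induced by weakly-partitive family}, then deterministically \MSO-interpret $\DEGENERATE$ and $\betweeness$ over $\mathbb{S}\sqcup\mathbb{T}_0$. Your $\DEGENERATE$ formula (``all pairs of children are $\mathsf{adj}$'') is literally the paper's formula, and your trichotomy for the $\mathsf{adj}$-graph on a sibling set (clique / path / edgeless according to $\lambda(t)$) is correct and is exactly what drives the paper's argument as well.

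There is, however, a concrete bug in your $\betweeness$ formula. You define $\betweeness(x,y,z)$ as ``$x$ and $z$ lie in different connected components of the $\mathsf{adj}$-graph on the children of $t$ after deleting $y$'', and then assert this is vacuous at $\PRIME$ nodes. It is not: in an edgeless graph every vertex is its own component, so after deleting $y$ the vertices $x$ and $z$ are \emph{always} in different components. Your formula therefore places \emph{every} triple of children of a $\PRIME$ node into $\betweeness$, whereas the intended relation is empty there (since $<$ is only defined at $\LINEAR$ nodes). The resulting output structure would not represent the weakly-partitive tree.

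The fix is minor: additionally require that $x$ and $z$ \emph{are} $\mathsf{adj}$-connected in the sibling graph before deleting $y$ (equivalently, that the sibling $\mathsf{adj}$-graph is non-empty, i.e.\ $t$ is not $\PRIME$). With that conjunct your connectivity-based definition is correct and gives a pleasant alternative to the paper's more direct formula, which instead tests $\neg\SET\big(\leafset(x)\cup\leafset(z)\big)$ together with the existence of two members of $\mathcal{S}$ separating $\{x,y\}$ from $z$ and $\{y,z\}$ from $x$.
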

\begin{proof}
    By \autoref{lem:transduce laminar tree induced by weakly-partitive family},
    we have a \CMSO[2]-transduction
    which on inputs~$\mathbb{S}$ outputs the $\{\SET,\ancestor\}$-structure $\mathbb{S}\sqcup\mathbb{T'}$,
    where~$\mathbb{T'}$ is the~$\{\ancestor\}$-structure modelling~$T$, with $L(T)=U=U_{\mathbb{S}}$.
    It thus remains to define a second \CMSO[2]-transduction
    which outputs~$\mathbb{T}$ from~$\mathbb{S}\sqcup\mathbb{T'}$.
    Actually, an \MSO-interpretation is enough.
    Indeed, we can define
    an \MSO-formula~$\phi_{\DEGENERATE}$ which selects the inner nodes of~$T$ that are labelled~$\DEGENERATE$ by~$\lambda$,
    and an \MSO-formula~$\phi_{\betweeness}$ which selects triples~$(x,y,z)$ of elements such that~$x<y<z$ or~$z<y<x$.
    In order to define both formulas,
    we use a functional symbol~$\leafset(t)$
    which is interpreted as the set~$L(T_t)$,
    and is clearly \MSO-definable on~$\{\ancestor\}$-structures such as~$\mathbb{T}$.
    Indeed, an element~$z$ belongs to~$\leafset(t)$ \ifof
    it is a leaf and has~$t$ as ancestor.

    According to \autoref{thm:weakly-partitive tree},
    an inner node~$t$ is labelled~$\DEGENERATE$
    \ifof,
    for every two children $s_1,s_2$ of $t$ the set $L(T_{s_1})\cup L(T_{s_2})$ is in $\mathcal{S}$. We can define this in \MSO using the following formula:
        $$\phi_{\DEGENERATE}(x):=\forall y \forall z\Big(\big(\parent(x,y)\land \parent(x,z)\big)\implies \SET\big(\leafset(y)\cup \leafset(z)\big)\Big).$$

    Now, according to \autoref{thm:weakly-partitive tree},
    we have~$x<y<z$ or~$z<y<x$ exactly when~$x$, $y$, and~$z$ are three distinct children of some node~$t$
    and the three following properties are satisfied:
    \begin{enumerate}[nosep]
        \item\label{it:not degenerate} $L(T_x)\cup L(T_z)$ is not a member of~$\mathcal{S}$;
        \item\label{it:not prime}\label{it:x+y-z} there exists a member of~$\mathcal{S}$ which includes both~$L(T_x)$ and~$L(T_y)$ but excludes~$L(T_z)$;
        \item\label{it:betweeness}\label{it:y+z-x} there exists a member of~$\mathcal{S}$ which includes both~$L(T_y)$ and~$L(T_z)$ but excludes~$L(T_x)$.
    \end{enumerate}
    Indeed,
    when these conditions are satisfied, $t$ is necessarily labelled $\LINEAR$
    (\autoref{it:not degenerate} asserts that it is not labelled by $\DEGENERATE$ and, \eg, \autoref{it:not prime} that it is not labelled by $\PRIME$),
    and, among the possible orderings of~$\{x,y,z\}$, only the ones ensuring~$x<y<z$ or~$z<y<x$ are possible
    ($x<z<y$ and~$y<z<x$ are not possible because of \autoref{it:x+y-z},
    and~$y<x<z$ and~$z<x<y$ are not possible because of \autoref{it:y+z-x}).
    Each of these properties is \MSO-definable:
    \begin{align*}
        \phi_{\betweeness}(x,y,z):=&
            \exists t\Big[ \parent(t,x)\land\parent(t,y)\land\parent(t,z)\land 
            \\*
            &\neg\SET\big(\leafset(x)\cup\leafset(z)\big)\land
            \\*
            &\exists S\Big(\SET(S)\land\big(\leafset(x)\cup\leafset(y)\big)\subseteq S\land\leafset(z)\cap S=\emptyset\Big)\land
            \\*
            &\exists S\Big(\SET(S)\land\big(\leafset(z)\cup\leafset(y)\big)\subseteq S\land\leafset(x)\cap S=\emptyset\Big)\Big].
    \end{align*}
    Once~$\phi_{\DEGENERATE}$ and~$\phi_{\betweeness}$ are defined,
    our transduction drops the original predicate~$\SET$
    which is no longer needed in the output~$\mathbb{T}$.
\end{proof}

If~$\mathcal{S}$ is partitive
then the weakly-partitive tree it induces enjoys a simple form,
and is unique.
Indeed, the label $\LINEAR$ and, thus, the partial order~$<$, are not needed.%
\begin{thmC}[\cite{chein1981partitive}]
    \label{thm:partitive tree}
    Let~$\mathcal{S}$ be a weakly-partitive family
    and~$(T,\lambda,<)$ be the weakly-partitive tree it induces.
    If~$\mathcal{S}$ is partitive,
    then~$\lambda^{-1}(\LINEAR)=\emptyset$ and~$<$ is empty.
\end{thmC}
Hence, in case of a partitive set systems~$(U,\mathcal{S})$,
we can consider the simpler object~$(T,\lambda)$,
called \defd{the partitive tree induced by~$\mathcal{S}$}
(or \defd{the partitive tree of~$(U,\mathcal{S})$})
in which~$\lambda$ maps $V(T)\setminus L(T)$ to $\{\DEGENERATE,\PRIME\}$.
As a direct consequence of \autoref{thm:partitive tree} and of \autoref{thm:transduce weakly-partitive tree},
we can produce, through a \CMSO[2]-transduction,
the partitive tree induced by a partitive set system
and naturally modelled by an $\{\ancestor,\DEGENERATE\}$-structure.
\begin{cor}
    \label{cor:transduce partitive tree}%
    There exists a non-deterministic \CMSO[2]-transduction~$\tau$
    such that,
    for each partitive set system~$(U,\mathcal{S})$
    represented as the $\{\SET\}$-structure~$\mathbb{S}$
    and inducing the partitive tree $(T,\lambda)$
    represented as the $\{\ancestor,\DEGENERATE\}$-structure $\mathbb{T}$,
    we have~$\mathbb{T}\in\tau(\mathbb{S})$
    and every output in $\tau(\mathbb{S})$ is a partitive tree of $(U,\mathcal{S})$.
\end{cor}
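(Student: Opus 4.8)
The plan is to observe that every partitive set system is in particular weakly-partitive, so that the transduction of \autoref{thm:transduce weakly-partitive tree} already produces (a $\{\ancestor,\DEGENERATE,\betweeness\}$-model of) the weakly-partitive tree $(T,\lambda,<)$ induced by $\mathcal{S}$, and then simply to discard the now-redundant $\betweeness$ relation. Concretely, I would let $\tau'$ be the \CMSO[2]-transduction of \autoref{thm:transduce weakly-partitive tree}, which on input $\mathbb{S}$ outputs the $\{\ancestor,\DEGENERATE,\betweeness\}$-structure $\mathbb{T}'$ modeling $(T,\lambda,<)$, and compose it with a deterministic \MSO-interpretation that keeps $\ancestor$ and $\DEGENERATE$ unchanged while dropping $\betweeness$, thereby producing a $\{\ancestor,\DEGENERATE\}$-structure.

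The correctness argument rests entirely on \autoref{thm:partitive tree}. Since $\mathcal{S}$ is partitive, that theorem guarantees $\lambda^{-1}(\LINEAR)=\emptyset$ and that the partial order $<$ is empty; consequently the ternary relation $\betweeness_{\mathbb{T}'}$, which selects the triples $(x,y,z)$ with $x<y<z$ or $z<y<x$, is empty as well. Hence discarding $\betweeness$ loses no information: the labeling $\lambda$ takes values only in $\{\DEGENERATE,\PRIME\}$, the $\DEGENERATE$ nodes are recorded by the unary relation $\DEGENERATE_{\mathbb{T}'}$, and the $\PRIME$ nodes are recoverable as the inner nodes lying outside $\DEGENERATE_{\mathbb{T}'}$. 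Therefore the resulting $\{\ancestor,\DEGENERATE\}$-structure is exactly the structure $\mathbb{T}$ modeling the partitive tree $(T,\lambda)$ of $(U,\mathcal{S})$.

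It then remains to transfer the two guarantees. For membership, $\mathbb{T}'\in\tau'(\mathbb{S})$ holds by \autoref{thm:transduce weakly-partitive tree}, and applying the (total, deterministic) interpretation to $\mathbb{T}'$ yields $\mathbb{T}$, so $\mathbb{T}\in\tau(\mathbb{S})$. For soundness, recall that the weakly-partitive tree of $(U,\mathcal{S})$ is unique up to inverting each $<_t$; in the partitive case there are no $\LINEAR$ nodes and hence no such orders, so $\mathbb{T}'$ is the \emph{unique} output of $\tau'$ and has empty $\betweeness$. After the interpretation, every output of $\tau$ therefore equals $\mathbb{T}$, which is a partitive tree of $(U,\mathcal{S})$.

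Honestly there is no genuine obstacle here; all the substance lives in \autoref{thm:transduce weakly-partitive tree}, and the corollary is a matter of bookkeeping, namely checking that partitivity collapses $\betweeness$ to the empty relation so that the projection onto $\{\ancestor,\DEGENERATE\}$ is faithful. If one additionally wants $\tau$ to behave sensibly on non-partitive inputs, I would prepend an \MSO-filtering checking partitivity — ``for all overlapping $X,Y\in\mathcal{S}$ one has $X\symdiff Y\in\mathcal{S}$'' — but this is not required by the statement.
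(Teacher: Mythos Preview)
Your proposal is correct and follows exactly the approach the paper indicates: apply \autoref{thm:transduce weakly-partitive tree} and then drop the $\betweeness$ relation, relying on \autoref{thm:partitive tree} to ensure that in the partitive case $\lambda^{-1}(\LINEAR)=\emptyset$ so that $\betweeness$ is empty and the projection to the $\{\ancestor,\DEGENERATE\}$-structure loses nothing. The paper itself does not spell out any proof beyond calling the corollary a direct consequence of these two results, so your write-up is if anything more detailed than what the paper provides.
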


\subsection{Application to modular decomposition}\label{sec:modular-dec}
A construction similar to the one we provide here is also given in \cite{Courcelle06}, but based on graph operations. However, the two encodings
are equivalent \emph{w.r.t.} \MSO-transductions. Let~$G$ be a directed graph
and let~$M\subseteq V(G)$.
We say that~$M$ is a \defd{module (of~$G$)}
if for every $u\notin M$ and every $v,w\in M$,
${uv\in E(G)\!\iff\! uw\in E(G)}$
and~${vu\in E(G)\!\iff\! wu\in E(G)}$.
Clearly,
the empty set, $V(G)$, and all the singletons~$\{x\}$ for~$x\in V(G)$
are modules;
they are called \defd{the trivial modules} of~$G$.
We say a non-empty module~$M$ is \defd{maximal}
if it is not properly contained in any non-trivial module.
Furthermore, we use the notion of \emph{strong} modules to coincide with the strong sets in the set system consisting of all modules of $G$.
Let~$M$ and~$M'$ be two disjoint non-empty modules of~$G$.
Considering the edges that go from~$M$ to~$M'$,
namely edges from the set $(M\times M')\cap E(G)$,
we have two possibilities:
either it is empty, or it is equal to~$M\times M'$.
We write~\defd{$\nfullsucc{M}{M'}$} in the former case
and~\defd{$\fullsucc{M}{M'}$} in the latter.
(It is of course possible to have both~$\fullsucc{M}{M'}$ and~$\fullsucc{M'}{M}$.)
A~\defd{modular partition} of~$G$
is a partition $\mathcal{P}=\{M_1,\ldots,M_\ell\}$ of~$V(G)$
such that every~$M_i$ is a non-empty module.
A modular partition $\mathcal{P}=\{M_1,\dots,M_\ell\}$ is called \defd{maximal}
if it is non-trivial and every~$M_i$ is strong and maximal.
Note that every graph has exactly one maximal modular partition.

A \defd{modular decomposition} of~$G$
is a rooted tree~$T$
in which the leaves are the vertices of~$G$,
and for each inner node~$t\in T$,
$t$ has at least two children
and the set~$L(T_t)$ is a module of~$G$.
In a modular decomposition~$T$ of~$G$,
for each inner node~$t\in V(T)$
with children~$c_1,\ldots,c_r$,
the family $\mathcal{P}_t=\{L(T_{c_1}),\ldots,L(T_{c_r})\}$
is a modular partition of~$G[L(T_t)]$.
When each such partition is maximal,
the decomposition is unique
and it is called \defd{the maximal modular decomposition} of~$G$.
The maximal modular decomposition~$T$ of~$G$ alone
is not sufficient to characterize~$G$.
However, enriching~$T$ with,
for each inner node~$t$ with children~$c_1,\ldots,c_j$,
the information of which pair of modules $\big(L(T_{c_i}),L(T_{c_j})\big)$
is such that $\fullsucc{L(T_{c_i})}{L(T_{c_j}})$,
yields a unique canonical representation of~$G$.
Formally, the \defd{enriched modular decomposition} of~$G$
is the pair~$(T,F)$
where~$T$ is the maximal modular decomposition of~$G$ (with~$L(T)=V(G)$)
and~$F\subset V(T)\times V(T)$
is a binary relation,
that relates a pair~$(s,t)$ of nodes of~$T$,
denoted $st\in F$,
exactly when~$s$ and~$t$ are siblings and~$\fullsucc{L(T_s)}{L(T_t)}$.
The elements of~$F$ are called \defd{$\mathsf{m}$-edges}.

It should be mentioned that the family of all non-empty modules of~$G$
is known to be weakly-partitive (or even partitive when~$G$ is undirected).
In particular, the maximal modular decomposition~$T$ of~$G$
is the laminar tree induced by the family of strong modules.
Hence,  \autoref{thm:transduce weakly-partitive tree}
could be used to produce a partially-ordered and labelled tree which displays all the modules of~$G$.
However, this weakly-partitive tree is not sufficient
for being able to recover the graph~$G$ from it.
We now prove how to obtain the enriched modular decomposition of~$G$
through a \CMSO[2]-transduction.

To model enriched modular decompositions as relational structures
we use the relational vocabulary $\{\ancestor, \medge\}$
where $\ancestor$ and $\medge$ are two binary relation names.
An enriched modular decomposition $(T,F)$ of a graph $G$
is modelled by the {$\{\ancestor, \medge\}$}-structure $\mathbb{M}$
with universe $U_{\mathbb{M}}=V(T)$,
$\ancestor_{\mathbb{M}}$ being the set of pairs $(s,t)$
for which~$s$ is an ancestor of~$t$ in~$T$,
and $\medge_{\mathbb{M}}$ being the set of all pairs $(s,t)$
such that $st\in F$
(in particular, $s$ and~$t$ are siblings in~$T$).
t is reasonably straight forward to transduce the maximal modular decomposition of a graph using \autoref{lem:transduce laminar tree induced by weakly-partitive family}. For the sake of completeness we give a proof below.
\begin{cor}
    \label{thm:transduce modular decomposition}
    There exists a non-deterministic \CMSO[2]-transduction $\tau$
    such that for every directed graph $G$ represented as the $\{\edge\}$-structure $\mathbb{G}$,
    $\tau(\mathbb{G})$ is non-empty
    and every output in~$\tau(\mathbb{G})$ is equal to some $\{\ancestor,\medge\}$-structure $\mathbb{M}$
    representing the enriched modular decomposition~$(T,F)$ of $G$.
\end{cor}
\begin{proof}
    Let $G$ be a graph represented by the $\{\edge\}$-structure $\mathbb{G}$.
    Several objects are associated to~$G$,
    and each of them can be described by a structure:
    \begin{itemize}[nosep]
        \item let~$\mathcal{M}$ be the family of non-empty modules of~$G$
            and let~$\mathbb{S}$ be the $\{\SET\}$-structure
            modelling the weakly-partitive set system $(V(G),\mathcal{M})$
            with~$U_{\mathbb{S}}=V(G)$;
        \item let~$T$ be the laminar tree induced by the weakly-partitive family~$\mathcal{M}$
            and let~$\mathbb{T}$ be the $\{\ancestor\}$-structure
            modelling it with~$U_{\mathbb{T}}=V(T)$ and~$L(T)=V(G)\subset U_{\mathbb{T}}$;
        \item let~$F$ be the $\mathsf{m}$-edge relation,
            namely the subset of~$V(T)\times V(T)$ such that~$(T,F)$
            is the maximal modular decomposition of~$G$,
            and let~$\mathbb{M}$ be the $\{\ancestor,\medge\}$-structure
            modelling it with~$\mathbb{T}\sqsubset\mathbb{M}$ and $U_{\mathbb{T}}=U_{\mathbb{M}}$.
    \end{itemize}
    Our \CMSO[2]-transduction is obtained by composing the three following transductions:
    \begin{itemize}[nosep]
        \item$\tau_1$: an \MSO-interpretation which outputs the $\{\edge,\SET\}$-structure $\mathbb{G}\sqcup\mathbb{S}$ from~$\mathbb{G}$;
        \item$\tau_2$: the non-deterministic \CMSO[2]-transduction given by \autoref{lem:transduce laminar tree induced by weakly-partitive family}
            which produces the $\{\edge,\SET,\ancestor\}$-structure $\mathbb{G}\sqcup\mathbb{S}\sqcup\mathbb{T}$ from $\mathbb{G}\sqcup\mathbb{S}$;
        \item$\tau_3$: an \MSO-interpretation which outputs the $\{\ancestor,\medge\}$-structure $\mathbb{M}$ from $\mathbb{G}\sqcup\mathbb{S}\sqcup\mathbb{T}$.
    \end{itemize}

    In order to define $\tau_1$ it is sufficient to observe that
    there exists an \MSO-formula $\phi_{\SET}(Z)$
    with one monadic free-variable,
    which is satisfied exactly when~$Z$ is a non-empty module of~$G$.
    Then, since~$\tau_2$ is given by \autoref{lem:transduce laminar tree induced by weakly-partitive family},
    it only remains to define~$\tau_3$.
    Given an inner node~$t$,
    we can select, within \MSO,
    the set~$L(T_t)$ of leaves of the subtree rooted in~$t$.
    We thus assume a function~$\leafset$,
    with one first-order free-variable
    which returns the set of leaves of the subtree rooted at the given node.
    Equipped with this function,
    we can define the \MSO-formula $\phi_{\medge}$
    which selects pairs~$(s,r)$ of siblings
    such that~$sr\in F$.
    Remember that this happen exactly when there exist $u\in L(T_s)$ and~$v\in L(T_r)$
    such that~$uv\in E(G)$.
    Hence, $\phi_{\medge}$ could be defined as:
    \begin{align*}
        \phi_{\medge}(s, r):=
            s\neq r
            \:\land\:
            &
            \exists t\:
            \big(\parent(t,s)\land\parent(t,r)\big)\land
            \\*&
            \exists x\exists y\big(x\in\leafset(s)\land y\in\leafset(r)\land \edge(x,y)\big).
    \end{align*}
    Once defined, the \MSO-interpretation~$\tau_3$
    simply drops all non-necessary relations and predicates
    (namely~$\edge$ and~$\SET$)
    and keeps only the~$\ancestor$ and~$\medge$ relations.
\end{proof}

Notice that it is routine to design a deterministic \MSO-transduction which,
given an $\{\ancestor,\medge\}$-structure $\mathbb{M}$
representing an enriched modular decomposition of some directed graph~$G$,
produces the $\{\edge\}$-structure $\mathbb{G}$ representing~$G$.\\

In the following we provide an example of a natural property that can be \CMSO[2]-defined easily using the modular decomposition and hence (though backwards translation) is \CMSO[2]-definable on graphs.
\begin{exa}\label{ex:numModules}
We want to define the property of an undirected graph $G$ having an even number of modules in \CMSO[2]. Let $T$ be the modular decomposition of $G$ and $\mathcal{M}$ the set of modules of $G$. As $\mathcal{M}$ is partitive, we can understand $T$ as a partitive tree  and let $\lambda: V(T)\rightarrow \{\PRIME, \DEGENERATE\}$ be the labelling of $T$ that exists due to \autoref{thm:weakly-partitive tree} and \autoref{thm:partitive tree}. By \autoref{thm:weakly-partitive tree} it holds that the number of modules of $G[L(T_t)]$ for a node $t$ with children $s_1,\dots, s_\ell$ for which $\lambda(t)=\PRIME$ is exactly the sum of the numbers of modules minus $1$ (to avoid double counting the empty set) of the $G[L(T_{s_i})]$ plus $2$ (which counts $L(T_t)$ and $\emptyset$). On the other hand, if $\lambda(t)=\DEGENERATE$ then the number of modules of $G[L(T_t)]$ is exactly the product of the numbers of modules of the $G[L(T_{s_i})]$ (here the modules $L(T_t)$ and $\emptyset$ are automatically counted precisely once). 

Hence, given a set $X$ of nodes of $T$ we can check whether $X$ is precisely the set of nodes $t$ for which $G[L(T_t)]$ has an even number of modules in \CMSO[2]. For this we first check for each node $t$ their labelling. If $\lambda(t)=\PRIME$ we check that $t\in X$ \ifof the parity of the number of children of $t$ that are in $X$ is even (counting exactly the children for which the number of modules minus $1$ of $G[L(T_{s_i})]$ is odd). 
If $\lambda(t)=\DEGENERATE$ we check that $t\in X$ \ifof  there is a child $s$ of $t$ which is in $X$ (in this case the product of the number of modules of $G[L(T_{s_i})]$ is even).  Note that additionally leaves of $T$ are included in $X$ as a graph with one vertex contains exactly two module. The number of modules of $G$ is even \ifof there is a set $X$ which consist of all nodes $t$ for which $G[L(T_t)]$ is even and the root of $T$ is in $X$.

To define the sentence, we let $\mathbb{T}$ be the $\{\ancestor,\PRIME,\DEGENERATE\}$-structure modelling $T$ equipped with $\lambda$. While we did not provide a transduction producing this particular structure, it is easy to modify the above transductions to obtain the desired structure.  We further use auxiliary predicates $\children(Y,y)$ expressing that $Y$ is the set of children of $y$, $\rootT(y)$ expressing that $y$ is the root of $T$ and $\leaf(y)$ expressing that $y$ is a leaf of $T$, which are routine to implement.
Hence, we can define a sentence $\phi$ that is satisfied by any graph that has an even number of modules as follows.
   \begin{align*}
        \phi:=&\exists X \forall y \exists Y \Bigg(\children(Y,y)\land \Big[\rootT(y)\rightarrow  y\in X\Big]\land \Big[\leaf(y)\rightarrow y\in X\Big]\land \\
        &\Big[\Big(\big(\PRIME(y)\land \C[2](X\cap Y)\big)\lor \big(\DEGENERATE(y)\land  X\cap Y\neq \emptyset\big) \Big)\leftrightarrow y\in X \Big]\Bigg).
    \end{align*}
\end{exa}

\paragraph*{Cographs}
Let~$G$ be a graph, $(T,F)$ be its modular decomposition,
and~$(T,\lambda,<)$ be the weakly-partitive tree induced by the (weakly-partitive) family of its modules.
Let~$t$ be an inner node of~$T$, let~$\mathcal{C}$ be its set of children,
and let~$C$ be the graph~$\big(V(T)\setminus L(T), F\big)\left[\mathcal{C}\right]$ induced by~$F$ on the set of children of~$t$.
It can be checked that,
if~$\lambda(t)=\DEGENERATE$ then~$C$ is either a clique or an independent,
and if~$\lambda(t)=\LINEAR$ then~$C$ is a \emph{tournament consistent with~$<_t$}
(\ie, for every $x,y\in\mathcal{C}$, $xy$ is an edge of~$C$ \ifof $x<_t y$)
or with the inverse of~$<_t$.
In the former case,
we can refine the $\DEGENERATE$ label into~$\SERIES$ and~$\PARALLEL$ labels,
thus expressing that~$C$ is a clique or an independent set, respectively.
In the latter case, up-to reversing~$<_t$,
we can ensure that~$C$ is a tournament consistent with~$<_t$.
This yields a refined weakly-partitive tree~$(T,\gamma,<)$,
where~$\gamma$ maps inner nodes to $\{\SERIES,\PARALLEL,\PRIME,\LINEAR\}$
and~$<$ is the order $\bigcup_{t\in\gamma^{-1}(\LINEAR)}<_t$
which ensures that tournaments are consistent with the corresponding $<_t$.
Notice that this labelled and partially-ordered tree is now uniquely determined from~$G$.
Moreover, edges from~$F$ that connect children of a node not labelled by~$\PRIME$
can be recovered from the so-refined weakly-partitive tree.
In particular, if no nodes of~$T$ is labelled by~$\PRIME$,
$(T,F)$ and thus~$G$ is fully characterized by~$(T,\gamma,<)$.
Graphs for which this property holds
are known as \defd{directed cographs}. Directed cographs equivalently can be defined by a finite set of forbidden subgraphs (see \cite{CrespelleP06}) and in the undirected case, cographs are exactly the $P_4$-free graphs. Directed cographs
can be described by the refined weakly-partitive tree, called \defd{cotree}, explained above and formalized in the following statement.
\begin{thm}
    \label{thm:cotree}
    Let~$G$ be a directed cograph and let~$T$ be the laminar tree induced by the family of its strong modules.
    There exists a unique total labelling~$\lambda$ from the set~$V(T)\setminus L(T)$ of inner nodes of~$T$
    to the set $\{\SERIES,\PARALLEL,\LINEAR\}$ of labels,
    and, for each inner node $t\in\lambda^{-1}(\LINEAR)$,
    a unique linear ordering~$<_t$ of its children,
    such that every inner node having exactly two children is labelled $\SERIES$ or $\PARALLEL$,
    and the following condition is satisfied:
    \begin{itemize}[nosep]
        \item for every two leaves~$x$ and~$y$ of~$T$,
            denoting by~$t$ their least common ancestor and $s_x,s_y$ the children of $t$ that are ancestors of $x,y$ respectively,
            $xy$ is an edge of~$G$
            \ifof
            either~$t$ is labelled by~$\LINEAR$ and~$s_x<_ts_y$,
            or~$t$ is labelled by~$\SERIES$.
    \end{itemize}
\end{thm}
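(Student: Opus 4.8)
The plan is to obtain the labeling $\lambda$ and the orders $<_t$ by refining the weakly-partitive tree of the module family, and then to read off the edge relation node by node from the $\mathsf{m}$-edges $F$. First I would apply \autoref{thm:weakly-partitive tree} to the weakly-partitive family $\mathcal{M}$ of all non-empty modules of $G$: this produces the (unique) labeling $\lambda'\colon V(T)\setminus L(T)\to\{\DEGENERATE,\PRIME,\LINEAR\}$ together with, for each $t\in(\lambda')^{-1}(\LINEAR)$, an order $<_t$ of its children determined up to reversal, where $T$ is precisely the laminar tree induced by the strong modules. Since $G$ is a directed cograph, by definition no inner node of $T$ is labeled $\PRIME$, so $\lambda'$ takes values in $\{\DEGENERATE,\LINEAR\}$ only.

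The second ingredient is the \emph{locality} of edges. For an inner node $t$ and two distinct children $s,r$, the leaf sets $L(T_s)$ and $L(T_r)$ are modules of $G$; hence for $x\in L(T_s)$ and $y\in L(T_r)$ whether $xy\in E(G)$ does not depend on the choice of $x$ and $y$ but only on the pair $(s,r)$, and in fact $xy\in E(G)$ for all such $x,y$ exactly when $sr\in F$ (equivalently, $\fullsucc{L(T_s)}{L(T_r)}$). Thus the whole edge relation of $G$ is encoded by the quotient digraph $C_t$ on the children of each inner node $t$, whose arcs are the $\mathsf{m}$-edges between those children. It then remains to determine the shape of $C_t$ from the label of $t$. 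When $\lambda'(t)=\DEGENERATE$, every union of children's leaf sets is a module, so every pair of children is autonomous in $C_t$; this forces all vertices of $C_t$ to be pairwise twins and hence $C_t$ to be either a complete digraph (arcs both ways between every pair) or edgeless, and I set $\lambda(t):=\SERIES$ in the first case and $\lambda(t):=\PARALLEL$ in the second. When $\lambda'(t)=\LINEAR$, the modules among unions of children are exactly the $<_t$-intervals, which forces $C_t$ to be a transitive tournament consistent with $<_t$ or with its reverse; after possibly reversing $<_t$ I may assume consistency with $<_t$ and keep $\lambda(t):=\LINEAR$.

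Combining locality with these three shapes gives the required equivalence directly: for leaves $x,y$ with least common ancestor $t$ and children $s_x,s_y$ above them, $xy\in E(G)$ iff $s_xs_y\in F$, which holds for all ordered pairs when $\lambda(t)=\SERIES$, for no pair when $\lambda(t)=\PARALLEL$, and exactly when $s_x<_t s_y$ when $\lambda(t)=\LINEAR$; this is precisely the stated condition. Every inner node with exactly two children is labeled $\DEGENERATE$ by \autoref{thm:weakly-partitive tree}, hence $\SERIES$ or $\PARALLEL$ after refinement, so $\LINEAR$ never occurs there. For uniqueness, note that $T$ and $\lambda'$ are already unique by \autoref{thm:weakly-partitive tree}, the refinement of a $\DEGENERATE$ node into $\SERIES$/$\PARALLEL$ is dictated by whether $C_t$ is complete or edgeless, and the orientation of a $\LINEAR$ node is dictated by requiring consistency with $<_t$ rather than its reverse. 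Conversely, any $(\lambda,<)$ satisfying the edge condition must reproduce through $C_t$ a complete digraph, an edgeless graph, or a transitive tournament; since every inner node of a laminar tree has at least two children, these three patterns are mutually distinct, so $\lambda$ is forced, and each $<_t$ is forced as the unique topological order of the tournament $C_t$.

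The main obstacle I anticipate is the translation from the abstract labels of the weakly-partitive tree to the concrete adjacency patterns of the quotients $C_t$: showing that a $\DEGENERATE$ node yields a complete-or-edgeless quotient and that a $\LINEAR$ node yields a transitive tournament. This requires unfolding the weakly-partitive labels into statements about which unions of children are modules and matching them against the possible directed adjacencies. The delicate point is at degenerate nodes, where one must rule out mixed or one-way orientations (e.g. a quotient in which some pairs carry two arcs and others none, or a one-directional complete quotient): such a quotient would contain two children that are not twins, so some pair would fail to be autonomous, contradicting degeneracy.
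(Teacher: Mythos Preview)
Your proposal is correct and follows essentially the same approach as the paper: the paper also derives the cotree by starting from the weakly-partitive tree of the module family, examining for each inner node $t$ the quotient digraph $C_t$ on its children given by the $\mathsf{m}$-edges $F$, observing that $C_t$ is a clique or an independent set when $t$ is $\DEGENERATE$ and a tournament consistent with $<_t$ (or its reverse) when $t$ is $\LINEAR$, and then refining labels and fixing orientations accordingly. Your write-up is in fact more detailed than the paper's, which leaves the shape of $C_t$ at ``it can be checked that'' and states \autoref{thm:cotree} without a formal proof.
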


We naturally model cotrees as $\{\ancestor,\SERIES,\epsord\}$-structures as follow.
A cotree $(T,\gamma,<)$ is modelled by~$\mathbb{C}$
where~$U_{\mathbb{C}}=V(T)$, $\SERIES_{\mathbb{C}}=\gamma^{-1}(\SERIES)$, and $\epsord_{\mathbb{C}}=\{(x,y)\mid x<y\}$.
The nodes that are labelled by $\LINEAR$ could be recovered as those inner nodes whose children are related by $\epsord$,
while the nodes that are labelled by $\PARALLEL$ could be recovered as those inner nodes which are labelled neither by $\SERIES$ nor by $\LINEAR$.
Based on \autoref{thm:cotree} and as a consequence of \autoref{thm:transduce modular decomposition},
we can design a \CMSO[2]-transduction which produces the cotree of a cograph~$G$ from~$G$.
\begin{cor}
    \label{thm:transduce cotree}
    There exists a non-deterministic \CMSO[2]-transduction~$\tau$
    such that, for each directed cograph~$G$ modelled by the $\{\edge\}$-structure~$\mathbb{G}$,
    $\tau(\mathbb{G})$ is non-empty
    and every output in~$\tau(\mathbb{G})$ is equal to some $\{\ancestor,\SERIES,\epsord\}$-structure~$\mathbb{C}$
    representing the cotree~$(T,\gamma,<)$ of~$G$.
\end{cor}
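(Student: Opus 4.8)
The plan is to obtain the cotree by post-processing the enriched modular decomposition with a deterministic \MSO-interpretation, so that all the heavy lifting is delegated to the already-established transduction for modular decompositions. First I would invoke \autoref{thm:transduce modular decomposition} on the input $\{\edge\}$-structure~$\mathbb{G}$, producing an $\{\ancestor,\medge\}$-structure~$\mathbb{M}$ that represents the enriched modular decomposition~$(T,F)$ of~$G$; all of the non-determinism is confined to this step, and its non-emptiness will guarantee that the composed transduction is non-empty. Since the target vocabulary $\{\ancestor,\SERIES,\epsord\}$ refers only to the tree order together with data recoverable from~$F$, it then suffices to compose with a deterministic \MSO-interpretation that keeps~$\ancestor$, discards~$\medge$, and defines the unary predicate~$\SERIES$ and the binary relation~$\epsord$ from~$\medge$.

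To design this interpretation I would rely on the structural discussion preceding \autoref{thm:cotree}: for an inner node~$t$ with set of children~$\mathcal{C}$, the graph that~$\medge$ induces on~$\mathcal{C}$ is a clique when~$\gamma(t)=\SERIES$, an independent set when~$\gamma(t)=\PARALLEL$, and a tournament consistent with~$<_t$ when~$\gamma(t)=\LINEAR$. Because~$G$ is a directed cograph, no node of~$T$ is labeled~$\PRIME$, so these three shapes are exhaustive and the label of each inner node is determined by testing the shape of its children-graph. This yields
\[
  \phi_{\SERIES}(t):=\exists s\,\parent(t,s)\land\forall s_1\forall s_2\Big(\big(\parent(t,s_1)\land\parent(t,s_2)\land s_1\neq s_2\big)\implies\big(\medge(s_1,s_2)\land\medge(s_2,s_1)\big)\Big),
\]
while the $\LINEAR$ nodes are captured by an analogous formula~$\phi_{\LINEAR}(t)$ asserting that every pair of distinct children is joined by~$\medge$ in exactly one direction. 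Since at a $\LINEAR$ node one has~$x<_t y$ precisely when~$\medge(x,y)$, I would finally set
\[
  \phi_{\epsord}(x,y):=\exists t\big(\parent(t,x)\land\parent(t,y)\land\phi_{\LINEAR}(t)\land\medge(x,y)\big),
\]
so that~$\epsord$ interprets the partial order $\bigcup_{t\in\gamma^{-1}(\LINEAR)}<_t$.

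The correctness of the output reduces to \autoref{thm:cotree}, which guarantees that the labeling~$\gamma$ and the orders~$<_t$ exist and are uniquely determined, and whose edge-recovery clause certifies that the relation selected by~$\phi_{\epsord}$ is exactly the canonical order — tournament-consistency fixing the orientation and thereby removing the reversal ambiguity present for general weakly-partitive trees. The main point to verify is thus that testing the shape of the children-graph faithfully recovers~$\gamma$ and each~$<_t$: one must check that the clique/independent/tournament trichotomy is exhaustive (which follows from the absence of~$\PRIME$ nodes) and that it agrees with the labels, including the boundary convention that nodes with exactly two children are labeled~$\SERIES$ or~$\PARALLEL$. Once this is settled, the interpretation is a total function, so composing it after the non-empty non-deterministic \CMSO[2]-transduction of \autoref{thm:transduce modular decomposition} yields a non-deterministic \CMSO[2]-transduction each of whose outputs is the $\{\ancestor,\SERIES,\epsord\}$-structure~$\mathbb{C}$ representing the cotree~$(T,\gamma,<)$ of~$G$.
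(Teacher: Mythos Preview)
Your proposal is correct and follows exactly the route the paper indicates: the corollary is stated without a detailed proof, merely as a consequence of \autoref{thm:transduce modular decomposition} together with \autoref{thm:cotree}, and you have faithfully spelled out the deterministic \MSO-interpretation that recovers $\SERIES$ and $\epsord$ from the $\medge$ relation on siblings. Your explicit formulas and the care you take with the trichotomy on children-graphs (clique / independent / tournament) go beyond what the paper writes but are precisely the details one would supply to turn the paper's one-line justification into a full argument.
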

As for undirected graphs the set system of modules is partitive, a simpler result in which the label $\LINEAR$ is not needed holds for undirected cographs. A
corollary of \autoref{thm:transduce cotree} is that on subclasses of cographs recognisability and \CMSO[2]-definability are equivalent, which was not
known. We will explain this corollary in a more general setting using split-decomposition in \autoref{sec:restricstedRankWidth}.

\section{Transducing split and bi-join decompositions}\label{sec:split}
In this section we consider systems of bipartitions. In graph theory, two known systems of bipartitions are splits and bi-joins. Both are instances of so called
weakly-bipartitive systems of bipartitions (or partitive systems of bipartitions in case of undirected graphs). In this section, we first give a
\CMSO[2]-transduction producing the canonical tree-like decomposition of weakly-bipartitive systems of bipartitions and then derive
\CMSO[2]-transductions for producing split and bi-join decompositions of graphs. We also emphasize that Courcelle provides a similar transduction for the split
  decomposition of strongly connected directed graphs, but using order-invariant MSO \cite{Courcelle06}. Again, a corollary of our construction is that we can
  compute split decompositions of strongly connected directed graphs using only \CMSO[2] and on the basis of the edge
  relation. This answers another question from \cite{Courcelle06}.
\subsection{Transducing weakly-bipartitive trees}
A \emph{bipartition system} is a pair $(U,\mathcal{B})$ consisting of a finite set $U$, the \emph{universe}, and a family $\mathcal{B}$ of bipartitions of $U$ such that $\{\emptyset,U\}\notin\mathcal{B}$ and
$\{\{a\},U\setminus \{a\}\}\in \mathcal{B}$ for all $a\in U$.

To model bipartition systems, we use the extended vocabulary~$\{\BIPARTITION\}$
where $\BIPARTITION$ is a unary set predicate name.
A bipartition system~$(U,\mathcal{B})$ is thus naturally modeled
as the $\{\BIPARTITION\}$-structure $\mathbb{B}$
with universe~$U_{\mathbb{B}}=U$
and interpretation~$\{\{X,U\setminus X\}\mid X\in \BIPARTITION_{\mathbb{B}}\}=\mathcal{B}$.

Two bipartitions $\{X,Y\}$ and $\{X',Y'\}$ of a set $U$ \emph{overlap} if the four sets $X\cap X'$, $X\cap Y'$, $Y\cap X'$ and $Y\cap Y'$ are non-empty.
A bipartition family is said to be \emph{laminar} (aka \emph{overlap-free})
if no two bipartitions in $\mathcal{B}$ overlap.
By extension, we call a family $\mathcal{B}$ of bipartition of a set $U$ \emph{laminar}
whenever $(U,\mathcal{B})$ is a bipartition system which is laminar.
Similarly to laminar set systems, we can associate a tree with each laminar bipartition system. The tree $T$, called the \emph{laminar tree induced by $(U,\mathcal{B})$} (or \emph{laminar tree of $(U,\mathcal{B})$}), associated with a laminar bipartition system $(U,\mathcal{B})$ is unrooted, each element of $U$ corresponds to a leaf of $T$ and for each bipartition $\{X,Y\}$ in $\mathcal{B}$ there is exactly one edge $e$ such that $X=L(T_1)$ and $Y=L(T_2)$ for the two connected components $T_1,T_2$ of $T-e$. We remark that there is a unique such tree and each inner node has degree at least three.
Furthermore, the size of the laminar tree is linearly bounded in the size of the universe $U$.

A bipartition system $(U,\mathcal{B})$ is said to be \emph{weakly-bipartitive} if for every two overlapping bipartitions $\{X,Y\},\{X',Y'\}\in \mathcal{B}$, the biparitions $\{X\cup X',Y\cap Y'\}$, $\{X\cup Y',Y\cap X'\}$, $\{Y\cup X', X\cap Y'\}$ and $\{Y\cup Y', X\cap X'\}$ are in $\mathcal{B}$.
It is \emph{bipartitive}, if additionally for every two overlapping bipartitions $\{X,Y\},\{X',Y'\}\in \mathcal{B}$,
the bipartition $\{X\symdiff X', Y\symdiff X'\}$ is also in $\mathcal{B}$.
By extension, we call a family $\mathcal{B}$ of bipartitions of a set $U$ \emph{weakly-bipartitive} or \emph{bipartitive} whenever $(U,\mathcal{B})$ is a bipartition system which is weakly-bipartitive or partitive, respectively.
\begin{figure}
    \includegraphics[scale=0.65]{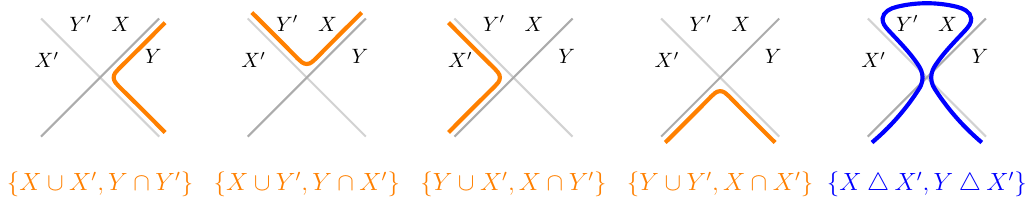}%
    \caption{%
        Bipartitions to be included in a weakly-bipartitive system of bipartitions (excludes the rightmost) or in a bipartitive  system of bipartitions (includes the rightmost) for two overlapping bipartitions $\{X,Y\}$ and $\{X',Y'\}$.%
    }%
    \label{fig:overlapBipartition}
\end{figure}
\bigbreak

A bipartition $\{X,Y\}$ of a bipartition system~$(U,\mathcal{B})$ is said to be \emph{strong} if $\{X,Y\}$ does not overlap with any other bipartition in $\mathcal{B}$. We denote the subfamily of strong bipartitions in $\mathcal{B}$ by $\mathcal{B}_!$ and remark that $\mathcal{B}_!$ is a laminar family.
Hence, it induces a laminar tree~$T$.
By extension,
we say that~$T$ is \defd{induced by the bipartition system~$(U,\mathcal{B})$} (or simply by~$\mathcal{B}$).  Note that $\mathcal{B}$ is not required to be laminar.
As the laminar tree $T$ of a bipartition system~$(U,\mathcal{B})$ is undirected, we model $T$ by the $\{\tedge\}$-structure $\mathbb{T}$ with universe $U_\mathbb{T}=V(T)$ and binary relation $\tedge_\mathbb{T}$ modelling the edge relation of $T$.

The following lemma gives a connection between laminar trees of set systems and laminar trees of bipartition systems. Note that a very similar statement follows from \cite[Lemma 1.14]{raoThesis} and hence we only provide a proof for sake of completeness.
\begin{lem}\label{prop:bipartitiveVSpartitive}
    Let $(U,\mathcal{B})$ be a bipartition system and $a\in U$.
    Then $(U, \mathcal{S}_a)$ is a set system where $\mathcal{S}_a:=\{X: a\notin X \text{ and } \{X,U\setminus X\} \in \mathcal{B}\}\cup \{\{a\},U\setminus\{a\}\}$.
    Moreover, the laminar tree of $(U,\mathcal{B})$ can be obtained from the laminar tree of $(U,\mathcal{S}_a)$ by deleting the root, making node $\{a\}$ adjacent to the node $U\setminus\{a\}$ and making the tree undirected.
\end{lem}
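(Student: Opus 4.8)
The plan is to establish the two assertions separately: first that $(U,\mathcal{S}_a)$ is a genuine set system, and second the explicit relationship between the two laminar trees. For the first part I would simply verify the defining axioms of a set system. We have $\emptyset\notin\mathcal{S}_a$ because no bipartition has an empty side (as $\{\emptyset,U\}\notin\mathcal{B}$), we have $U\in\mathcal{S}_a$ by construction, and for every $b\in U$ we have $\{b\}\in\mathcal{S}_a$: indeed $\{a\}$ is added explicitly, and for $b\neq a$ the bipartition $\{\{b\},U\setminus\{b\}\}$ lies in $\mathcal{B}$ with $a\in U\setminus\{b\}$, so $\{b\}$ is the side not containing $a$ and hence belongs to $\mathcal{S}_a$. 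This first part is routine.

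The main work is the tree correspondence. The natural approach is to exhibit a bijection between the elements of $\mathcal{S}_a\setminus\{U\}$ and the bipartitions of $\mathcal{B}$, and to check that it is compatible with the laminar tree structure. The key observation is that the map $X\mapsto\{X,U\setminus X\}$ is a bijection from $\{X : a\notin X,\ \{X,U\setminus X\}\in\mathcal{B}\}$ onto $\mathcal{B}$, since every bipartition $\{Y,U\setminus Y\}\in\mathcal{B}$ has exactly one side not containing $a$ (the two sides are disjoint and cover $U$). Under this bijection, I would argue that the subset relation $X\subseteq X'$ among sets in $\mathcal{S}_a$ avoiding $a$ corresponds exactly to the nesting relation of the associated bipartitions, so that laminarity transfers and the inclusion-ordered tree of the rooted laminar set system maps onto the nesting structure of the bipartitions. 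The root $U$ of the laminar tree of $(U,\mathcal{S}_a)$ corresponds to no bipartition, while its children together with the singleton $\{a\}$ sit just below it.

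With this correspondence in hand, I would verify the described surgery step by step. The laminar tree of $(U,\mathcal{S}_a)$ is rooted at $U$; the child $\{a\}$ is a leaf, and the remaining children of $U$ correspond to the maximal bipartition-sides avoiding $a$. Deleting the root $U$ and the root edges, then inserting a single edge between the leaf $\{a\}$ and what was the node $U\setminus\{a\}$ (equivalently, re-attaching the component formerly hanging below $U$), reconstructs the unique unrooted tree in which each edge cut realises a bipartition of $\mathcal{B}$. I would check that an edge of the resulting unrooted tree separates the leaf set into $X$ and $U\setminus X$ precisely when the corresponding inclusion-minimal witness in $\mathcal{S}_a$ gives $\{X,U\setminus X\}\in\mathcal{B}$, and that the node $\{a\}$ becomes an ordinary leaf of degree one as required. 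Using the uniqueness of the laminar tree of a laminar bipartition system (noted in the paragraph preceding the statement), it suffices to confirm that the constructed tree has the correct leaf set $U$, realises exactly the bipartitions of $\mathcal{B}_!$ as edge cuts, and has all inner nodes of degree at least three.

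The main obstacle I anticipate is bookkeeping around the special role of $a$ and the root: one must be careful that the single added edge correctly merges the component below the root into an unrooted structure without creating a spurious degree-two node, and that the node $U\setminus\{a\}$ is indeed present as a node of the laminar tree of $(U,\mathcal{S}_a)$ (which holds precisely when $U\setminus\{a\}\in\mathcal{S}_a$, i.e.\ when $\{\{a\},U\setminus\{a\}\}$ is a strong bipartition, and otherwise $U\setminus\{a\}$ is identified with the appropriate child of the root). Handling this boundary case cleanly, together with matching strong sets to strong bipartitions under the bijection, is where the argument requires genuine care rather than mere routine verification.
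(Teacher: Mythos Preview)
Your approach is essentially the paper's: verify the set-system axioms (your argument matches the paper's almost verbatim), then perform the described surgery on the rooted laminar tree and check that the resulting unrooted tree realises each bipartition of $\mathcal{B}$ by exactly one edge cut, invoking uniqueness of the laminar tree to conclude.

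The obstacle you anticipate in your final paragraph, however, is illusory, and it arises because you miss a simplifying observation that the paper makes explicit. First, $U\setminus\{a\}$ is \emph{always} a node of the laminar tree of $(U,\mathcal{S}_a)$: the bipartition-system axioms give $\{\{a\},U\setminus\{a\}\}\in\mathcal{B}$ unconditionally, so $U\setminus\{a\}\in\mathcal{S}_a$; there is no case analysis on whether this bipartition is strong (and in any event a trivial bipartition cannot overlap anything). Second, and more to the point, the root $U$ has \emph{exactly two} children, namely $\{a\}$ and $U\setminus\{a\}$: every $X\in\mathcal{S}_a$ other than $U$ and $\{a\}$ satisfies $a\notin X$, hence $X\subseteq U\setminus\{a\}$, so no other maximal proper member exists. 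Your phrase ``the remaining children of $U$'' (plural) signals that you have not noticed this. Once you do, the surgery is simply the removal of a degree-two root followed by an edge joining its two former children; the degree of $U\setminus\{a\}$ is unchanged (it loses a parent and gains the neighbour $\{a\}$), so no spurious degree-two inner node can appear and there is no boundary case. The paper records exactly these two facts before carrying out the edge-cut verification.
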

\begin{proof}
    Let $\mathcal{B}$ be a family of bipartitions of a set $U$ and $a\in U$ any element.
    First observe that $(U,\mathcal{S}_a)$ is a set system
    as $\emptyset\notin \mathcal{S}_a$ (since $\{\emptyset,U\}\notin \mathcal{B}$),
    $U\in \mathcal{S}_a$,
    and $\{b\}\in \mathcal{S}_a$ for every $b\in U$ (since~$\{a\}\in\mathcal{S}_a$ and for~$b\neq a$, $\{\{b\},U\setminus \{b\}\}\in \mathcal{B}$).

    To argue the second assertion, assume that $T$ is the laminar tree of the set system $(U,\mathcal{S}_a)$. First observe that in the set system $(U,\mathcal{S}_a)$ the element $a$ is not contained in any set apart from $U$ and $\{a\}$ and hence the leaf $\{a\}$ is a child of the root $U$ of $T$. Additionally, the set $U\setminus \{a\}$ is in $S_a$ as the bipartition $(\{a\},U\setminus \{a\})$ is in $\mathcal{B}$ by definition and hence $U\setminus \{a\}$ is also a child of the root $U$ of $T$. Let $T'$ be the tree obtained from $T$ by removing node $U$, making all edges undirected and adding an edge from node $\{a\}$ to node $U\setminus \{a\}$. Indeed, by our previous observations, $T'$ is an undirected tree. Furthermore, there is a correspondence between elements of $U$ and the leaves of $T'$ (element $b\in U$ corresponds to leaf $\{b\}$ of $T'$). To check that $T'$ is indeed the laminar tree of $T$, we need to verify that for each bipartition $\{X,Y\}$ in $\mathcal{B}$ there is exactly one edge $e$ of $T'$ such that $X=L(T'_1)$ and $Y=L(T'_2)$ for the two connected components $T'_1,T'_2$ of $T'-e$. We say that the edge \emph{$e$ implements the bipartition $\{X,Y\}$}.
    Assume that $\{X,Y\}$ is any bipartition from $\mathcal{B}$. First consider the case that $\{X,Y\}$ is the bipartition $\{\{a\},U\setminus\{a\}\}$. Then $\{X,Y\}$ is clearly implemented by the edge $\{\{a\},U\setminus\{a\}\}$ of $T'$. Now assume that $\{X,Y\}$ is any other bipartition in $\mathcal{B}$ and assume, without loss of generality, that $a\in Y$. Hence, $X\in \mathcal{S}_a$. Let $X'$ be the parent of $X$ in $T$. Note that $X$ must be a proper subset of $U\setminus \{a\}$ as by our assumption $\{X,Y\}$ is not the partition $\{\{a\},U\setminus \{a\}$ and hence $X'$ cannot be $U$. As we only deleted edge incident to $U$ in our construction of $T'$, $\{X,X'\}$ is an edge in $T'$ which clearly implements the bipartition $\{X,Y\}$. Finally, every bipartition $\{X,Y\}\in \mathcal{B}$ is clearly implemented by only one edge as every inner node of the tree $T'$ has degree at least three. Therefore, $T'$ is the laminar tree of $(U,\mathcal{B})$.
\end{proof}

\begin{lem}
    \label{lem:transduce laminar tree induced by weakly-bipartitive family}
    Let~$\Sigma$ be an extended vocabulary,
    including a unary set predicate name~$\BIPARTITION$
    and not including the binary relational symbol~$\tedge$.
    There exists a non-deterministic overlay \CMSO[2]-transduction~$\tau$
    such that,
    for each bipartition system~$(U,\mathcal{B})$
    represented as the $\{\BIPARTITION\}$-structure~$\mathbb{B}$
    and inducing the laminar tree~$T$ with ${L(T)=U}$,
    and for each $\Sigma$-structure~$\mathbb{A}$ with~$\mathbb{B}\sqsubseteq\mathbb{A}$,
    $\tau(\mathbb{A})$ is non-empty
    and every output in~$\tau(\mathbb{A})$ is equal to $\mathbb{A}\sqcup \mathbb{T}$ for some $\{\tedge\}$-structure $\mathbb{T}$ representing~$T$.
\end{lem}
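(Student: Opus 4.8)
The plan is to reduce the undirected (bipartition) case to the rooted (set-system) case already handled by \autoref{thm:transduce laminar tree}, using the correspondence of \autoref{prop:bipartitiveVSpartitive}. Recall that the target tree $T$ is the laminar tree induced by the family $\mathcal{B}_!$ of strong bipartitions, which is laminar. The idea is to fix one ``base'' element $a\in U$, which turns the laminar bipartition system $(U,\mathcal{B}_!)$ into a laminar \emph{set} system $(U,\mathcal{S}_a)$ via \autoref{prop:bipartitiveVSpartitive}, to transduce the rooted laminar tree $T'$ of $(U,\mathcal{S}_a)$, and finally to rewire $T'$ into $T$.

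I would obtain $a$ non-deterministically: a colouring transduction adds a fresh unary relation, and a filtering transduction (specified by an \MSO-sentence asserting that exactly one element carries this colour) keeps only the markings that select a single element. For any such marking, \autoref{prop:bipartitiveVSpartitive} applied to $(U,\mathcal{B}_!)$ gives the set system $(U,\mathcal{S}_a)$ with $\mathcal{S}_a=\{X:a\notin X,\ \{X,U\setminus X\}\in\mathcal{B}_!\}\cup\{\{a\},U\}$, and guarantees both that $(U,\mathcal{S}_a)$ is laminar and that its rooted laminar tree $T'$ yields $T$ upon deleting the root, adding an edge between $\{a\}$ and $U\setminus\{a\}$, and dropping orientations.

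Next I would make $\mathcal{S}_a$ \MSO-definable on the marked input and invoke \autoref{thm:transduce laminar tree}. On the structure carrying $\BIPARTITION$ and the marked $a$, one first writes an \MSO-formula $\phi_{\mathcal{B}}(X)$ expressing $\{X,U\setminus X\}\in\mathcal{B}$, then a formula expressing that $\{X,U\setminus X\}$ is strong (namely $\phi_{\mathcal{B}}(X)$ together with the requirement that for every $X'$ with $\phi_{\mathcal{B}}(X')$ at least one of the four sets $X\cap X'$, $X\setminus X'$, $X'\setminus X$ and $U\setminus(X\cup X')$ is empty). An interpretation then adds a fresh unary set predicate interpreting $\mathcal{S}_a$ while leaving every symbol of $\Sigma$ unchanged; up to renaming it into $\SET$, \autoref{thm:transduce laminar tree} applies because $(U,\mathcal{S}_a)$ is laminar, and produces the input enriched with the $\{\ancestor\}$-structure of $T'$, where $L(T')=U$.

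It remains to turn $T'$ into $T$ by a deterministic \MSO-interpretation followed by a universe restriction. The root of $T'$ is the unique $\ancestor$-maximal node; its only two children are the leaf $a$ and the node $U\setminus\{a\}$ (since every $a$-free side is contained in $U\setminus\{a\}$, while $a$ is maximal below the root). I would define $\tedge(x,y)$ to hold exactly when neither $x$ nor $y$ is the root of $T'$ and one is the child of the other in $T'$, or when one of $x,y$ is the coloured leaf $a$ and the other is the unique child of the root distinct from $a$; then I would restrict the universe to drop the root and discard the auxiliary symbols $\SET$, $\ancestor$ and the colour. Throughout, the elements of $U$ and all $\Sigma$-relations are preserved, so the whole composition is an overlay \CMSO[2]-transduction, non-empty because \autoref{thm:transduce laminar tree} is. The only genuinely new ingredient is the passage between the two kinds of laminar trees, which \autoref{prop:bipartitiveVSpartitive} already supplies; the main point to get right is the bookkeeping around $a$ — in particular that $\{\{a\},U\setminus\{a\}\}$ is always strong (so that $U\setminus\{a\}\in\mathcal{S}_a$) and that the root of $T'$ has exactly these two children, so that inserting a single $\tedge$ correctly reconnects $T$. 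All remaining steps mirror the routine manipulations in the proof of \autoref{lem:transduce laminar tree induced by weakly-partitive family}.
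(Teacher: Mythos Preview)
Your proposal is correct and follows essentially the same route as the paper: guess a base element~$a$, pass to a set system via \autoref{prop:bipartitiveVSpartitive}, transduce the rooted laminar tree, and rewire by deleting the root and joining~$a$ to~$U\setminus\{a\}$. The only cosmetic difference is that you first restrict to strong bipartitions and then invoke \autoref{thm:transduce laminar tree} on the resulting laminar~$\mathcal{S}_a$, whereas the paper forms~$\mathcal{S}_a$ from the full~$\mathcal{B}$ and invokes \autoref{lem:transduce laminar tree induced by weakly-partitive family}, which extracts the strong sets internally; the two orderings commute since a set~$X$ with~$a\notin X$ is strong in~$\mathcal{S}_a$ if and only if the bipartition~$\{X,U\setminus X\}$ is strong in~$\mathcal{B}$.
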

\begin{proof}
    Let $\mathcal{B}$ be a family of bipartitions of $U$ represented as the extended relational structure~$\mathbb{B}$ and $\mathbb{A}$ a $\Sigma$-structure
    with $\mathbb{B}\sqsubseteq \mathbb{A}$. We first use the following two simple atomic \CMSO[2]-transductions to obtain a set system from $(U,\mathcal{B})$.
    Guess any colouring $A$ for which there is a single $a\in U$ with $A(a)$ being satisfied. Interpret using the following formula to obtain a $\SET$-predicate:
    $$\phi_\SET(X):=\forall a (A(a)\implies (X=\{a\} \lor X=U \lor a\notin X))\land (\BIPARTITION(X) \lor \BIPARTITION(U\setminus X)).$$
    The resulting set system $(U,\mathcal{S}_a)$ represented by the $\{\SET\}$-structure $\mathbb{S}$ is a set system by \autoref{prop:bipartitiveVSpartitive}. Hence, we can use the transduction from \autoref{lem:transduce laminar tree induced by weakly-partitive family} to obtain the $\Sigma\cup\{\ancestor\}$-structure $\mathbb{A}\sqcup \mathbb{T}$ where $\mathbb{T}$ is the $\{\ancestor\}$-structure representing the laminar tree $T$ of $(U,\mathcal{S}_a)$. We now use the following two atomic transductions to obtain the weakly-bipartitive tree of $(U,\mathcal{B})$ according to \autoref{prop:bipartitiveVSpartitive}. We first restrict the universe to all nodes which are not the root of $\mathbb{T}$. We then interpret using the following formula to obtain the $\tedge$-predicate which makes the graph undirected and adds an edge from $a$ to the node representing the set $U\setminus \{a\}$ which is the root of $T$ after removing the node corresponding to $U$:
    \begin{align*}
    \phi_\tedge(x,y):=&\parent(x,y)\lor \parent(y,x)\lor \\&(x=a\land y\not= a \land \lnot \exists z \parent(z,y))\lor (x\not= a\land y=a \land \lnot \exists z \parent(z,x)).
    \end{align*}
    Finally, the transduction keeps only $\tedge$ and any $Q\in \Sigma$
    as relations for its output. We do not need to filter the outputs as any guess for $A$ yields a valid laminar tree.  
\end{proof}

Similar to the set system case, for each weakly-bipartitive family $\mathcal{B}$ we can equip the laminar tree with a labelling of its inner nodes and a partial order which
characterizes the family $\mathcal{B}$.
We remind the reader that for a node $t$ of a tree $T$ with neighbour $s$ we denote by $T^t_s$ the connected component of $T-t$ containing $s$.

\begin{thmC}[{\cite[Theorem 3]{montgolfierThesis}}]\label{thm:weakly-bipartitive tree}
    Let~$\mathcal{B}$ be a weakly-bipartitive family,
    $\mathcal{B}_!$ be its subfamily of strong bipartitions,
    and~$T$ the laminar tree it induces.
    There exists a total labelling function~$\lambda$ from the set $V(T)\setminus L(T)$
    of the inner nodes of~$T$ to the set $\{\DEGENERATE,\PRIME,\LINEAR\}$, and, for each inner node $t\in \lambda^{-1}(\LINEAR)$, a linear ordering $<_{t}$ of its neighbours, such that every inner node of degree exactly three is labeled by $\DEGENERATE$ (nodes of degree greater than three can be labelled $\DEGENERATE$, $\PRIME$ or $\LINEAR$) and the following conditions are satisfied:
    \begin{itemize}[nosep]
        \item for each bipartition~$\{X,Y\}\in\mathcal{B}\setminus\mathcal{B}_!$,
            there exists~$t\in V(T)\setminus L(T)$
            and a subset~$\mathcal{C}$ of neighbours of~$t$
            such that either $X$ or $Y$ is equal to the set
            $\bigcup_{c\in\mathcal{C}}L(T^t_c)$ and either $\lambda(t)=\LINEAR$ and $\mathcal{C}$ is a $<_t$-interval, or $\lambda(t)=\DEGENERATE$;
        \item conversely,
            for each inner node~$t$
            and each non-empty subset~$\mathcal{C}$, $\mathcal{C}\not= U$ of neighbours of~$t$,
            the bipartition
            $\big\{\bigcup_{c\in\mathcal{C}}L(T^t_c),\bigcup_{c\notin\mathcal{C}}L(T^t_c)\big\}$ is a member of~$\mathcal{B}$ if either $\lambda(t)=\LINEAR$ and $\mathcal{C}$ is $<_t$-interval or $\lambda(t)=\DEGENERATE$.
    \end{itemize}
\end{thmC}
The tree $T$ and its labelling function $\lambda$ are uniquely determined by $\mathcal{B}$. We note that the total orders $<_t$ are uniquely determined by
$\mathcal{B}$ up to inverting and cyclic shifting, \ie, for a linear order $<$ of $X=\{x_1,\dots,x_\ell\}$ with $x_1<\dots<x_\ell$ we call the order $<'$ of $X$
with $x_\ell<'x_1<'\dots<' x_{\ell-1}$ a \defd{cyclic shift of $<$}. To see this, note that in the theorem above the bipartitions obtained from nodes with label
$\LINEAR$ can be equivalently obtained from a $<_t$-interval or the complement of a $<_t$-interval. Indeed, a cyclic shifting of a linear order $<$ maintains the property of being either a $<$-interval or the complement of a $<$-interval.
Neglecting that $<_t$ are only unique up to inverting and cyclic shifting, we represent every weakly-bipartitive family $\mathcal{B}$ by a triple $(T,\lambda,(<_t)_{t\in \lambda^{-1}(\LINEAR)})$ where $T$ is the laminar tree induced by $\mathcal{B}_!$, $\lambda:V(T)\setminus L(T)\to\{\DEGENERATE,\PRIME,\LINEAR\}$ is the labelling function, and $<_t$ is the total order of the neighbours of $t$ described above. Note that in this case $\bigcup_{t\in\lambda^{-1}(\LINEAR)}<_t$ is not a partial order as it is not necessarily transitive.
We call the triple $(T,\lambda,(<_t)_{t\in \lambda^{-1}(\LINEAR)})$ from \autoref{thm:weakly-bipartitive tree} \defd{\strong{the} bipartitive tree} of $\mathcal{B}$.

We model weakly-bipartitive trees~$(T,\lambda,(<_t)_{t\in \lambda^{-1}(\LINEAR)})$ of a weakly-bipartitive family $\mathcal{B}$ of bipartition of $U$
by the $\{\tedge,\DEGENERATE,\cross\}$-structure $\mathbb{T}$
of universe $U_{\mathbb{T}}=V(T)$ such that
$\struct{V(T),\tedge_{\mathbb{T}}}\sqsubseteq\mathbb{T}$ models~$T$ with ${L(T)=U}$,
$\DEGENERATE_{\mathbb{T}}$ is a unary relation
which selects all inner nodes of~$T$ of label~$\DEGENERATE$,
\ie, ${\DEGENERATE_{\mathbb{T}}=\lambda^{-1}(\DEGENERATE)}$,
and~$\cross_{\mathbb{T}}$ is a relation of arity five selecting all 5-tuples $(t,x_1,x_2,x_3,x_4)$ for which the chord $x_1x_3$ crosses the chord $x_2x_4$ in $<_t$ understood as a cyclic order. Recall that nodes with less than four children cannot be labelled $\LINEAR$ and therefore the number of arguments of $\cross$ is not a restriction.
Similarly to the weakly-partitive case, we can recover the inner nodes of~$T$ that are labelled by~$\LINEAR$ through an \MSO-formula as those inner nodes whose neighbours are related by~$\cross$.
Then $\PRIME$ nodes are the ones which are labelled neither by~$\DEGENERATE$ nor by~$\LINEAR$ which can be obtained by an $\MSO$-transduction.
Furthermore, \autoref{thm:weakly-bipartitive tree},
gives rise to an \MSO-transduction
which takes as input a weakly-bipartitive tree
and outputs the weakly-bipartitive set system which induced it.

We can now extend the transduction from \autoref{lem:transduce laminar tree induced by weakly-bipartitive family} to obtain a \CMSO[2]-transduction producing the weakly-bipartitive tree of a weakly-bipartitive family.
\begin{thm}
    \label{thm:transduce weakly-bipartitive tree}
    There exists a non-deterministic \CMSO[2]-transduction~$\tau$
    such that,
    for every weakly-bipartitive bipartition system~$(U,\mathcal{B})$
    represented as the $\{\BIPARTITION\}$-structure~$\mathbb{B}$,
    $\tau(\mathbb{B})$ is non-empty and every output in~$\tau(\mathbb{B})$
    is equal to some $\{\tedge,\DEGENERATE,\cross\}$-structure~$\mathbb{T}$
    representing the weakly-bipartitive tree~$\left(T,\lambda,(<_t)_{t\in \lambda^{-1}(\LINEAR)}\right)$ of~$(U,\mathcal{B})$.
\end{thm}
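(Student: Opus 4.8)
The plan is to follow the template of \autoref{thm:transduce weakly-partitive tree}. First I would invoke \autoref{lem:transduce laminar tree induced by weakly-bipartitive family} with $\Sigma=\{\BIPARTITION\}$ and $\mathbb{A}=\mathbb{B}$ to obtain, by a \CMSO[2]-transduction, the $\{\BIPARTITION,\tedge\}$-structure $\mathbb{B}\sqcup\mathbb{T}'$, where $\mathbb{T}'$ models the laminar tree $T$ with $L(T)=U$. It then remains to append an \MSO-interpretation producing the target $\{\tedge,\DEGENERATE,\cross\}$-structure $\mathbb{T}$: it keeps $\tedge$, defines $\DEGENERATE$ and $\cross$ from $\BIPARTITION$ and $\tedge$, and finally drops $\BIPARTITION$. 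Two \MSO-gadgets are used throughout. For a node $t$ and a neighbour $s$, the leaf set $L(T^t_s)$ of the component of $T-t$ through $s$ is \MSO-definable on $\tedge$-structures, exactly as $\leafset$ was in the rooted case. And $\psi(Z)$, asserting that $\{Z,U\setminus Z\}\in\mathcal{B}$ for a set of leaves $Z$ (namely $\BIPARTITION(Z)$ or $\BIPARTITION$ of the complementary leaf set), lets me say that a grouping $\mathcal{C}$ of neighbours of $t$ is \emph{realizable}, meaning $\psi\big(\bigcup_{c\in\mathcal{C}}L(T^t_c)\big)$.

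By \autoref{thm:weakly-bipartitive tree}, a $\DEGENERATE$ node realizes every grouping, a $\LINEAR$ node realizes exactly the cyclic-interval groupings, and a $\PRIME$ node realizes no grouping of size strictly between $1$ and $\deg(t)-1$. Hence I would define $\phi_{\DEGENERATE}(t)$ as: $t$ is inner and every $2$-element grouping of its neighbours is realizable. This is correct: a degenerate node passes trivially; a prime node of degree $\geq 4$ fails since no $2$-grouping is realizable; a linear node of degree $\geq 4$ fails on its non-consecutive pairs; and a degree-$3$ node passes, because each $2$-subset of its three neighbours is the complement of a single neighbour and thus the strong bipartition of an edge, matching the convention that degree-$3$ nodes carry label $\DEGENERATE$. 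Dually, $\phi_{\LINEAR}(t)$ says $t$ is inner and some but not all $2$-groupings are realizable, and prime nodes are the remaining inner nodes. All of this is routine \MSO over $\psi$ and $L(T^t_\cdot)$.

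The crux is the arity-five relation $\cross$. I would prove the following characterization: for a $\LINEAR$ node $t$ and four pairwise distinct neighbours $w,x,y,z$, the chords $wy$ and $xz$ cross in the cyclic order $<_t$ if and only if there is no $\{P,Q\}\in\mathcal{B}$ with $L(T^t_w)\cup L(T^t_y)\subseteq P$ and $L(T^t_x)\cup L(T^t_z)\subseteq Q$. The geometric content is that two chords of a cycle cross exactly when no arc carries both endpoints of one chord while avoiding both endpoints of the other; as the realizable groupings of a $\LINEAR$ node are precisely the arcs (cyclic intervals), separating $\{w,y\}$ from $\{x,z\}$ by an arc is the same as having such a realizable bipartition. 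The one point requiring care is that I may quantify over all of $\mathcal{B}$ rather than only over groupings at $t$: any $\{P,Q\}\in\mathcal{B}$ that splits the four subtrees $2$-$2$ as above must in fact be realized at $t$. Indeed, a strong bipartition is an edge of $T$, which either isolates one whole subtree of $t$ or keeps $t$ with all but at most one subtree on the same side; and a non-strong bipartition realized at a node $t'\neq t$ puts at least three of the four subtrees inside the single component of $T-t'$ containing $t$, hence on one side. Neither can induce a $2$-$2$ split of the subtrees at $t$, so only a grouping at $t$ can. Accordingly I set
\[
\phi_{\cross}(t,w,x,y,z):=\phi_{\LINEAR}(t)\land\mathsf{dist}(t,w,x,y,z)\land\lnot\exists P\big(\psi(P)\land L(T^t_w)\cup L(T^t_y)\subseteq P\land (L(T^t_x)\cup L(T^t_z))\cap P=\emptyset\big),
\]
where $\mathsf{dist}(t,w,x,y,z)$ says $w,x,y,z$ are pairwise distinct neighbours of $t$. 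The $\phi_{\LINEAR}$ guard is essential: at a prime node no $2$-$2$ separation is realizable either, so the unguarded body would hold spuriously.

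Finally, I would observe that crossing of chords depends only on the cyclic order up to rotation and reflection, i.e.\ up to cyclic shift and inversion; since these are exactly the ambiguities left in $<_t$ by \autoref{thm:weakly-bipartitive tree}, the relation $\cross$ is canonical, so the interpretation outputs the unique intended structure $\mathbb{T}$ and $\tau(\mathbb{B})$ is non-empty with every output representing the weakly-bipartitive tree of $(U,\mathcal{B})$. I expect the main obstacle to be precisely this $\cross$ step: extracting the cyclic-order relation from mere membership in $\mathcal{B}$ and justifying the ``realized at $t$'' reduction that permits an unrestricted existential over bipartitions. Once that is in place, the labels and the composition with \autoref{lem:transduce laminar tree induced by weakly-bipartitive family} are routine.
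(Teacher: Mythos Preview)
Your proposal is correct and follows the paper's overall architecture: invoke \autoref{lem:transduce laminar tree induced by weakly-bipartitive family} to produce $\mathbb{B}\sqcup\mathbb{T}'$, then \MSO-interpret $\DEGENERATE$ and $\cross$ from $\BIPARTITION$ and $\tedge$ before dropping $\BIPARTITION$. Your $\phi_{\DEGENERATE}$ (every $2$-grouping of neighbours is realizable) coincides with the paper's.

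The genuine difference is in $\cross$. The paper detects that $x_1,x_2,x_3,x_4$ occur in this cyclic order at a non-$\DEGENERATE$ node $t$ by a \emph{positive} existential: three bipartition sides $Y,Z\subseteq X$ with $\leafset(t,x_1)\subseteq Y\setminus Z$, $\leafset(t,x_2)\subseteq Y\cap Z$, $\leafset(t,x_3)\subseteq Z\setminus Y$, and $\leafset(t,x_4)\subseteq U\setminus X$. You take the dual \emph{negative} route: the chords $wy$ and $xz$ cross iff no single bipartition separates $\{L(T^t_w),L(T^t_y)\}$ from $\{L(T^t_x),L(T^t_z)\}$, guarded by $\phi_{\LINEAR}(t)$. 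Your formulation is shorter (one witness instead of three) but hinges on the ``realized at $t$'' reduction---that any $\{P,Q\}\in\mathcal{B}$ inducing a $2$--$2$ split of four subtrees at $t$ must in fact arise from a grouping at $t$---which you supply with a correct case analysis on strong bipartitions (edges of $T$) and bipartitions realized at some $t'\neq t$. You are also right that your version requires the explicit $\LINEAR$ guard, since at a $\PRIME$ node no $2$--$2$ separation is realizable either; the paper's positive existential sidesteps this because the mere existence of the witnesses $X,Y,Z$ already excludes $\PRIME$. Both routes yield valid \MSO definitions of the same canonical relation.
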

\begin{proof}
    \autoref{lem:transduce laminar tree induced by weakly-bipartitive family} already provides a \CMSO[2]-transduction
    which on input $\mathbb{B}$ outputs the $\{\BIPARTITION, \tedge\}$-structure $\mathbb{B}\sqcup \mathbb{T}'$,
    where $\mathbb{T}'$ represents the $\{\tedge\}$-structure modelling the laminar tree $T$ of $(U,\mathcal{B})$.
    To complete the proof of the theorem, we provide an \MSO-transduction which outputs $\mathbb{T}$ given input $\mathbb{B}\sqcup \mathbb{T}'$.
    For the transduction we define an \MSO-formula $\phi_{\DEGENERATE}$ which selects the inner nodes of $T$ which are labelled $\DEGENERATE$,
    and a formula $\phi_{\cross}$ which selects 5-tuples of nodes $(t,x_1,x_2,x_3,x_4)$ such that either $x_1<_tx_2<_t x_3<_t x_4$ or $x_4<_t x_3<_t x_2<_t x_1$ or some cyclic shift of this.
    Recall, that the ordering $<_t$ for each inner node $t$ of the neighbours of $t$ are only unique up to inverting the order and cyclic shift.
    We use a functional symbol $\leafset(t,s)$ which is interpreted as the set $L(T_s^t)$ where $T_s^t$ is the connected component of $T-t$ containing $s$. 
    The functional predicate $\leafset$ clearly can be defined in \MSO as an element~$z$ belongs to $\leafset(t,s)$
    \ifof~$z$ is a leaf and the path from~$z$ to~$s$ does not contain~$t$. 

    Observe that by \autoref{thm:weakly-bipartitive tree} an inner node $t$ is labelled $\DEGENERATE$
    \ifof for every pair of neighbours $s_1,s_2$ of $t$ the bipartition $\big\{L(T^t_{s_1})\cup L(T^t_{s_2}),U\setminus \big(L(T^t_{s_1})\cup L(T^t_{s_2})\big)\big\}$ is in $\mathcal{B}$.
    We can define this in \MSO using the following formula:
    \begin{align*}
        \phi_{\DEGENERATE}&(x):=\forall y \forall z\Big(\big(\tedge(x,y)\land \tedge(x,z)\big)\implies\\
        &\BIPARTITION\big(\leafset(x,y)\cup \leafset(x,z), U\setminus (\leafset(x,y)\cup \leafset(x,z))\big)\Big).
    \end{align*}

    By \autoref{thm:weakly-bipartitive tree}, a set~$\mathcal{C}$ of neighbours of~$t$ is consecutive in~$<_t$ (understood as a cyclic order) 
    \ifof, the bipartition $\big\{\bigcup_{c\in \mathcal{C}}L(T^t_c),U\setminus\big(\bigcup_{c\in \mathcal{C}}L(T^t_c)\big)\big\}$ is a member of~$\mathcal{B}$. Furthermore, we can express that $x_1<_tx_2<_t x_3<_t x_4$ in $<_t$ (considered as a cyclic order) if there is a $<_t$-interval $\mathcal{C}$ which does not contain $x_4$, but contains two $<_t$-intervals $\mathcal{C}_1,\mathcal{C}_2$ such that $x_1,x_2\in \mathcal{C}_1$, $x_3\notin \mathcal{C}_1$ and $x_4\notin \mathcal{C}_2$ and $x_2,x_3\in \mathcal{C}_2$.
    Hence, we can define in \MSO the predicate $\cross$ as follows:
    \begin{align*}
        &\phi_\cross(t,x_1,x_2,x_3,x_4):= \Big[\lnot \DEGENERATE(t)\land \bigwedge_{i<j}x_i\not=x_j\land \bigwedge_{i}\tedge(t,x_i)\land
            \exists X,Y,Z
            \\*
            &\Big(\BIPARTITION(X)\land \BIPARTITION(Y)\land \BIPARTITION(Z)\land Y\subseteq X\land Z\subseteq X\land
            \leafset(t,x_1)\subseteq Y\setminus Z \land 
            \\*
            &\leafset(t,x_2)\subseteq Y\cap Z\land \leafset(t,x_3)\subseteq Z\setminus Y\land \leafset(t,x_4)\subseteq U\setminus X\Big)\Big].
    \end{align*}
    Note that any node which is not labelled $\DEGENERATE$ and for which there is a set $\mathcal{C}$ of at least two neighbours of $t$ for which $\big\{\bigcup_{c\in \mathcal{C}}L(T^t_c),U\setminus\big(\bigcup_{c\in \mathcal{C}}L(T^t_c)\big)\big\}$ is a member of~$\mathcal{B}$ must be labelled $\LINEAR$ implying that all nodes $t$ whose children are related through the $\cross$-relation must be labelled linear. Additionally, note that any node with less than three children which is not labelled $\PRIME$ must be labeled $\DEGENERATE$ implying that all nodes that are labelled linear have four children satisfying the $\cross$-relation. 
    
    Finally, the transduction forgets the predicate $\BIPARTITION$ to produce the output $\mathbb{T}$.
\end{proof}
Indeed, considering a bipartitive family $\mathcal{B}$ makes the use of nodes labelled $\LINEAR$ in the associated tree obsolete, as the following theorem states.
\begin{thmC}[{\cite[Theorem 4]{montgolfierThesis}}]
    Let $\mathcal{B}$ be a weakly-biaprtitive family and let \linebreak $(T,\lambda,(<_t)_{t\in \lambda^{-1}(\LINEAR)})$ be the weakly-bipartitive tree it induces. If $\mathcal{B}$ is bipartitive, then $\lambda^{-1}(\LINEAR)=\emptyset$ and all $<_t$ are empty.
\end{thmC}
As a consequence, we can capture any bipartitive family $\mathcal{B}$ by the simpler structure $(T,\lambda)$, called the \emph{bipartitive tree induced by $\mathcal{B}$} (or \emph{the bipartitive tree of $\mathcal{B}$}), where $\lambda$ is now a labelling function labelling each inner nodes of $T$ by either $\DEGENERATE$ or $\PRIME$.
Hence, we obtain the following as a corollary from \autoref{thm:transduce weakly-bipartitive tree}, where naturally the bipartitive tree $(T,\lambda)$ of a bipartitive family $\mathcal{B}$ is modelled by a $\{\tedge,\DEGENERATE\}$-structure $\mathbb{B}$.
\begin{cor}\label{cor:bipartite-tree}
    There exists a non-deterministic \CMSO[2]-transduction $\tau$ such that,
    for each bipartitive set system $(U,\mathcal{B})$ represented as the $\{\BIPARTITION\}$-structure $\mathbb{B}$,
    $\tau(\mathbb{B})$ is non-empty and every output in~$\tau(\mathbb{B})$
    is equal to some $\{\tedge,\DEGENERATE\}$-structure $\mathbb{T}$ representing the bipartitive tree $(T,\lambda)$ of~$(U,\mathcal{B})$.
\end{cor}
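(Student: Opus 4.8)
The plan is to obtain this as an immediate specialization of \autoref{thm:transduce weakly-bipartitive tree} combined with the cited theorem of Montgolfier (Theorem~4) guaranteeing that bipartitive families induce a tree with no $\LINEAR$-labeled nodes. Since every bipartitive bipartition system is in particular weakly-bipartitive, \autoref{thm:transduce weakly-bipartitive tree} applies directly and yields a non-deterministic \CMSO[2]-transduction producing, from the input $\{\BIPARTITION\}$-structure $\mathbb{B}$, a $\{\tedge,\DEGENERATE,\cross\}$-structure representing the weakly-bipartitive tree $\left(T,\lambda,(<_t)_{t\in\lambda^{-1}(\LINEAR)}\right)$ of $(U,\mathcal{B})$.

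First I would invoke the cited theorem to conclude that, because $\mathcal{B}$ is bipartitive, we have $\lambda^{-1}(\LINEAR)=\emptyset$ and all orders $<_t$ are empty. Consequently the labeling $\lambda$ takes values only in $\{\DEGENERATE,\PRIME\}$, matching the desired bipartitive tree $(T,\lambda)$, and the five-ary relation $\cross$ in the produced structure is empty: by construction, a tuple $(t,w,x,y,z)$ lies in $\cross$ only if its first coordinate $t$ is a $\LINEAR$-labeled inner node, and there are none.

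Next I would compose the transduction of \autoref{thm:transduce weakly-bipartitive tree} with a deterministic \MSO-interpretation that keeps the relations $\tedge$ and $\DEGENERATE$ unchanged while simply forgetting the now-empty predicate $\cross$; since \MSO-interpretations are among the atomic \CMSO[2]-transductions, the overall map remains a \CMSO[2]-transduction. The output is a $\{\tedge,\DEGENERATE\}$-structure $\mathbb{T}$. Because $\cross$ was empty, dropping it loses no information: the tree $T$ is still described by $\tedge$, the $\DEGENERATE$-labeled inner nodes are exactly the elements of $\DEGENERATE_{\mathbb{T}}$, and the $\PRIME$-labeled inner nodes are recovered as the inner nodes of $T$ not lying in $\DEGENERATE_{\mathbb{T}}$ (with no $\LINEAR$ case left to distinguish). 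Hence $\mathbb{T}$ models precisely the bipartitive tree $(T,\lambda)$ of $(U,\mathcal{B})$, and non-emptiness of $\tau(\mathbb{B})$ is inherited directly from \autoref{thm:transduce weakly-bipartitive tree}.

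I expect no genuine technical obstacle here, as the corollary is a structural simplification of \autoref{thm:transduce weakly-bipartitive tree}; the only point that requires care is checking that discarding $\cross$ is harmless, which follows exactly from the vanishing of $\LINEAR$ nodes for bipartitive families. This mirrors the partitive case treated in \autoref{cor:transduce partitive tree}, where the $\betweeness$ relation was dropped for the same structural reason.
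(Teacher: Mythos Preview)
Your proposal is correct and follows exactly the approach the paper intends: the corollary is stated as an immediate consequence of \autoref{thm:transduce weakly-bipartitive tree} together with Montgolfier's theorem that bipartitive families have no $\LINEAR$ nodes, so the $\cross$ relation is empty and can be dropped by a trivial interpretation. The paper gives no explicit proof beyond the sentence ``Hence, we obtain the following as a corollary from \autoref{thm:transduce weakly-bipartitive tree}'', and your argument is precisely the unpacking of that sentence.
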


\subsection{Application to split decomposition}
In this section we consider the split decomposition introduced by Cunningham and Edmonds \cite{cunningham1982decomposition, cunninghamE1980}. We can
  understand a split of a graph as the opposite operation of a $1$-join, which informally takes two graphs $G$ and $H$ whose vertex-sets intersect in a single vertex $v$, and
  do a disjoint union of both while removing $v$ and adding the edges between every neighbour of $v$ in $G$ and every neighbour of $v$ in $H$. Applying the split operation recursively to a graph yields a decomposition called the
  split decomposition. Split decompositions are a well studied notion in structural and algorithmic graph theory (see for instance \cite{Courcelle20,GioanP12}. In the original works by Cunningham and
  Edmonds split decompositions are considered for strongly connected directed graphs. Hence, throughout this section we let $G$ be a strongly connected directed
  graph.
%Let $G$ be a directed graph.
A \emph{split} in $G$ is a bipartition $\{X,Y\}$ of $V(G)$ such that there are subsets $X^{\inNeighbours},X^{\outNeighbours}\subseteq X$, $Y^{\inNeighbours},Y^{\outNeighbours}\subseteq Y$ such that for all pairs of vertices $x\in X$ and $y\in Y$ the edge $xy\in E(G)$ \ifof either $x\in X^{\outNeighbours}$ and $y\in Y^{\inNeighbours}$ or $x\in Y^{\outNeighbours}$ and $y\in X^{\inNeighbours}$. Note that $X^{\inNeighbours}\cup X^{\outNeighbours}$ ($Y^{\inNeighbours}\cup Y^{\outNeighbours}$, \resp) need not necessarily be equal to $X$ ($Y$, \resp).
Additionally, if $G$ is strongly connected then $X^{\inNeighbours}\cup X^{\outNeighbours}\not=\emptyset$ and $Y^{\inNeighbours}\cup Y^{\outNeighbours}\not=\emptyset$. We call splits $\{X,Y\}$ for which either $|X|=1$ or $|Y|=1$ \emph{trivial splits}. A graph is considered \emph{prime with respect to splits} if it has only
trivial splits.
In a family $\mathcal{B}$ of splits we use the notion of \emph{strong splits} to coincide with the notion of strong bipartitions. See \autoref{fig:splitDec} (left) for an example of a strong split in a directed graph.

One can define split decompositions in greater generality, however, we only require the canonical split decomposition obtained by considering only strong splits in the following. Hence, consider a directed graph $G$ and let $\splitSet$ be the set of all splits of $G$. The following theorem shows that $\splitSet$ is weakly-bipartitive.

\begin{thmC}[\cite{cunninghamE1980,cunningham1982decomposition}]\label{thm:splitsBipartitive}
    For every strongly connected directed graph $G$  the family of splits $\splitSet$ is weakly-bipartitive.
\end{thmC}

 The (canonical) split decomposition of $G$ is the weakly-bipartitive tree of $\splitSet$. In the following we add additional structure to the split
   decomposition of $G$, obtaining the \emph{enriched split decomposition} of $G$, to be able to recover the graph $G$ from its split decomposition. Recall that
   $T_s^t$ denotes the connected component of $T-t$ which contains $s$ for any tree $T$. Let $T$ we the weakly-bipartitive tree of $G$. By construction of $T$, for every edge $uv$ of $T$ the bipartition $\{L(T_v^u),L(T_u^v)\}$ is a strong split of $G$. 
 For every edge $uv$ of $T$, we introduce two new vertices $\vec{uv}$ and $\vec{vu}$, called \emph{marker vertices}. Note that in particular every leaf of $T$ (and hence every vertex of $G$) corresponds to some marker vertex. For every inner node $u$ of $T$ we define a graph $G_u$, called a \emph{component} of the split decomposition, as follows. The set of vertices of $G_u$ is the set $\{\vec{uv}\mid v \text{ is a neighbour of }u\}$. Furthermore, there is an edge from $\vec{uv}$ to $\vec{uw}$ in $G_u$ if there are vertices $x\in L(T_v^u)$, $y\in L(T_w^u)$ such that $xy\in E(G)$. The enriched split decomposition of $G$ is the tuple $(H,F)$ where 
 \begin{itemize}
     \item $H$ is a directed graph consisting of the disjoint union of the graphs $G_u$ for every inner node $u$ of $T$ and
     \item $F$ is a symmetric edge relation containing  edges $\vec{uv}\vec{vu}$ for every edge $uv$ of $T$.
 \end{itemize} 
 For a clear distinction, we call the edges in $H$ \emph{$\mathsf{c}$-edges} and the edges in $F$ \emph{$\mathsf{t}$-edges}. Note that $\mathsf{t}$-edges correspond to the edges of $T$. See \autoref{fig:splitDec} for an illustration.

\begin{figure}
    \includegraphics[width=\textwidth]{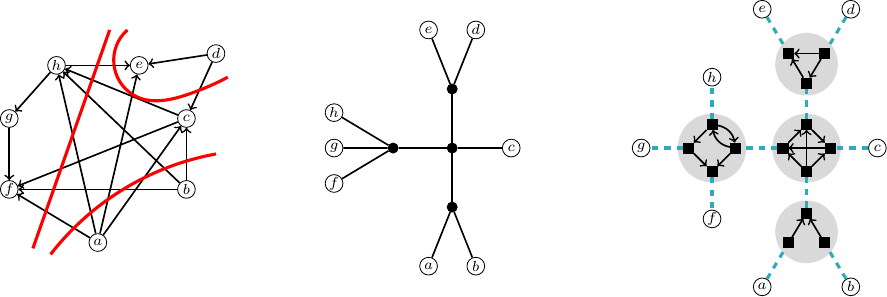}%
    \caption{A directed graph $G$ with $3$ strong splits (left). The weakly-partitive tree of $G$ (middle) and the split decomposition of $G$ where $\mathsf{c}$-edges are the thin, black edges, $\mathsf{t}$-edges are the fat, dashed, blue edges and the (non-singleton) components of the decomposition are signified by grey circles. Marker vertices are represented by black squares.%
    }%
    \label{fig:splitDec}
\end{figure}
\bigbreak

The following (standard) technical lemma is needed %both 
to show that a graph can be recovered from an enriched split decomposition (by an \MSO transduction).
\begin{lem}[See \eg~Lemma 2.10 in \cite{AdlerKK17}]\label{lem:edgesInSplitDecomposition}
    Let $G$ be a strongly connected, directed graph and $(H,F)$ the enriched split decomposition of $G$. Then 
       $uv\in E(G)$ \ifof there exists a directed path from $u$ to $v$ in $(H,F)$ on which $\textsf{c}$-edges and $\textsf{t}$-edges alternate; 
\end{lem}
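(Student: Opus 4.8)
The plan is to reduce the whole statement to the unique path between $u$ and $v$ in $T$ and then to exploit the ``rectangular'' structure of splits. Write $t_0=u,t_1,\dots,t_m=v$ for the geodesic joining the leaves $u,v$ in $T$, so that $t_1,\dots,t_{m-1}$ are inner nodes, and identify each leaf $w$ with the marker $\vec{wp}$ at its unique neighbour $p$ (the two-vertex case, where the geodesic has no internal node, is trivial and handled separately). The canonical alternating path attached to this geodesic is
\[ \vec{u\,t_1}\to\vec{t_1 u}\to\vec{t_1 t_2}\to\vec{t_2 t_1}\to\cdots\to\vec{t_{m-1} v}\to\vec{v\,t_{m-1}}, \]
where the odd steps are $\tedge$s crossing the tree edges $t_it_{i+1}$ and the even steps are the $\cedge$s of $G_{t_1},\dots,G_{t_{m-1}}$, the one inside $G_{t_i}$ running from $\vec{t_i t_{i-1}}$ to $\vec{t_i t_{i+1}}$. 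By definition this $\cedge$ is present exactly when some $a\in L(T^{t_i}_{t_{i-1}})$ and $b\in L(T^{t_i}_{t_{i+1}})$ satisfy $ab\in E(G)$. For the \textbf{forward implication} I would assume $uv\in E(G)$; since $u\in L(T^{t_i}_{t_{i-1}})$ and $v\in L(T^{t_i}_{t_{i+1}})$ for every internal $t_i$, the pair $(a,b)=(u,v)$ simultaneously witnesses all the required $\cedge$s, so the whole canonical path exists and is directed from $u$ to $v$.

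For the \textbf{converse}, the first step is to show that any directed alternating path from $u$ to $v$ must run along the geodesic. Such a path projects to a walk $w_0=u,w_1,\dots,w_r=v$ in $T$ (each $\tedge$ crosses a tree edge, each $\cedge$ stays inside one component $G_{w_j}$). An immediate backtrack $w_{j-1}=w_{j+1}$ would require a $\cedge$ of $G_{w_j}$ from $\vec{w_j w_{j-1}}$ to itself; but components carry no loops, so this cannot happen. A non-backtracking walk in a tree is its unique geodesic, whence $r=m$, $w_i=t_i$, and the path is exactly the canonical one above with all $m-1$ of its $\cedge$s present.

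It then remains to deduce $uv\in E(G)$ from the presence of these $\cedge$s, and this is the \textbf{main obstacle}. The tool is the rectangular structure of the split $\{A,B\}$ induced by a tree edge $t_it_{i+1}$, namely $A=L(T^{t_{i+1}}_{t_i})\ni u$ and $B=L(T^{t_i}_{t_{i+1}})\ni v$, for which $ab\in E(G)\iff a\in A^{\outNeighbours}\text{ and }b\in B^{\inNeighbours}$ whenever $a\in A,b\in B$. I would prove by induction on $i$ the claim: there exists $b\in L(T^{t_i}_{t_{i+1}})$ with $ub\in E(G)$. The base case $i=1$ is immediate, since the $\cedge$ of $G_{t_1}$ yields $a\in L(T^{t_1}_{t_0})=\{u\}$ and $b\in L(T^{t_1}_{t_2})$ with $ab=ub\in E(G)$. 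For the step, the inductive hypothesis gives $b\in B$ with $ub\in E(G)$, hence $u\in A^{\outNeighbours}$; the $\cedge$ of $G_{t_{i+1}}$ provides $a'\in L(T^{t_{i+1}}_{t_i})=A$ and $b'\in L(T^{t_{i+1}}_{t_{i+2}})\subseteq B$ with $a'b'\in E(G)$, hence $b'\in B^{\inNeighbours}$; combining $u\in A^{\outNeighbours}$ with $b'\in B^{\inNeighbours}$ through the split gives $ub'\in E(G)$ with $b'\in L(T^{t_{i+1}}_{t_{i+2}})$, which is the claim for $i+1$. Applying the claim at $i=m-1$ and using $L(T^{t_{m-1}}_{t_m})=\{v\}$ forces $uv\in E(G)$, completing the proof. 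The delicate points to get right are the exact node-to-side bookkeeping (that $L(T^{t_{i+1}}_{t_i})$ is one whole side of the edge split while $L(T^{t_{i+1}}_{t_{i+2}})$ sits inside the opposite side) and the use of connectivity of $G$, which guarantees the splits are non-degenerate so that the $\outNeighbours/\inNeighbours$ labels behave as required.
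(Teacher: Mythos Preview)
The paper does not give its own proof of this lemma; it is stated with a citation to~\cite{AdlerKK17} and used as a black box. There is therefore nothing to compare your argument against, but your proof is essentially the standard one and is correct. The forward direction is immediate, and for the converse you correctly reduce to the canonical path along the $T$-geodesic (the point being that an alternating simple path cannot backtrack in $T$, since an immediate backtrack would force the same marker vertex to occur twice, and in a tree a non-backtracking walk is a simple path), and then propagate the edge witness along the geodesic using the rectangular structure of each strong split; the induction on~$i$ is exactly right.

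Two minor remarks. First, your justification ``components carry no loops'' is not quite what the paper's definition guarantees (nothing in the $\cedge$ clause forbids $v=w$), but this does not matter: you are working with a \emph{path}, so a loop step is excluded simply because a path has pairwise distinct vertices. Second, the connectivity of $G$ that you invoke at the end is not actually used in your inductive argument---the split axiom alone yields $a'b'\in E(G)\Rightarrow b'\in B^{\inNeighbours}$ and $u\in A^{\outNeighbours}\wedge b'\in B^{\inNeighbours}\Rightarrow ub'\in E(G)$, regardless of whether the $\outNeighbours/\inNeighbours$ sets are empty. Connectivity is only needed so that the enriched split decomposition is well-defined in the first place (via \autoref{thm:splitsBipartitive}).
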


To model enriched split decompositions as relational structures we use the relational vocabulary $\{\tedge,\cedge\}$ where $\tedge$ and $\cedge$ are binary relational symbols. An enriched split decomposition $(H,F)$ of a graph $G$ is the $\{\tedge,\cedge\}$-structure $\mathbb{H}$ with universe $U_\mathbb{H}=V(H)$, $\tedge_\mathbb{H}$ the set of $\textsf{t}$-edges in $(H,F)$ and $\cedge_{\mathbb{H}}$ the set of $\textsf{c}$-edges of $(H,F)$.

Since the nodes of $T$ correspond to the graphs $G_u$ which are the connected components of the enriched split decomposition $(H,F)$ after removing $F$ and the vertices of $G$ are the leaves of $T$, it is clear that the \autoref{lem:edgesInSplitDecomposition} allows to $\MSO$-transduce the graph $G$ from its enriched split decomposition.

We use \autoref{lem:transduce laminar tree induced by weakly-bipartitive family} to transduce the canonical split decomposition of a graph.
\begin{thm}\label{thm:split}
    There exists a non-deterministic \CMSO[2]-transduction $\tau$
    such that for any strongly connected, directed graph $G$ represented as the $\{\edge\}$-structure $\mathbb{G}$;
    $\tau(\mathbb{G})$ is non-empty
    and every output in~$\tau(\mathbb{G})$ is equal to some $\{\tedge,\cedge\}$-structure $\mathbb{H}$
    representing the canonical split decomposition of $G$ (in particular, all outputs are isomorphic).
\end{thm}

\begin{proof}
    Let $G$ be a strongly connected, directed graph represented by the $\{\edge\}$-structure $\mathbb{G}$. We associate the following objects with $G$:
    \begin{itemize}
        \item let $\splitSet$ be the family of splits of $G$ and $\mathbb{B}$ the $\{\BIPARTITION\}$-structure modelling the weakly-bipartitive bipartition system $(V(G),\splitSet)$ with $U_\mathbb{B}=V(G)$;
        \item let $T$ be the laminar tree induced by the weakly-bipartitive family $\splitSet$ and let $\mathbb{T}$ be the $\{\tedge\}$-structure modelling it with $U_\mathbb{T}=V(T)$ and $L(T)=V(G)\subset U_\mathbb{T}$;
        \item let $(H,F)$ be the enriched canonical split decomposition of $G$ represented by the $\{\tedge,\cedge\}$-structure $\mathbb{H}$.
    \end{itemize}
    Our \CMSO[2]-transduction is obtained by composing the following transductions:
    \begin{itemize}
        \item $\tau_1$: an \MSO-interpretation which outputs the $\{\edge,\BIPARTITION\}$-structure $\mathbb{G}\sqcup \mathbb{B}$ from $\mathbb{G}$;
        \item $\tau_2$: the non-deterministic \CMSO[2]-transduction from \autoref{lem:transduce laminar tree induced by weakly-bipartitive family} which outputs the $\{\edge,\BIPARTITION, \tedge\}$-structure $\mathbb{G}\sqcup \mathbb{B}\sqcup \mathbb{T}$ on input $\mathbb{G}\sqcup \mathbb{B}$;
        \item $\tau_3$: an \MSO-transduction which produces the $\{\tedge,\cedge\}$-structure $\mathbb{H}$ from $\mathbb{G}\sqcup \mathbb{B}\sqcup \mathbb{T}$.
    \end{itemize}
    To define $\tau_1$ it is sufficient to observe that there is an \MSO-formula $\phi_{\BIPARTITION}(X)$ with one monadic free-variable $X$ such that $\phi_{\BIPARTITION}$ is satisfied exactly when $\{X,U\setminus X\}$ is a split in $G$. As $\tau_2$ is constructed in \autoref{lem:transduce laminar tree induced by weakly-bipartitive family}, we are left with describing how to obtain $\tau_3$.
    We first need to find unique representatives for the vertices in $V(H)$. Recall that $V(H)$ consists of two marker vertices 
    $\vec{uv},\vec{vu}$ for every edge $uv$ in $T$.
    We use the following atomic transductions to ensure unique representatives for all vertices in $V(H)$. We first guess a colouring $R$ which identifies one non-leaf element of $\mathbb{G}\sqcup \mathbb{B}\sqcup \mathbb{T}$ as root. We use this root element to interpret an auxiliary binary relation $\parent$ in the usual way. Now we copy the structure $\mathbb{G}\sqcup \mathbb{B}\sqcup \mathbb{T}$ once introducing the binary relation $\copyV[1]$ in which $\copyV[1](x,y)$ indicates that $x$ is the copy of $y$. Assume that $\{s,t\}$ is an edge in $T$ corresponding to a strong bipartition in $\splitSet$ and $\parent(s,t)$ is satisfied. We use the original node $t$ to represent the marker 
    $\vec{st}$ and we use the copy of $t$ to represent the marker 
    $\vec{ts}$. As every node has a unique parent in a rooted tree, for each marker the chosen representatives are unique. To express this we use the predicate $\marker(x,y,z)$ which expresses that $x$ is the representative of the marker 
    $\vec{yz}$ defined as follows:
    $$\marker(x,y,z):= \Big(\parent(y,z)\land x=z\Big)\lor \Big(\parent(z,y)\land \copyV[1](x,y)\Big).$$
    We now filter the universe only keeping all representatives of markers 
    (recall that every original vertex $u$ of $G$ has a marker $\vec{uv}$, where $v$ is the neighbour of $u$ in $T$, representing it). We are left with defining the two edge relations $\tedge$ and $\cedge$. To do so, we use the previously defined functional predicate $\leafset(t,s)$ which is interpreted by the set $L(T_s^t)$.

    First note that markers  
    $\vec{st}$ and $\vec{s't'}$ are in the same component of $H$ \ifof 
    $s=s'$. By definition $\vec{st}\vec{st'}$ is a $\textsf{c}$-edge in $(H,F)$ \ifof there is an edge $uu'\in E(G)$ with 
    $u\in L(T^s_t)$ and $u'\in L(T^{s}_{t'})$.
    Consequently, we can express $\cedge$ as follows:
    \begin{align*}
        \phi_{\cedge}(x,x'):=&\exists t \exists t' \exists s \exists y \exists y' \Big( \tedge(s,t)\land \tedge(s,t')\land \marker(x,s,t)\land
        \\*
        &\marker(x',s,t')\land
        y\in \leafset(s,t)\land y'\in \leafset(s,t') \land E(y,y') \Big).  
    \end{align*}

    On the other hand, the
    $\tedge$-relation in $\mathbb{H}$ is the relation between pairs of markers and hence can be easily (re)-defined using the $\marker$ predicate by:
    $$\phi_{\tedge}(x,x'):=\exists y \exists z \big(\marker(x,y,z)\land \marker(x',z,y)\big).$$
    Finally, the transduction only keeps the relation $\tedge$ and $\cedge$. We remark that any choice of $R$ yields a valid output and hence we do not need to apply filtering.
  \end{proof}

    \begin{rem} Another notion of splits in directed graphs is defined in \cite{KanteR09} and yields in all directed graphs a weakly-bipartitive set family
      and a notion of enriched split decomposition. \autoref{thm:split} can be extended to this case also.
      \end{rem}

\subsection{Application to bi-join decomposition}
Bi-joins were introduced in \cite{MontgolfierR05,raoThesis} as a generalization of modules and splits in undirected graphs, and were used for instance in
\cite{LimouzyMR07} to decide isomorphism in some graph classes of rank-width at most $2$.  They were also used in \cite{raoThesis} to characterise some graph
classes, for instance ($C_5$,gem,co-gem,bull)-free graphs or (gem,co-gem)-free graphs. An extension to tournaments can be given and then those without prime
nodes correspond to locally transitive ones \cite{Bui-XuanHLM07}. We follow definitions and notations from
\cite{raoThesis}. A \emph{bi-join} in an undirected graph $G$ is a bipartition $\{X,Y\}$ of $V(G)$ such that $\card{X}$, $\card{Y}\geq 1$ and there are
(possibly empty) subsets $X'\subseteq X$ and $Y'\subseteq Y$ such that
$\big(X'\times Y'\big) \cup \big( (X\setminus X')\times (Y\setminus Y')\big)\subseteq E(G)$, and there are no additional edges between $X$ and $Y$ in $G$. If
$X$ is a module in a graph $G$, then $\{X,V(G)\setminus X\}$ is a bi-join with $X'=\emptyset$, and if $V(G)\setminus X$ is complete to $X$, then $Y'$ is also
equal to the empty set.

As for splits, a bi-join $\{X,Y\}$ is considered \emph{trivial} if $\card{X}=1$ or $\card{Y}=1$, and a graph is considered \emph{prime with respect to bi-join} if it has only
trivial bi-joins. Let us denote by $\joinSet$ the set of bi-joins of a graph $G$. 
A proof of the bipartitiveness of bi-joins can be found in
\cite{raoThesis}.

\begin{thmC}[\cite{raoThesis}]\label{thm:join-bipartitive} For every undirected graph $G$, the system $\joinSet$ is bipartitive. 
\end{thmC}

A corollary of \autoref{thm:join-bipartitive} and of \autoref{cor:bipartite-tree} is the existence of a \CMSO[2]-transduction taking as input an
undirected graph $G$ represented as the $\{\edge\}$-structure $\mathbb{G}$ and outputting the bipartitive tree representing the set of bi-joins of
$G$. We now propose a  \CMSO[2]-transduction that takes as input the structure $\mathbb{G}\sqcup \mathbb{T}$ where $\mathbb{T}$ is the structure representing
the bipartitive tree of $G$ and outputs a structure that is similar to the canonical split decomposition, from which one can
reconstruct the original graph in \MSO. It is worth mentionning that bi-joins refine in some sense modular and
  split-decomposition, and for computing decompositions based on complexity measures such as clique-width or rank-width one can first consider
  split-decomposition and then the bi-join decomposition of prime graphs \emph{w.r.t.} split-decomposition as bi-joins and splits intersect in very few
  cases. We refer to \cite[Section 3.7]{MontgolfierR05} for more information.

\paragraph*{Transducing the Skeleton graph \cite{montgolfierThesis}} Let $G$ be an undirected graph. 
If $X$ is a subset of $V(G)$, let $\equiv_X$ be the binary relation on $X\times X$ where $x\equiv_X y$ if $N(x)\cap (V(G)\setminus X) = N(y)\cap (V(G)\setminus X)$. One
easily checks that $\equiv_X$ is an equivalence relation. Now if $\{X,Y\}$ is a bi-join, then $\equiv_X$ and $\equiv_Y$ both have at most two equivalence
classes. We remind that if $T$ is a tree and $t$ is a node of $T$ and $s$ a neighbour of $t$, then $T_s^t$ is the connected component of $T-t$ containing $s$,
and $L(T)$ is the set of leaves of $T$.

Let $T$ be the bipartitive tree of the set $\joinSet$ of the bi-joins of $G$. By construction of $T$, for every node $u$ of $T$ and every neighbour $v$
  of $u$, $\{L(T_v^u),L(T_u^v)\}$ is a bi-join, and then $\equiv_{L(T_v^u)}$ and $\equiv_{L(T_u^v)}$ have both at most two equivalence classes. We denote by $\vec{uv}_1$ and $\vec{uv}_2$ the
  two equivalence classes of $\equiv_{L(T_v^u)}$ in what follows (we omit $\vec{uv}_2$ if there is only one). Observe that if $v$ is a leaf, then $\equiv_{L(T_v^u)}$, with $u$ its unique
  neighbour, has a single equivalence class reduced to $\{v\}$ and that we define as $\vec{uv}_1$. For every internal node $u$ of $T$, let us denote by $G_u$ the graph whose vertex set
  is $\{\vec{uv}_1\mid v$ is a neighbour of $u\}\cup \{\vec{uv}_2\mid v$ is a neighbour of $u, \equiv_{L(T^u_v)}$ has two equivalence classes$\}$, and there is an edge $Z_iZ_j$ in $G_u$ if there is $x\in Z_i$ and $y\in Z_j$ such that
  $xy\in E(G)$. Notice that there are $x\in Z_i$ and $y$ in $Z_j$ such that $xy\in E(G)$ if and only if every vertex in $Z_i$ is adjacent to every vertex in
  $Z_j$, and thus the graph $G_u$, for every node $u$, is unique. The \emph{Skeleton graph} associated with $T$ is %the
  %$\{\tedge,\cedge,\redge\}$-structure with
  a triple $(H,F_1,F_2)$ where 
\begin{itemize}
\item $H$ is a graph with vertex set $\bigcup\limits_{u\in V(T)\setminus L(T)} V(G_u)$ and 
  \item edge set $\bigcup\limits_{u\in
      V(T)\setminus L(T)} E(G_u)$, 
     \item $F_1$ is the set of edges $\{\vec{uv}_i,\vec{vu}_j\}$, for $i,j\in [2]$ for which $u,v$ are inner nodes and there are $x\in \vec{uv}_i$ and $y\in \vec{vu}_j$ with $xy\in E(G)$. 
    \item $F_2$ is the set of pairs $\{\vec{uv}_1,\vec{uv}_2\}$ for which $u,v$ are inner nodes, $uv$ of $E(T)$ and $\equiv_{L(T_v^u)}$ has two equivalence classes.
  \end{itemize}
  We call the edges of $H$ \emph{$\mathsf{c}$-edges}, edges in $F_1$ \emph{$\mathsf{t}$-edges} and edges in $F_2$ \emph{$\mathsf{r}$-edges} and we naturally model the skeleton graph as $\{\tedge,\cedge,\redge\}$-structures. 

  It is worth mentioning that the Skeleton graph is uniquely defined from $T$, and since $T$ is unique, up to isomorphism, one can conclude that the
    Skeleton graph is unique, up to isomorphism. The Skeleton graph defined in \cite{montgolfierThesis} does not include $\mathsf{r}$-edges, but we need such
    edges to be able to recover the graphs $G_u$ from the Skeleton graph because $G_u$ without the $\mathsf{r}$-edges is not necessarily connected. See
    \autoref{fig:skeleton-graph} for an example.
    
    It is
    routine to show the following.

\begin{lemC}\label{lem:skeleton-to-graph-and-T} Let $G$ be a connected graph and let $(H,F_1,F_2)$ be its Skeleton graph. Then the following hold:
  \begin{enumerate}
  \item two vertices $x$ and $y$ of $G$ are adjacent \ifof there is a path between $x$ and $y$ in $(H,F_1,F_2)$ on which $\mathsf{c}$-edges and $\mathsf{t}$-edges alternate;
  \item the vertices of $G$ correspond precisely to the vertices of $(H,F_1,F_2)$ that are not incident to any $\mathsf{t}$-edges (recall that vertex $v$ corresponds to $\vec{uv}_1$ for the unique neighbour $u$ of $v$ in $T$).
  \end{enumerate}
\end{lemC}

\begin{figure}
  \includegraphics[width=\textwidth]{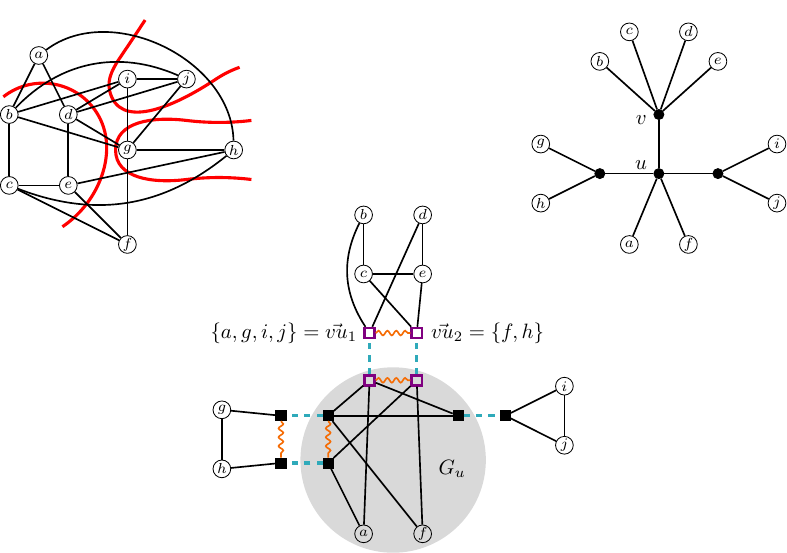}
  \caption{A graph $G$ with three non-trivial strong bi-joins (top left), the partitive tree of $\joinSet$ (top right)  and its Skeleton graph (bottom). Dashed edges are $\mathsf{t}$-edges, squiggly edges are $\mathsf{r}$-edges and the
    rest are $\mathsf{c}$-edges. Equivalence class vertices are represented by squares apart from the singleton equivalence classes corresponding to vertices of $G$. The four unfilled purple squares are the equivalence classes that correspond to the four copies of node $u$ in the proof of \autoref{thm:bi-join1}. Note that the edges of the graph $G$ correspond one-to-one to paths in the skeleton graph that alternate between $\mathsf{t}$-edges and  $\mathsf{c}$-edges.
    The graph depicted was taken from \cite[Figure 4.5]{raoThesis}. }
  \label{fig:skeleton-graph}
\end{figure}

It is routine to write an \MSO-formula checking that, between two vertices, there is a path that alternate $\mathsf{c}$-edges and
  $\mathsf{t}$-edges. 
  Hence, we can conclude that there is an \MSO-transduction which on input of the $\{\tedge,\cedge,\redge\}$-structures $\mathcal{H}$ modelling the skeleton graph of $G$ outputs the $\{\edge\}$-structure $\mathbb{G}$ modelling $G$.

We are now ready to construct the \CMSO[2]-transduction that computes the Skeleton graph of a graph.
\begin{thm}\label{thm:bi-join1}
    There exists a non-deterministic \CMSO[2]-transduction $\tau$
    such that for any connected graph $G$ represented as the $\{\edge\}$-structure $\mathbb{G}$,
    $\tau(\mathbb{G})$ is non-empty
    and every output in~$\tau(\mathbb{G})$
    is equal to some $\{\tedge,\cedge,\redge\}$-structure $\mathbb{H}$ representing the Skeleton graph of $G$.
\end{thm}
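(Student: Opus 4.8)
The plan is to follow the same three-stage template as \autoref{thm:split}. First I would apply an \MSO-interpretation $\tau_1$ that adds the set predicate $\BIPARTITION$ to $\mathbb{G}$, producing the $\{\edge,\BIPARTITION\}$-structure $\mathbb{G}\sqcup\mathbb{B}$ modelling the bipartition system $(V(G),\joinSet)$; this is possible because ``$\{X,V(G)\setminus X\}$ is a bi-join'' is an \MSO-property of $X$ (it asserts the existence of subsets $X'\subseteq X$, $Y'\subseteq V(G)\setminus X$ witnessing the required edge pattern). Since $\joinSet$ is bipartitive, hence weakly-bipartitive, by \autoref{thm:join-bipartitive}, I can then apply as $\tau_2$ the transduction of \autoref{lem:transduce laminar tree induced by weakly-bipartitive family} to obtain $\mathbb{G}\sqcup\mathbb{B}\sqcup\mathbb{T}$, where $\mathbb{T}$ is the $\{\tedge\}$-structure for the laminar tree $T$ induced by $\joinSet_!$ with $L(T)=V(G)$. (I do not need the labelling $\lambda$ here, only the tree itself, the graph $\edge$, and $\BIPARTITION$.) It then remains to design $\tau_3$, an \MSO/\CMSO[2]-transduction outputting the Skeleton-graph structure $\mathbb{H}$ from $\mathbb{G}\sqcup\mathbb{B}\sqcup\mathbb{T}$; correctness of the whole composition, and the existence of the inverse \MSO-transduction, follow from \autoref{lem:skeleton-to-graph-and-T}.

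For $\tau_3$ I would first guess a colouring selecting a root, define the auxiliary relation $\parent$, and use the functional predicate $\leafset(t,s)$ interpreted as $L(T_s^t)$, exactly as in \autoref{thm:split}. The key difference is the number of marker vertices: each tree-edge $\{u,v\}$ now carries up to \emph{four} markers, namely the (at most two) classes of $\equiv_{L(T_v^u)}$ and the (at most two) classes of $\equiv_{L(T_u^v)}$, since each of these equivalence relations, being the trace of a bi-join, has at most two classes and is itself \MSO-definable from $\edge$ and $\leafset$. Accordingly I would copy the structure three times; for each non-root node $v$ with parent $u$, the four slots original-$v$ and its three copies are earmarked to represent the two classes of $\equiv_{L(T_v^u)}$ (the $\vec{uv}$-markers) and the two classes of $\equiv_{L(T_u^v)}$ (the $\vec{vu}$-markers). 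As every non-root node has a unique parent, these slots are globally distinct, and a universe restriction discards the slots corresponding to a non-existent second class.

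Once a class-membership predicate $\mathsf{classOf}(x,Z)$ --- stating that the marker-slot $x$ represents the equivalence class $Z\subseteq L(T)$ --- is available, the three edge relations are routine to define, following the Skeleton-graph definition and reusing the style of $\phi_{\cedge}$ and $\phi_{\tedge}$ from \autoref{thm:split}: a $\cedge$ joins two slots based at a common node whose classes contain $G$-adjacent leaves; a $\tedge$ joins the two matched slots of a tree-edge whose classes are mutually complete (equivalently, $G$-adjacent); and a $\redge$ joins the two slots representing the two classes at the same ordered pair $(u,v)$. A final filtering keeps only those outputs that are genuinely of the form of a Skeleton graph; since the Skeleton graph is unique up to isomorphism, every surviving output is a valid $\{\tedge,\cedge,\redge\}$-structure representing it.

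The main obstacle is the definition of $\mathsf{classOf}$, that is, assigning the two copy-slots of a direction $(u,v)$ to the two classes of $\equiv_{L(T_v^u)}$ in an \MSO-definable way. The two classes are \emph{a priori} symmetric, distinguished only by their external neighbourhoods, so some non-deterministic guess (verified by a filter) seems unavoidable. The tempting route of guessing a single leaf-colouring $C$ and declaring ``class $1$'' to be the class contained in $C$ does \emph{not} work: for nested subtrees $T_v\subseteq T_{v'}$ two leaves can lie in the same class of $\equiv_{L(T_{v'})}$ yet in different classes of $\equiv_{L(T_v^u)}$, forcing $C$ to both equate and separate them, so no global leaf $2$-colouring can simultaneously realise all the class-bipartitions. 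Hence the symmetry must be broken \emph{locally}, per tree-edge, by attaching the binary choice to the copy-slot representing the edge rather than to the leaves --- for instance by guessing a colouring that, for each relevant direction, designates a leaf in the external region $L(T_u^v)$ separating the two classes, and defining $\mathsf{classOf}$ as ``the class adjacent to the designated separator''. Making such a guess canonical enough to be filtered for well-definedness, and proving that a valid guess always exists, is the crux of the argument and the one genuinely new difficulty compared with the split case, where each direction carried a single marker and no class had to be distinguished.
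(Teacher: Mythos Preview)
Your three-stage plan matches the paper's, and you have correctly isolated the one genuinely new difficulty: breaking the symmetry between the two equivalence classes attached to each oriented tree-edge using only a bounded amount of non-determinism. But you stop exactly there. Your proposed route---``guessing a colouring that, for each relevant direction, designates a leaf in the external region''---does not work as stated: there are linearly many oriented tree-edges, and a bounded number of leaf-colourings cannot directly encode one designated leaf per edge. You give no mechanism for how a fixed number of colours could realise these per-edge choices, and you explicitly acknowledge that existence and filterability of such a guess remain open. That is the gap.

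The paper closes it by \emph{reusing the bi-colouring machinery of \autoref{ssec:repr}}. After rooting $T$ at a guessed node, it re-guesses four bi-colourings $(A_i,B_i)_{i\in[4]}$ identifying the inner nodes, as in \autoref{cor:4 bi-colouring} and \autoref{lem:repr MSO}. These eight leaf-colourings already provide, for every inner node $u$, a canonical leaf $a_u\in L_u$ (its $A_j$-representative); and for the edge from $u$ to its parent $v$, at least one of the two leaves representing $v$ lies outside $L_u$ (since $v$ is their least common ancestor), giving a canonical $a_v\in L(T_v^u)$. The four copies of $u$ are then assigned by \emph{membership}, not by separation: copy $1$ is ``the class of $\equiv_{L(T_v^u)}$ containing $a_v$'', copy $2$ the other; copy $3$ is ``the class of $\equiv_{L_u}$ containing $a_u$'', copy $4$ the other. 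With this $\mathsf{classOf}$ in hand, the relations $\cedge$, $\tedge$, $\redge$ are \CMSO[2]-definable exactly as you outline. The point you were missing is that the representative leaves from \autoref{sec:transduce laminar tree} already solve the per-edge designation problem with boundedly many colours.
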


\begin{proof}
    Let $G$ be a connected graph represented by the $\{\edge\}$-structure $\mathbb{G}$, $(V(G), \joinSet)$ the bipartitive set system consisting of all bi-joins
    of $G$ represented by the $\{\BIPARTITION\}$-structure $\mathbb{B}$, $(H,F_1,F_2)$ its Skeleton graph represented by the $\{\tedge,\cedge,\redge\}$-structure $\mathbb{H}$ and $T$ represented by the $\{\tedge,\DEGENERATE\}$-structure, the bipartitive tree of $(V(G),\joinSet)$.

    Our first step is to transduce $\mathbb{B}$ from $\mathbb{G}$. For this it is sufficient to observe that we can easily define an \MSO-formula $\phi_{\BIPARTITION}(X)$ with one monadic free-variable $X$ which defines bi-joins $\{X,U\setminus X\}$ of $G$.

    Since $\mathbb{B}$ is bipartitive, we can apply \autoref{lem:transduce laminar tree induced by weakly-bipartitive family} to transduce $\mathbb{G}\sqcup
    \mathbb{B}\sqcup \mathbb{T}$ from $\mathbb{G}\sqcup \mathbb{B}$, where $\mathbb{T}$ is the structure representing the bipartitive tree of $G$. In the
    following we describe how to transduce $\mathbb{H}$ from $\mathbb{G}\sqcup \mathbb{B}\sqcup \mathbb{T}$.
      \begin{itemize}
        \item The first transduction chooses non-deterministically a node $r$ of $T$. We root $T$ at $r$ and defines the laminar set system $\mathbb{S}$ where
          $\SET(X)$ holds if there is a node
          $u$ with $X$ the set of leaves of the subtree of $(T,r)$ rooted at $u$. Let's denote by $L_u$ the set of leaves of the subtree rooted at $u$. 
        \item The second transduction makes four copies of each internal node $u$, adds $\mathsf{r}$-edges between copies $1$ and $2$, and between copies $3$ and $4$. The four copies of node $u$ will represent $\vec{uv}_1$, $\vec{uv}_2$, $\vec{vu_1}$ and $\vec{vu_2}$ for the parent $v$ of $u$.
        \item The next part of the transduction is necessary to assign equivalence classes to the four copies of a vertex. Note that the indexing of the
            equivalence classes above is chosen arbitrarily. In order to choose this assignment deterministically, we make use of the colouring transduction
            providing a representative for every inner node given in \autoref{cor:4 bi-colouring} and \autoref{lem:repr MSO}. Presume $x$ is the representative
            of inner node $u$ of $T$ and $y$ is a representative of the parent $v$ of $u$ chosen in such a way that $x$ and $y$ are in differents part of the
            bi-join represented by the edge $uv$. We can now choose that the first copy of $u$ represents the equivalence class containing $x$, the second copy
            represents the second equivalence class not containing $x$ and similarly for the part containing $y$. To implement this formally, we proceed by
            choosing the representatives. Recall, by \autoref{cor:4 bi-colouring} and \autoref{lem:repr MSO} there are four bi-colourings $(A_i,B_i)_{i\in [4]}$ of the elements of $\mathbb{G}$ identifying $V(T)$ and for each $i\in [4]$, there are
          \CMSO[2]-formulas $\repr_{A_i}(a,X)$ and $\repr_{B_i}(b,X)$ that are satisfied exactly when $X$ is, respectively, $A_i$-represented by $a$ and  $B_i$-represented
          by $b$. The third transduction just guesses non-deterministically the four bi-colouring $(A_i,B_i)_{i\in [4]}$ that identifies nodes of
          $(T,r)$ as a laminar tree of the laminar set system $\mathbb{B}$. Notice we can filter by accepting only the guesses with unique request and such that, for each node, there is a exactly one $i\in [4]$
          such that it is uniquely $A_i$-represented and uniquely $B_i$-represented.
           
          \item It remains now to create the $G_u$ graphs, for each internal node $u$. Assume $v$ is the parent of $u$, and let $a_v^u$ be the leaf such that either
            $\repr_{A_i}(a_v^u,L_v)$ or $\repr_{B_i}(a_v^u,L_v)$, and additionally $a_v^u\notin L_u$, and let $a_u^v$ be a leaf such that $\repr_{A_j}(a_u^v,L_u)$. Notice that $a_v^u$ exists because $v$ is the least common ancestor of the two leaves that $A_i$ and $B_i$-represent
            $v$, and if both do not belong to $L_u$, we deterministically define $a_v^u$ as the one that $A_i$-represents $v$. Copy $1$ of $u$ will play the role of the equivalence class of $\equiv_{L(T_v^u)}$ containing $a_v^u$. The copy $2$ will play the role of the
            equivalence class of $\equiv_{L(T_v^u)}$ not containing $a_v^u$. The copy $3$ will play the role of the equivalence class of $\equiv_{T_u^v}$ containing
            $a_u^v$, and the copy $4$ will play the role of the equivalence class of $\equiv_{L(T_{u}^v)}$ not containing $a_u^v$.
            The vertex set of $G_u$ will be copies $1$ and $2$ of $u$, and the copies $3$ and $4$ of each child of $u$ (see \autoref{fig:skeleton-graph} for an example). 
            For a child $w$ of $u$, we want to have an edge between the first copy of $u$ and the third copy of $w$ if there is an edge between $a_v^u$
              and $a_{w}^u$ in $\mathbb{G}$. Similarly, we want to have an edge between the second copy of $u$ and the fourth copy of $w$ if the second
              equivalence classes of $\equiv_{L(T_{w}^u)}$ and $\equiv_{L(T_v^u)}$ are non-empty and there are edges between those two equivalence classes. 
            Hence, using the representative predicate from the previous transduction, we can write a \CMSO[2]-formula checking that the second equivalence class of $\equiv_{L(T_w^u)}$, $w$ a child of $u$, (resp. of $\equiv_{L(T_v^u)}$) is
            non-empty, and then whether there are edges between two equivalence classes. We can therefore write a \CMSO[2]-formula stating that two copies of
            vertices are related by a $\mathsf{c}$-edge and which copies should be included in the vertex set of $G_u$. 
          \item The fourth transduction adds the $\mathsf{t}$-edges. We add a $\mathsf{t}$-edge between a copy $i\in \{1,2\}$ of $u$ and a copy $j\in \{3,4\}$
            of $u$, if a vertex of the equivalence class the $i$th copy represents is adjacent to a vertex of the equivalence class the $j$th copy
            represents. Such edges can be \CMSO[2]-defined in the same way the $\mathsf{c}$-edges are defined, except they relate only particular copies of a
            same node.
          \end{itemize}
The composition of the four transductions above computes the Skeleton graph of $G$. Since the non-determinism comes only from the construction of
      $\mathbb{T}$ and the guessing of the bi-colouring, and we can filter after each such step, we guarantee that an output always satisfies the desired outcome.
\end{proof}

\section{Recognizability equals definability  in restricted classes of graphs of small rank-width}\label{sec:restricstedRankWidth}
 If $G$ is a graph, we denote by $A_G$,  the \emph{adjacency matrix of $G$}, a matrix over the binary field whose rows and columns are
  indexed by $V(G)$ and such that $A_G[x,y] =1$ only when $xy$ is an edge. For a graph $G$, we denote by $\rho_G:2^{V(G)}\to \mathbb{N}$, the \emph{cut-rank
    function of $G$} where $\rho_G(X)$, for $X\subseteq V(G)$, is the rank, over the binary field, of the submatrix of $A_G$ whose rows are indexed by $X$ and
  its columns by $V(G)\setminus X$. For a rooted tree $T$ and a node $u$ of $T$, let's denote by $N_T(u)$ the set of children of $u$.

A \emph{rank-decomposition} of a graph $G$ is a pair $(T,\delta)$ where $T$ is
  a rooted tree and $\delta:V(G)\to L(T)$ is a bijection from the vertex set of $G$ to the leaves of $T$. The \emph{cut-rank of $(T,\delta)$ over $G$} is defined as
  $$ \max\limits_{u\in V(T), X\subseteq N_T(u)} \left\{\rho_G\left(\bigcup_{v\in X}\delta^{-1}(L(T_v^u))\right)\right\}.$$
  The \emph{rank-width of a graph $G$} is the maximum cut-rank over all rank-decompositions of $G$. It is known that a graph class $\mathcal{C}$ is included in
  the image of a \CMSO-transduction taking as inputs labelled trees if and only if the rank-width of graphs in $\mathcal{C}$ is bounded by a constant, see for
  instance \cite{CE09,CampbellGKKO25}. So, a graph class has bounded rank-width if and only if it has bounded clqiue-width. It is also proved in \cite{CampbellGKKO25} that, for every $k$, there are two \CMSO-transductions $\tau_k$ and
  $\mathsf{val}_k$ such that $\tau_k$ takes as input $\mathbb{G}\sqcup \mathbb{T}$, where $\mathbb{T}$ is a structure describing a rank-decomposition of $G$ of
  cut-rank at most $k$ and whose leaves are copies of vertices of $G$, and outputs a finitely labelled tree $t$ with $G$ included in
  $\mathsf{val}_k(t)$. Combining this with the following theorem implies that \CMSO-definibility equals recognizability for any graph class whose prime graphs,
    \emph{w.r.t.} split or bi-join decomposition, belong to graph classes where recognizability equals \CMSO-definability. This allows to extend the results in
    \cite{BojanczykGP21,CampbellGKKO25} as many of them do not have bounded linear clique-width, for instance distance-hereditary graphs \cite{KanteK18}. Other
    examples include (gem,co-gem)-free graphs \cite{raoThesis}, those whose prime graphs \emph{w.r.t.} split-decomposition have small size \cite{DucoffeP21a} or
    are subclasses of cycles or their complements \cite{raoThesis} or the extension of distance-hereditary graphs considered in \cite{CiceroneS21}, graph classes of
    bounded \emph{modular-(tree)width} \cite{HegerfeldK23} such as ($P_5$,diamond)-free graphs or ($P_5$,gem)-free graphs \cite{BrandstadtDLM05}.

\begin{thm}\label{thm:cor-cmso-rank-dec} Let $\mathcal{C}$ be a class of graphs for which there is, for every $k$, a \CMSO-transduction $\varphi_{\mathcal{C},k}$, that takes as input a
  graph $G$ in $\mathcal{C}$ of rank-width at most $k$ and outputs a rank-decomposition of $G$ of width at most $f(k)$, for some computable function $f$. Then, for every positive integer
  $k$, there is a \CMSO-transduction
  $\psi_{\mathcal{C},k}$ that takes as input a graph $G$ of rank-width at most $k$, whose prime graphs with respect to split decomposition (resp. bi-join decomposition) belong to $\mathcal{C}$, and outputs a rank-decomposition of $G$ of width at most
  $g(k)$, for some computable function $g$.
\end{thm}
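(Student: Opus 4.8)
The plan is to feed $G$ through the decomposition transductions already established (\autoref{thm:split} and \autoref{thm:bi-join1}), apply the hypothesised $\varphi_{\mathcal{C},k}$ locally to the prime components, and reassemble the local rank-decompositions along the decomposition tree. Concretely, fix $G\in\mathcal{D}$ of rank-width at most $k$ and treat the split case first. I would run \autoref{thm:split} to obtain the enriched split decomposition $(H,F)$, whose non-singleton components $G_u$ sit at the inner nodes of the split tree $T$ and are either prime with respect to splits, cliques, or stars. By the assumption on $\mathcal{D}$ the prime components lie in $\mathcal{C}$, while clique and star components have rank-width at most $1$ and admit rank-decompositions directly definable in \MSO. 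A standard monotonicity argument (each $G_u$ is obtained from $G$ by collapsing the subtrees hanging off its markers across rank-$1$ cuts, hence is a vertex-minor-like quotient) gives $\operatorname{rw}(G_u)\le\operatorname{rw}(G)\le k$, so $\varphi_{\mathcal{C},k}$ applies to every prime component and returns a rank-decomposition $R_u$ of width at most $f(k)$.

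Next I would apply $\varphi_{\mathcal{C},k}$ to all prime components in a single pass. The components are exactly the connected components of $H$ under $\cedge$, one per inner node of $T$, and membership in a component is \CMSO-definable; since $\varphi_{\mathcal{C},k}$ is a \CMSO-transduction, I can relativise each of its colouring, copying and interpretation steps to the individual components, producing the disjoint union of the trees $R_u$ simultaneously. This is the familiar device of running a transduction on each part of a definable partition, and it is legitimate because $\varphi_{\mathcal{C},k}$ is guaranteed correct on each prime component (a genuine member of $\mathcal{C}$). The leaves of $R_u$ are the markers $\vec{uv}$ of $G_u$, and \autoref{lem:edgesInSplitDecomposition} tells us that adjacency in $G$ is witnessed by alternating $\cedge$/$\tedge$ paths through these markers.

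The gluing then assembles a single rank-decomposition $\widehat R$ of $G$: for each $\tedge$-edge $uv$ of $T$ I connect the leaf $\vec{uv}$ of $R_u$ to the leaf $\vec{vu}$ of $R_v$ and suppress the resulting degree-$2$ nodes. The remaining leaves of $\widehat R$ are precisely the markers pointing at leaves of $T$, i.e.\ the vertices of $G$, yielding the bijection $\delta$. For the width, each edge of $\widehat R$ coming from a $\tedge$-edge realises a strong split, so a cut of cut-rank $1$; each edge internal to some $R_u$ induces a bipartition $(X_u,Y_u)$ of $V(G_u)$, and, lifting each marker to the set of $G$-vertices it represents, the adjacency matrix $A_G[X,Y]$ factors as $P\,A_{G_u}[X_u,Y_u]\,Q$ with $P,Q$ having at most one nonzero entry per row (each vertex of $G$ lies below at most one marker of $G_u$), whence $\rho_G(X)\le\rho_{G_u}(X_u)\le f(k)$. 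As every step is a \CMSO[2]-transduction, this proves the split statement with width $\max(f(k),1)=f(k)$.

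For bi-joins I would run the same scheme starting from the Skeleton graph $(H,F_1,F_2)$ of \autoref{thm:bi-join1} and using \autoref{lem:skeleton-to-graph-and-T}; the essential difference is that each direction $v$ from an inner node $u$ is now represented by two equivalence-class vertices $\vec{uv}_1,\vec{uv}_2$, and each $\tedge$-edge of $T$ realises a strong bi-join, a cut of cut-rank at most $2$. This two-dimensional interface is the main obstacle. If an internal cut of $R_u$ places $\vec{uv}_1$ and $\vec{uv}_2$ on opposite sides, its lift splits the subtree $L(T^u_v)$ and absorbs the internal adjacency of that subtree, which $G_u$ does not control, so the naive gluing breaks. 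My fix is to insist that each $L(T^u_v)$ remains one full side of a tree-edge of $\widehat R$ — so that the only cuts crossing a bi-join are the bi-join cuts themselves, of rank at most $2$ — and to decompose inside each layer only the merged directions, re-expanding the two classes together using the $\redge$ relation when crossing into the $v$-side. The hard part, and the step I expect to demand the most care, is to prove the corresponding reassembly bound: that treating each pair $\{\vec{uv}_1,\vec{uv}_2\}$ as a single atom costs at most $2$ in cut-rank (one bi-join interface per crossed direction), so that $\widehat R$ has width at most $f(k)+2$, and simultaneously that this merging-and-re-expanding gluing stays \CMSO[2]-definable. Granting this reassembly lemma, composing it with \autoref{thm:bi-join1} and the relativised $\varphi_{\mathcal{C},k}$ yields $\psi_{\mathcal{D},k}$ and completes the bi-join case.
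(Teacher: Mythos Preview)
Your approach is the paper's: transduce the enriched split (resp.\ bi-join) decomposition via \autoref{thm:split} (resp.\ \autoref{thm:bi-join1}), run $\varphi_{\mathcal{C},k}$ in parallel on the components, and fuse the resulting local rank-decompositions along the $\mathsf{t}$-edges. The one step to make precise is that your ``familiar device of running a transduction on each part of a definable partition'' is exactly the Parallel Application Lemma of~\cite{BojanczykGP21}, which the paper invokes by name (it is not a mere relativisation, since the non-deterministic colouring must be allowed to differ across components); for the bi-join width bound both you and the paper are equally sketchy---the paper simply asserts the argument ``works very similar''.
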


The transduction is the composition of three transductions: the first produces the enriched split-decomposition of the input thanks to
  \autoref{thm:split} or \autoref{thm:bi-join1}, the
  second computes a rank-decomposition for each component of the $\mathsf{c}$-edges, and the last joins all these rank-decompositions using the
  $\mathsf{t}$-edges. However, the number of components induced by the $\mathsf{c}$-edges is not bounded, and so we need to compute all their
  rank-decompositions in parallel, which is possible
  thanks to the Parallel Application Lemma that we recall now.

Let~$\Sigma$ be a vocabulary and~$\mathbb{A}_1,\ldots,\mathbb{A}_n$ be disjoint $\Sigma$-structures.
Define the \emph{disjoint union} of structures~$\mathbb{A}_1,\ldots,\mathbb{A}_n$, denoted by~\emph{$\bigsqcup_{0<i\leq n}\mathbb{A}_i$},
as the following structure over the vocabulary~$\Sigma\cup\{\thicksim\}$
where~$\thicksim$ is a new binary relation name:
\begin{itemize}
  \item the universe of~$\bigsqcup_{0<i\leq n}\mathbb{A}_i$
    is the union of the universes of the $\mathbb{A}_i$'s
    (which are required to be disjoint);
  \item for each relation or predicate name from~$\Sigma$,
    its interpretation in~$\bigsqcup_{0<i\leq n}\mathbb{A}_i$
    is the union of its interpretations in each of the~$\mathbb{A}_i$'s;
  \item the interpretation of~$\thicksim$ in~$\bigsqcup_{0<i\leq n}\mathbb{A}_i$
    is the set of pairs of elements that originate from the same~$\mathbb{A}_i$.
\end{itemize}

\begin{lemC}[(Parallel Application Lemma \cite{BojanczykGP21})]
  \label{lem:parallell-transduction}
  Let~$\tau$ be a $\Sigma$-to-$\Gamma$ \CMSO-transduction. Then there is a 
  $(\Sigma\cup\{\thicksim\})$-to-$(\Gamma\cup\{\thicksim\})$ \CMSO-transduction~$\hat{\tau}$ such that, for every
  sequence~$\mathbb{I}_1,\ldots,\mathbb{I}_n$ of $\Sigma$-structures and every sequence~$\mathbb{O}_1,\ldots,\mathbb{O}_n$ of $\Gamma$-structures, we have
  $(\bigsqcup_{0<i\leq n}\mathbb{I}_i,\bigsqcup_{0<i\leq n}\mathbb{O}_i)\in\hat{\tau}$ \ifof there exists a permutation~$\pi$ of~$[n]$
  such that $(\mathbb{I}_i,\mathbb{O}_{\pi(i)})\in\tau$ for all~$i\in[n]$.
\end{lemC}

We are now ready to give the proof of \autoref{thm:cor-cmso-rank-dec}.

\begin{proof}[Proof of \autoref{thm:cor-cmso-rank-dec}] Let's first consider the transduction for split-decomposition. Let $G$ be a connected graph represented by the $\{\edge\}$-structure $\mathbb{G}$. % Assume first that we are interested in prime graphs
    % with respect to split decomposition. 
    The transduction is the composition of the following.
    \begin{enumerate}
    \item The first transduction takes as input $\mathbb{G}$ and outputs, thanks to \autoref{thm:split}, a $\{\cedge,\tedge\}$-structure $\mathbb{H}$ which is the canonical split decomposition
      of $G$. 
    \item The second transduction takes as input $\mathbb{H}$ and outputs a $\{\cedge,\tedge,\thicksim\}$-structure $\mathbb{H}'$ which is $\mathbb{H}$
        equipped with the equivalence relation $\thicksim$ where $x\thicksim y$ only if they belong to the same connected component after removing
        $\mathsf{t}$-edges. Recall that the set of $\mathsf{t}$-edges is the same as the set of edges of the laminar tree reprensenting the set of strong splits
        of $G$. Moreover, each obtained component is either a star or a complete graph or a prime graph. 
      \item Because any subgraph of $\mathbb{H}$ induced by an equivalence class of $\thicksim$ that is not prime is either a star or a complete graph, one can trivially extend
        $\varphi_{\mathcal{C},k}$ to $\varphi'$ that outputs a rank-decomposition for every subgraph of $\mathbb{H}$ induced by an equivalence class of $\thicksim$,
        and of cut-rank $1$ if not prime, otherwise of cut-rank at most $f(k)$. 
      \item By the Parallel Application Lemma (\autoref{lem:parallell-transduction}), there is a \CMSO-transduction $\hat{\varphi'}$ that takes as input $\mathbb{H}'$ and outputs a
        rank-decomposition of the subgraphs of $\mathbb{H}'$ induced by the equivalence classes of $\thicksim$.
      \item The last transduction takes all rank-decompositions outputted by $\hat{\varphi'}$ and fuses any two leaves that are related by a $\mathsf{t}$-edge in
        $\mathbb{H}$. The root of the rank-decomposition of one component can be chosen arbitrarily as a root.
        \end{enumerate}
        It is routine to check that the composition, in the same order, of the above transductions, output a rank-decomposition of $G$ as the $\mathsf{t}$-edges form a tree and
        because each of them yields a correct output when given a correct input. Because the rank-decompositions outputted by $\hat{\varphi'}$ have cut-rank at
        most $f(k)$ on prime subgraphs and of cut-rank $1$ otherwise, it is easy to check that the cut-rank of the outputted rank-decomposition of $G$ has
        cut-rank at most $f(k)$.

        If $G$ is not a connected graph, then we apply the above \CMSO-transduction on each connected component thanks again to the
        Parallel Application Lemma (\autoref{lem:parallell-transduction}), and then a second deterministic transduction adds a new node whose neighbour in the rank-decomposition of each connected
        component is its root.

        The proof for bi-join decompositions works very similar, except we call \autoref{thm:bi-join1} to compute the Skeleton graph instead of an enriched
        split-decomposition. It is proved in \cite{raoThesis} that the rank-decomposition constructed as above has cut-rank bounded by the rank-width of each
        component. This finishes the proof. 
  \end{proof}

\section{Conclusion}
We provide transductions for obtaining tree-like graph decompositions such as modular decompositions, cotrees, split decompositions and bi-join decompositions
from a graph using \CMSO. This improves upon results of Courcelle \cite{Courcelle96,Courcelle99,Courcelle06} who gave such transductions for ordered graphs. It is worth
mentioning that \autoref{thm:transduce weakly-bipartitive tree} can be also used to \CMSO-transduce canonical decompositions of other structures such as Tutte's
decomposition of matroids or generally 
split-decompositions of submodular functions \cite{cunningham1982decomposition} or modular decompositions of $2$-structures \cite{ehrenfeuchtHR1999} or of hypergraphs
\cite{HabibMMZ22}. As shown by the application given in \cite{Courcelle06} for transducing Whitney's isomorphism class of a graph, a line of research is to further
investigate which structures can be \CMSO-transduced from a graph or a set system by using the transductions from \autoref{thm:summary}. Also, naturally, the
question arises whether counting is necessary or whether \MSO is sufficient to transduce such decompositions.

Furthermore, it is known that if a family of set systems $\mathfrak{S}$ is obtained from a family $\mathfrak{F}$ of $\Sigma$-labelled trees, for some
  finite alphabet $\Sigma$, by a \CMSO-transduction $\tau$, then for any \CMSO formula $\varphi$, the set
  $\{T\in \mathfrak\mid \tau(\mathbb{T})\models \varphi\}$ is a recognizable subfamily of $\mathfrak{F}$ \cite{CE09,FunkMN22}.  However, the converse, \ie,
  whether recognizable families of $\mathfrak{S}$ are \CMSO-definable is open, and
  and was originally conjectured in \cite{TMSOLOG1} for graph classes of small tree-width. Even though Courcelle's conjecture is now settled \cite{BojanczykP16},
  the case of graphs that are images of trees by \CMSO-transductions, equivalently graph classes of bounded clique-width \cite{CE09}, is still open. 
  In \autoref{sec:restricstedRankWidth} we provide a corollary of our \CMSO[2]-transductions which states that for fixed positive $k$, the existence of a \CMSO[2]-transduction for computing
rank-decompositions of cut-rank at most $f(k)$, for some function $f$, can be reduced to the existence of a \CMSO[2]-transduction for computing rank-decompositions of cut-rank
at most $f(k)$
on prime graphs with respect to splits or bi-joins (see \autoref{thm:cor-cmso-rank-dec}). This corollary combined with some known special cases, imply that we
can push further the known graph classes of bounded clique-width where
  recognizability equals \CMSO-definability.

\bibliographystyle{alphaurl}
\bibliography{bib.bib}

\end{document}